\newtheorem{theorem}{Theorem}[section]
\newtheorem{proposition}[theorem]{Proposition}
\newtheorem{lemma}[theorem]{Lemma}
\newtheorem{corollary}[theorem]{Corollary}
\newcommand\rd{{\mathrm{d}}} 
 \newcommand\rs{{\mathrm{s}}}
\newcommand{\beq}{\begin{equation}}  
\newcommand{\eeq}{\end{equation}}
\theoremstyle{definition}
\newtheorem{remark}[theorem]{Remark}
\begin{document}

%\begin{titlepage}

\title{Discrete Hirota %-Miwa 
reductions associated with the lattice KdV equation}
\author{Andrew N. W. Hone and Theodoros E. Kouloukas~\\ 
School of Mathematics, Statistics \&  Actuarial Science~\\ 
University of Kent~\\
Canterbury CT2 7FS, U.K. 
}

\maketitle

\begin{abstract}
We study the integrability of a family of birational maps obtained as reductions of the 
discrete Hirota %-Miwa 
equation, which are related  to travelling wave 
%periodic reduc
solutions of the lattice KdV equation. In particular, for reductions 
corresponding to waves moving with rational speed $N/M$ on the lattice, 
where $N,M$ are  coprime integers, we 
prove the Liouville integrability of the maps when $N+M$ is odd, and prove various properties of the general 
case. There are two main ingredients to our construction:  the cluster algebra associated with each of the 
Hirota bilinear equations, which provides invariant  (pre)symplectic  and Poisson structures; and the connection 
of the monodromy matrices of the dressing chain with those of the KdV travelling wave reductions.       
\end{abstract}
%\end{titlepage}
%\newpage

\section{Introduction}
The discrete Hirota %-Miwa
equation \cite{zabrodin} 
(which is also known as the bilinear discrete Kadomtsev--Petviashvili equation, 
the Hirota-Miwa equation  \cite{Hirota,Miwa}, or the octahedron recurrence \cite{fz})   
is an integrable bilinear partial difference equation for 
a function $T=T(n_1,n_2,n_3)$ of three independent variables, namely 
$$ T_{n_1+1} T_{n_1-1} =T_{n_2+1} T_{n_2-1} + T_{n_3+1} T_{n_3-1}, 
%T(n_1+1,n_2,n_3)T(n_1-1,n_2,n_3)= 
%T(n_1,n_2+1,n_3)T(n_1,n_2-1,n_3)+ T(n_1,n_2,n_3+1)T(n_1,n_2,n_3-1) .
$$ 
where for brevity we take $T_{n_1\pm 1}=T(n_1\pm 1,n_2,n_3)$, 
and similarly for shifts in the $n_2$ and $n_3$ directions. 
The integrable characterization of this equation is justified by its multidimensional consistency property and the existence of a Lax representation. 
Particular reductions of the discrete Hirota equation give rise to well-known integrable partial difference equations in two independent variables, as well as integrable ordinary difference equations.  
Plane wave reductions, given by  
$$ 
T (n_1,n_2,n_3) = \exp (C_1 n_1^2+C_2 n_2^2+C_3 n_3^2) \, \tau_m, \qquad 
m=n_0+\delta_1n_1+\delta_2n_2+\delta_3n_3   
$$ 
for  $\delta_i$ 
integers or half-integers, 
result in bilinear ordinary difference equations  
 of the form
\begin{equation} \label{d1d2d3}
\tau_{m+\delta_1}\tau_{m-\delta_1}=a\,  \tau_{m+\delta_2}\tau_{m-\delta_2}+ 
b\, \tau_{m+\delta_3}\tau_{m-\delta_3},
\end{equation}
with suitable constants $a,b$, 
which are recurrence relations of Gale--Robinson/Somos
type \cite{jmz, somos}. 
These kinds of recurrences inherit a Lax representation from the Lax representation of the discrete Hirota equation \cite{HKW}. 
Their non-autonomous versions are associated with $q$-Painlev\'{e} equations and their higher order analogues \cite{honeinoue, okubo, okubos}, 
and they appear in the context of supersymmetric gauge theories and 
dimer models \cite{bershtein, bershtein2, eager, gk}. 
Furthermore, they are particular examples of cluster maps, which arise from cluster mutations
of periodic quivers \cite{FM} and, as a consequence they exhibit the Laurent phenomenon, i.e.\  all iterates are Laurent polynomials in the initial data with integer coefficients \cite{fz, mase}. 
Cluster maps %of this kind 
admit an invariant presymplectic form, and can be reduced to 
lower-dimensional symplectic maps \cite{FH}, which (following \cite{honeinoue})   we refer to as {\it{$U$-systems}}. 

\begin{figure} \centering 
\includegraphics[width=10cm,height=10cm,keepaspectratio]{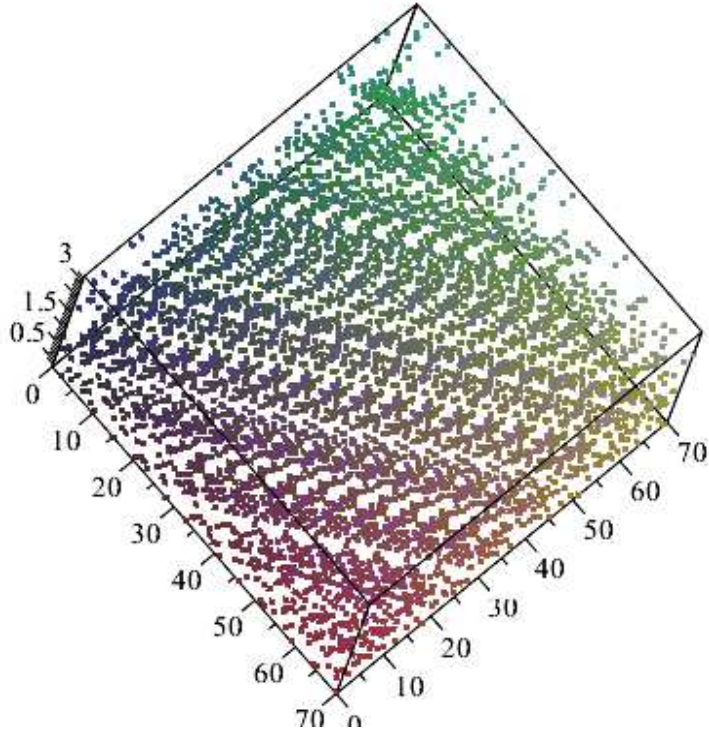}
\caption{\small{3D plot of $V_{k,l}$ against $k,l$ for a  travelling wave solution  of the discrete KdV equation (\ref{KdV}) 
given by $V_{k,l}=v_{4k-3l+1000}$ where $v_m$ satisfies (\ref{RedkdV}) with $N=4$, $M=3$, $\alpha=-1$ and 
initial data $v_j=1$ for $0\leq j\leq 5$, $v_6=3$.
  }}
\label{f1}
\end{figure}

In recent work \cite{HKQ,HKW}, two families of discrete Hirota reductions \eqref{d1d2d3} associated with 
two-dimensional lattice equations of discrete KdV/discrete Toda type %equation 
have been studied. Both of them admit Lax representations 
which generate %giving rise to 
first integrals. In this paper, we focus on the 
discrete KdV family, that is  plane wave reductions of  the 
discrete Hirota equation that take the form   
\begin{equation} \label{genform}
\tau_{m+2M+N}\tau_m =a \,\tau_{m+2 M}\tau_{m+N} +  b \,\tau_{m+M+N}  \tau_{m+M} , \ \ M,N \in \mathbb{N},   
\end{equation}
where $a,b$ are constant parameters. 
These are examples of generalized T-systems \cite{kuniba}, and they are  
related  to travelling wave (periodic) reductions of 
Hirota's lattice KdV equation, namely 
\begin{equation} \label{KdV}
V_{k+1,l}-V_{k,l+1}=\alpha\left(\frac{1}{V_{k,l}}-\frac{1}{V_{k+1,l+1}}\right).
\end{equation}
Our aim is to provide the details of the Liouville integrability of the  
travelling wave reductions of (\ref{KdV}), and then use this to 
infer the integrability of the 
associated 
U-systems underlying these recurrences with respect to their corresponding symplectic structures.   
Thus we extend the 
results of \cite{HKQT}, where the particular family of $(N,1)$ travelling waves 
was considered, and give complete proofs of various assertions made 
concerning the case of general  reductions of type $(N,M)$
in \cite{HKW}.

\section{T-systems and U-systems}

There are two different classes of symplectic maps (U-systems) corresponding to \eqref{genform}, depending 
on which of the two integers $M,N$ is greater. Therefore, 
henceforth we will always make the assumption that 
$$N > M$$
and  separate  \eqref{genform} 
into the two different cases 
\begin{eqnarray}
\tau_{m+2N+M}\tau_m &=& a \,\tau_{m+2 N}\tau_{m+M} +b\, \tau_{m+N+M}  \tau_{m+N}  , \  \label{t1} \\
\tau_{m+2M+N}\tau_{m}&=& a \,\tau_{m+2 M}\tau_{m+N}+ b \,\tau_{m+N+M}  \tau_{m+M}. \  \label{t2}
\end{eqnarray}  
Furthermore, we shall also assume the coprimality condition 
$$\gcd (M,N)=1,$$ 
since otherwise the T-systems (\ref{t1}) and (\ref{t2}) 
%and their corresponding U-systems, 
can be decoupled  into copies of systems in lower dimension that do satisfy this condition.

The corresponding U-systems are described by the following proposition. 

\begin{proposition} \label{redsymp}
For $N+M$  odd,  
\item $\mathbf{(1)}$ $\tau_m$ satisfies (\ref{t1}) iff $u_m=\frac{\tau_{m} \tau_{m+N+1}}{\tau_{m+1} \tau_{m+N}}$ satisfies   the U-system 
\begin{equation} \label{1o}
u_{m}u_{m+1} \ldots u_{m+N+M-1}=b + a u_{m+M} u_{m+M+1} \ldots u_{m+N-1},
\end{equation}    
\item $\mathbf{(2)}$ $\tau_m$ satisfies (\ref{t2}) iff  $u_m=\frac{\tau_{m} \tau_{m+M+1}}{\tau_{m+1} \tau_{m+M}}$ satisfies   the U-system 
\begin{equation} \label{2o}
u_{m}u_{m+1} \ldots u_{m+N+M-1} 
=\frac{b u_{m+M} u_{m+M+1} \ldots u_{m+N-1}+a}{u_{m+M} u_{m+M+1} \ldots u_{m+N-1}}. 
\end{equation} 
For $N+M$ even, 
\item $\mathbf{(3)}$ $\tau_m$ satisfies (\ref{t1}) iff 
 $u_m=\frac{\tau_{m} \tau_{m+N+2}}{\tau_{m+2} \tau_{m+N}}$ satisfies 
  the U-system
\begin{equation} \label{1e}
u_m u_{m+2} \ldots u_{m+N+M-2}=b+ a u_{m+M} u_{m+M+2} \ldots u_{m+N-2}, 
\end{equation}    
\item  $\mathbf{(4)}$ $\tau_m$ satisfies  (\ref{t2}) iff  $u_m=\frac{\tau_{m} \tau_{m+M+2}}{\tau_{m+2} \tau_{m+M}}$ satisfies  
 the U-system 
 \begin{equation} \label{2e}
u_m u_{m+2} \ldots u_{m+N+M-2} 
=\frac{b u_{m+M} u_{m+M+2} \ldots u_{m+N-2} + a}
{u_{m+M} u_{m+M+2} \ldots u_{m+N-2}}.  
\end{equation} 
\end{proposition}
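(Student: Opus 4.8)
The plan is to treat all four cases uniformly by the same substitution-and-telescoping strategy, since the even cases are just the odd cases with the shift $1$ replaced by the shift $2$ (reflecting the decoupling that occurs when $N+M$ is even, where even- and odd-indexed $\tau_m$ effectively live on separate sublattices). First I would verify the ``only if'' direction in case $\mathbf{(1)}$. Define $u_m=\tau_m\tau_{m+N+1}/(\tau_{m+1}\tau_{m+N})$ and observe that the ordered product $u_m u_{m+1}\cdots u_{m+k-1}$ telescopes: most factors cancel in pairs, leaving a four-term ratio of the form $\tau_{m}\,\tau_{m+N+k}\,/\,(\tau_{m+k}\,\tau_{m+N})$ times a correction when $k$ exceeds $N$. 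The key is to compute the two products appearing in \eqref{1o}, namely $P:=u_m\cdots u_{m+N+M-1}$ and $Q:=u_{m+M}\cdots u_{m+N-1}$, in closed form as ratios of $\tau$'s, and then check that $P=b+aQ$ is precisely \eqref{t1} after clearing denominators. Concretely, I expect $Q=\tau_{m+M}\tau_{m+2N}/(\tau_{m+N}\tau_{m+N+M})$ (a product of $N-M$ consecutive $u$'s spanning a gap of $N$) and $P$ to reduce, using the range $N+M$ which brings in wrap-around past the ``period'' $N+1$ in the numerator/denominator pattern, to $\tau_{m+2N+M}\tau_m/(\tau_{m+N}\tau_{m+N+M})$. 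With these two identities, \eqref{1o} multiplied through by $\tau_{m+N}\tau_{m+N+M}$ becomes exactly \eqref{t1}.

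For the ``if'' direction I would argue that the map $(\tau_m,\dots)\mapsto(u_m,\dots)$ has the right rank: the $u_m$ are invariant under the scaling $\tau_m\mapsto \lambda^{?}\mu^{?}\nu^{m}\tau_m$ (the gauge freedom $\tau_m\mapsto A B^m C^{m^2}$-type transformations that fix the bilinear equation), so conversely any solution $(u_m)$ of the U-system lifts to a solution $(\tau_m)$ of the T-system, unique up to that gauge. The cleanest way to phrase this is: the substitution intertwines the T-system recurrence with the U-system recurrence as birational maps, and it is a bijection onto its image modulo the gauge group, so solutions correspond. Cases $\mathbf{(2)}$, $\mathbf{(3)}$, $\mathbf{(4)}$ follow the same template; for $\mathbf{(2)}$ and $\mathbf{(4)}$ the extra factor $1/(u_{m+M}\cdots u_{m+N-1})$ in the U-system is exactly what is needed because in \eqref{t2} the ``gap'' structure of the three bilinear terms is different ($2M+N$ versus $2N+M$), so the telescoped product $Q$ appears with the opposite orientation and must be inverted.

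The main obstacle will be bookkeeping the telescoping correctly when the length of the product exceeds the ``quasi-period'' $N+1$ (or $N+2$ in the even case): the naive four-term cancellation $u_m\cdots u_{m+k-1}=\tau_m\tau_{m+N+k}/(\tau_{m+k}\tau_{m+N})$ is valid only while the two index-windows $\{m,\dots,m+k-1\}$ and $\{m+N,\dots,m+N+k-1\}$ are disjoint, i.e. for $k\le N$; once $k=N+M-1\ge N$ (which holds since $M\ge 1$, with equality only when $M=1$) there is overlap and additional factors survive. I would handle this by splitting the long product $P$ as $(u_m\cdots u_{m+N-1})\cdot(u_{m+N}\cdots u_{m+N+M-1})$, telescoping each block separately, and then multiplying; keeping the coprimality $\gcd(M,N)=1$ in mind is not strictly needed for this algebraic identity but reassures us the $u$-variables are genuinely $N+M$ in number with no hidden periodicity. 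A secondary, more conceptual point to state carefully is the precise gauge group and why the quotient map is birational, since that is what makes ``iff'' rather than merely ``if'' true; this is where I would invoke the general cluster-map reduction picture of \cite{FH} already cited above.
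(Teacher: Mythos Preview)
Your approach is essentially the paper's: direct substitution and telescoping. The paper's proof is a single sentence to the effect that substituting the formula for $u_m$ into each U-system equation yields the corresponding T-system equation. Your computed values $P=\tau_m\tau_{m+2N+M}/(\tau_{m+N}\tau_{m+N+M})$ and $Q=\tau_{m+M}\tau_{m+2N}/(\tau_{m+N}\tau_{m+N+M})$ are correct, and clearing the common denominator $\tau_{m+N}\tau_{m+N+M}$ turns $P=b+aQ$ into \eqref{t1} verbatim.

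Two points where you overcomplicate. First, the telescoping identity
\[
u_m u_{m+1}\cdots u_{m+k-1}
=\frac{\tau_m}{\tau_{m+k}}\cdot\frac{\tau_{m+N+k}}{\tau_{m+N}}
\]
holds for \emph{all} $k\ge 1$, because the two chains $\prod \tau_j/\tau_{j+1}$ and $\prod \tau_{j+N+1}/\tau_{j+N}$ telescope independently; whether the index windows overlap is irrelevant, since you are multiplying ratios, not listing distinct factors. So there is no obstacle for $k=N+M$, and no need to split $P$ into blocks. Second, the ``if'' direction does not require any gauge or rank argument: the proposition asserts that, \emph{given} a sequence $\tau_m$ and the associated $u_m$, the T-system holds iff the U-system holds. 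Since the substitution shows the U-system equation is literally the T-system equation divided by the nonzero quantity $\tau_{m+N}\tau_{m+N+M}$, both implications are immediate. The surjectivity/gauge discussion would be relevant if one wanted to show every abstract solution of the U-system arises from some $\tau$, but that is not what is claimed here.
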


\begin{proof}
The results in $\mathbf{(1)}$-$\mathbf{(4)}$  
above follow directly by substituting 
%$u_k=\frac{\tau_{k} \tau_{k+N+1}}{\tau_{k+1} \tau_{k+N}}$, 
%$u_k=\frac{\tau_{k} \tau_{k+M+1}}{\tau_{k+1} \tau_{k+M}}$,  $u_k=\frac{\tau_{k} 
%\tau_{k+N+2}}{\tau_{k+2} \tau_{k+N}}$ 
%and  $u_k=\frac{\tau_{k} \tau_{k+M+2}}{\tau_{k+2} \tau_{k+M}}$  
the appropriate formula for $u_m$  into \eqref{1o}, \eqref{2o}, \eqref{1e} and \eqref{2e} respectively.  
\end{proof}

Each of the T-systems defines a birational map (cluster map)  in 
dimension $D_j$, 
$$ 
\varphi_j: \quad  \mathbb{C}^{D_j} \rightarrow \mathbb{C}^{D_j},  
\qquad j=1,2,3,4,$$
with dimensions $D_1=D_3=2N+M$,  
$D_2=D_4=2M+N$, 
and each of the reductions $\mathbf{(1)}$-$\mathbf{(4)}$   
described in Proposition  
\ref{redsymp} defines %four 
a rational 
map  that projects to a lower dimensional space, 
of even dimension $2d_j$, 
that is   
$$ 
\pi_j: \quad \mathbb{C}^{D_j}\rightarrow \mathbb{C}^{2d_j}, 
\qquad j=1,2,3,4, 
$$
where after projection the dimensions are  
$$  
2d_1=2d_2 = M+N-1,  \quad 2d_3=2d_4 = M+N-2.
$$
%$\pi_1^o:\mathbb{C}^{2N+M}\rightarrow \mathbb{C}^{M+N-1}$,  $\pi_2^o:\mathbb{C}^{2M+N}\rightarrow \mathbb{C}^{M+N-1}$,  
%$\pi_1^e:\mathbb{C}^{2N+M}\rightarrow \mathbb{C}^{M+N-2}$, $\pi_2^e:\mathbb{C}^{2M+N}\rightarrow \mathbb{C}^{M+N-2}$ 
The U-system recurrences (\ref{1o})--(\ref{2e}) define four %even dimensional  
birational maps in the corresponding reduced space, which we denote by 
$$
\hat{\varphi}_j: \quad  \mathbb{C}^{2d_j} \rightarrow \mathbb{C}^{2d_j},  
\qquad j=1,2,3,4, $$
with the intertwining condition
$$
\pi_j\circ \varphi_j=\hat{\varphi}_j\circ\pi_j
$$
in each case. 
%$$\phi_i^o:\mathbb$${C}^{M+N-1} \rightarrow \mathbb{C}^{M+N-1}, \  \phi_i^e:\mathbb{C}^{M+N-2} \rightarrow \mathbb{C}^{M+N-2}, \ i=1,2.$$
To see how the U-systems and the symplectic structure for the 
corresponding maps $\hat{\varphi}_j$ arise naturally from (\ref{t1}) or (\ref{t2}), 
it is necessary to consider the cluster algebras associated with the latter.   
%The $\phi_i^o,\phi_i^e$ maps are the symplectic maps that correspond to the ${U}$-systems.

In order to %illustrate 
give a %very 
brief outline of the connection with cluster algebras, 
we will just use 
(\ref{t2}) in dimension $D=2M+N$, since the discussion for (\ref{t1}) is almost identical, and 
follow the approach of \cite{FM}, where 
it was explained in detail how such Somos-type recurrences arise 
from sequences of cluster mutations.  
An initial cluster is defined by the $D$-tuple of variables 
$(\tau_1,\tau_2,\ldots,\tau_{D})$; we can regard the coefficients 
$a,b$ as additional frozen variables, 
which do not mutate. Setting $m=1$ in (\ref{t2}), 
we rewrite an iteration of the T-system in the form of a mutation $\mu_1$, namely   
\begin{equation}\label{mut} 
\mu_1: \quad \tau_{1}'\tau_{1}= a \,\tau_{2 M+1}^{-B_{1,2M+1}}\tau_{N+1}^{-B_{1,N+1}}+ b \,\tau_{M+1}^{B_{1,M+1}}\tau_{N+M+1}^{B_{1,N+M+1}} , 
\end{equation} 
where 
$$ 
B_{1,M+1}=B_{1,N+M+1}=1, \quad B_{1,2M+1}=B_{1,N+1}=-1, 
$$
and  
$$B_{1,k}=0 \qquad\mathrm{for} \quad k\neq M+1,N+1,2M+1,N+M+1 
$$ 
defines the first row of a skew-symmetric integer matrix 
(exchange matrix) $B=(B_{i,k})$ of size $D=2M+N$. 
The entries of the other rows can be found recursively, % from the 
since they are required to  satisfy the formulae 
\beq\label{cond}
B_{i,D}=B_{1,i+1}, 
\qquad 
B_{i+1,k+1}=B_{i,k} + B_{1,i+1}[-B_{1,k+1}]_+ - B_{1,k+1}[-B_{1,i+1}]_+, 
\eeq 
for $i,k \in [1,D-1]$,   %% %\qquad i \in [1,D-1], %$$ %and %$$ 
with % where we used 
the notation $[x]_+=\max(x,0)$ for real $x$.
For example, setting $M=3$, $N=4$ 
gives $D=10$ and 
\beq\label{bex}
B=\left(
\begin{array}{cccccccccc}
0 & 0 & 0 & 1 & -1 & 0  &  -1 & 1 & 0 & 0\\
0 & 0 & 0 & 0 &  1  & -1 &  0 &-1 & 1 & 0\\
0 & 0 & 0 & 0 &  0 &   1 & -1 &  0 &-1& 1\\
-1&0&0&0&1&0&2&-1&0&-1\\
1&-1&0&-1&0&1&0&1&-1&0\\
0&1&-1&0&-1& 0&1&0&1&-1\\
1&0&1&-2&0&-1&0&0&0&1\\
-1&1&0&1&-1&0&0&0&0&0\\
0&-1&1&0&1&-1&0&0&0&0\\
0&0&-1&1&0&1&-1&0&0&0
\end{array}
\right).
\eeq

The mutation (\ref{mut}) generates a new cluster 
$(\tau_1',\tau_2,\ldots,\tau_{D})$, which only differs 
from the initial one 
in the first component. There is a corresponding 
mutated exchange matrix $B'=\mu_1(B)$, where in general 
the action of the $j$th mutation on $B$ produces 
$B'=\mu_j(B)=(B'_{i,k})$ with entries given by 
$$ 
B'_{i,k} = \begin{cases} 
-B_{i,k} \qquad 
\mathrm{if} \,\, i=j \,\, \mathrm{or}\,\,k=j, 
\\  
B_{i,k}+\frac{1}{2}(|B_{i,j}|B_{j,k}+ B_{i,j}|B_{j,k}|) \quad 
\mathrm{otherwise}, 
\end{cases} 
$$ 
and there is an exchange relation analogous to 
(\ref{mut}) %associated formula for 
describing the action of a general mutation 
$\mu_j$ on %the initial 
a cluster, but we omit the details. 

The matrix $B$ defines a quiver (that is, a directed graph) 
without 1-cycles or 2-cycles, and using the indices 
$1,2,\ldots,N$ to label the vertices of the quiver, for each   
$k$ there is an associated quiver mutation at vertex $j$, also 
denoted    $\mu_j$.   
The matrices $B$ being considered here have a particularly special form,
due to the conditions (\ref{cond}), 
which  ensure that 
the action of $\mu_1$ on the exchange matrix corresponds to a cyclic permutation 
of the indices $1,2,\ldots,D$,  and this implies  that, in terms of the cluster variables, applying the sequence of successive mutations 
$\mu_1,\mu_2,\mu_3,\ldots$ etc.\ in order is equivalent to iterating the 
recurrence   (\ref{t2}). In the terminology of \cite{FM}, %this means that 
$B$ is said to be cluster mutation-periodic with period 1. 

It is known that, for any skew-symmetric integer matrix $B$, the 
corresponding log-canonical presymplectic form 
\beq\label{pres}
\omega=\sum_{i<k} \frac{B_{i,k}}{\tau_i\tau_k} \, \rd \tau_i\wedge \rd \tau_k 
\eeq 
transforms covariantly under cluster mutations \cite{gsv}, and for the 
particular case at hand more is true: %this implies that 
this two-form is 
%preserved 
invariant under iteration of the T-system (\ref{t2}), as was proved in \cite{FH} 
for the general case of T-systems (cluster maps) obtained from 
cluster mutation-periodic quivers with period 1. 
The matrix $B$ has even rank $2d$, and by choosing a suitable basis  
${\bf w}_1,\ldots, {\bf w}_{2d}$ for im$\,B$  one can 
construct a projection to 
reduced variables given by Laurent monomials in the initial cluster, that is  
\beq\label{uvars} 
\pi: \quad u_m = {\boldsymbol\tau}^{{\bf w}_m}, \qquad m=1,\ldots,2d
\eeq
(where any integer vector ${\bf a}=(a_j)$ defines a Laurent monomial 
${\boldsymbol\tau}^{{\bf a}}=\prod_j \tau_j^{a_j}$), such that 
the T-system reduces to a symplectic map in terms of the reduced variables. 
Furthermore, in \cite{honeinoue} it was further proved that (up to an overall 
sign)  
there is a unique choice of integer basis for   im$\,B$, 
called a palindromic basis, such that the symplectic map in the reduced 
variables takes the form 
\beq \label{umap} 
\hat{\varphi}: \qquad \Big(u_1,\ldots, u_{2d-1},u_{2d}\Big) 
\mapsto \Big(u_2,\ldots, u_{2d},(u_1)^{-1}{\cal F}\Big), 
\eeq
for a certain rational function 
$
{\cal F} = {\cal F} (u_2,\ldots,u_{2d})
$. 
The birational map (\ref{umap}) defines the U-system associated with the T-system that is specified by the matrix $B$.
It preserves a symplectic form $\hat\omega$ which is log-canonical in the coordinates $(u_i)$, and pulls back to 
the presymplectic form corresponding to $B$, so that 
\beq\label{osymp}
{\hat\omega}=\sum_{i<j}\frac{\hat{B}_{i,j}}{u_iu_j}\rd u_i \wedge\rd u_j, \qquad \hat{\varphi}^*\hat{\omega}=\hat{\omega}, \qquad 
\pi^*\hat{\omega}=\omega,
\eeq 
for a constant skew-symmetric matrix $\hat{B}=(\hat{B}_{i,j})$.

In the particular example (\ref{bex}) above,  $B$ has rank 6, and 
the palindromic basis 
of  im$\,B$, unique up to sign, is given by shifting the entries of %the first vector
$$
{\bf w}_1=(1,-1,0,-1,1,0,0,0,0,0)^T, 
$$ 
so that 
$$
{\bf w}_2=(0, 1,-1,0,-1,1,0,0,0,0)^T,\, 
\ldots,  
{\bf w}_6=(0,0,0,0, 0,1,-1,0,-1,1)^T, 
$$ 
and the map $\pi$ in  (\ref{uvars}) coincides with the formula for $u_m$ 
in part  $\mathbf{(2)}$ of Proposition \ref{redsymp}, i.e. $\pi=\pi_2$ in this case. The U-system corresponds to a 6-dimensional map $\hat{\varphi}=\hat{\varphi}_2$, that is 
$$ 
\hat{\varphi}_2: \qquad \Big(u_1,\ldots, u_{5},u_{6}\Big) 
\mapsto \left(u_2,\ldots, u_{6},\frac{a+bu_4}{u_1u_2u_3(u_4)^2u_5u_6}\right), 
$$ 
which is symplectic with respect to the nondegenerate 2-form 
 $\hat{\omega}=\hat{\omega}_2$ defined by (\ref{osymp}) with 
$$
\hat{B} = 
\left( 
\begin{array}{cccccc}
0 & 0 & 0 & 1 & 0 & 0 \\ 
0 & 0 & 0 & 1 & 1 & 0 \\ 
0 & 0 & 0 & 1 & 1 & 1 \\
-1 & -1 & -1 & 0 & 0 & 0  \\ 
0 & -1 & -1 & 0 & 0 & 0 \\ 
0 & 0 & -1 & 0 & 0 & 0
\end{array}
\right). 
$$

The following result shows that, with appropriate assumptions on $M$ and 
$N$, the properties of the preceding example generalize to all of the U-systems 
in Proposition  \ref{redsymp}.                                                                                                                                                                                                                                                                                                                                                                                                                                                                                                                                                                                                                                                                                                                                                                                                                                                                                                                                                                                                                                                                                                                                                                                                                                                                                                                                                                                                                                                                                                                                                                                                                                                                                                                                                                                                                                                                                                                                                                                                                                                                                                                                                                                                                                                                                                                                                                                                                                                                                                                                                                                                                                                                                                                                                                                                                                                                                                                                                                                                                                                                                                                                                                                                                                                                                                                                                                                                                                                                                                                                                                                                                                                                                                                                                                                                                                                                                                                                                                                                                                                                                                                                                                                                                                                                                                                                                                                                                                                                                                                                                                                                                                                                                                                                                                                                                                                                                                                                                                                                                                                                                                                                                                                                                                                                                                                                                                                                                                                                                                                                                                                                                                                                                                                                                                                                                                                                                                                                                                                                                                                                                                                                                                                                                                                                                                                                                                                                                                                                                                                                                                                                                                                                                                                                                                                                                                                                                                                                                                                                                                                                                                                                                                                                                                                                                                                                                                                                                                                                                                                                                                                                                                                                                                                                                                                                                                                                                                                                                                                                                                                                                                                                                                                                                                                                                                                                                                                                                                                                                                 
\begin{theorem}\label{dimthm} 
For coprime $N>M$, 
 each of the U-systems (\ref{1o}) and (\ref{2o}) 
preserves a log-canonical symplectic form in dimension $M+N-1$ 
when $M+N$ is odd, and each of 
the U-systems (\ref{1e}) and (\ref{2e}) 
preserves a log-canonical symplectic form in dimension $M+N-2$ 
when $M+N$ is even.
\end{theorem}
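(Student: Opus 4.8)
The plan is to read Theorem~\ref{dimthm} off the cluster-algebraic picture set up above. By \cite{FH}, each of the T-systems (\ref{t1}), (\ref{t2}) is a cluster map coming from a cluster mutation-periodic quiver with period $1$, hence it leaves invariant the log-canonical presymplectic form (\ref{pres}) attached to its exchange matrix $B$; by \cite{honeinoue}, the lattice $\mathrm{im}\,B$ admits a palindromic basis, and the induced reduction is a log-canonical symplectic map of the form (\ref{umap})--(\ref{osymp}) on a space of dimension $\mathrm{rank}\,B$. So it suffices to prove two things: (i) for the exchange matrix $B$ of each of the four T-systems, $\mathrm{rank}\,B=M+N-1$ when $M+N$ is odd and $\mathrm{rank}\,B=M+N-2$ when $M+N$ is even; and (ii) the palindromic reduction is, in each case, exactly the corresponding U-system among (\ref{1o})--(\ref{2e}).

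For (ii), and for the easy half of (i), I would write down the palindromic basis explicitly, generalising the example around (\ref{bex}). For (\ref{t2}) with $M+N$ odd it is the family of shifts ${\bf w}_m={\bf e}_m-{\bf e}_{m+1}-{\bf e}_{m+M}+{\bf e}_{m+M+1}$, $m=1,\dots,2d$ (with $2d=M+N-1$ and ${\bf e}_j$ the $j$th standard basis vector); the analogues with spacing $2$, or with $M$ replaced by the relevant index, cover the remaining three cases, and in every case the Laurent monomial ${\boldsymbol\tau}^{{\bf w}_m}$ is precisely the substitution defining $u_m$ in Proposition~\ref{redsymp}. Three points must then be checked. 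First, ${\bf w}_m\in\mathrm{im}\,B$: this follows by exhibiting an explicit preimage, and rests on the divisibility $w(\lambda)\mid r(\lambda)$ of the generating polynomials — for (\ref{t2}) the first row of $B$ corresponds to $r(\lambda)=\lambda^M(1-\lambda^M)(1-\lambda^{N-M})$ and ${\bf w}_1$ to $w(\lambda)=(1-\lambda)(1-\lambda^M)$, with $r/w=\lambda^M(1+\lambda+\dots+\lambda^{N-M-1})$, so that row $1$ of $B$ equals ${\bf w}_{M+1}+\dots+{\bf w}_N$. Second, ${\bf w}_1,\dots,{\bf w}_{2d}$ are linearly independent, which is immediate since the first nonzero entry of ${\bf w}_m$ is the $1$ in position $m$. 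Third, these vectors exactly fill the index range $[1,D]$, because the largest index occurring in ${\bf w}_{2d}$ is $2d+M+1=D$; this is what makes the palindromic property of \cite{honeinoue} hold for this basis and puts the reduced map into the shift form (\ref{umap}) with ${\cal F}$ computed exactly as in the proof of Proposition~\ref{redsymp}, so that — once $\mathrm{rank}\,B$ is pinned down — the reduction is identified with the stated U-system. Linear independence already gives $\mathrm{rank}\,B\geq 2d$.

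It remains to prove the reverse bound $\mathrm{rank}\,B\leq 2d$, equivalently $\dim\ker B\geq D-2d=\deg w$. I would do this by showing that every row of $B$ lies in $\mathrm{span}\{{\bf w}_1,\dots,{\bf w}_{2d}\}$, so that $\mathrm{im}\,B=\mathrm{span}\{{\bf w}_m\}$ and $\mathrm{rank}\,B=2d$: row $1$ has just been written this way, and one propagates the statement to the other rows using the recursion (\ref{cond}), the key structural fact being that all but the four ``exceptional'' rows indexed $M+1,N+1,2M+1,N+M+1$ are plain cyclic shifts of their predecessors. Equivalently: the subspace $K=\{{\bf v}:{\bf v}\cdot{\bf w}_m=0,\ m=1,\dots,2d\}$ — on which the first differences of $v$ at spacing $1$ (resp.\ $2$) are $M$-periodic — has dimension $\deg w$ by a direct parameter count ($M+1$ in the odd case, $M+2$ in the even case, using that $D$ is odd then), and the content of the claim is $\ker B=K$. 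This is precisely the step at which the parity of $M+N$ enters, through $\deg w$ and through whether the reduction has spacing $1$ or $2$.

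The main obstacle is this rank computation. The difficulty is that $B$ is \emph{not} circulant: it is the circulant with symbol $r(\lambda)$ corrected on a bounded set of rows, the corrections being forced by the $[\,\cdot\,]_+$ terms in (\ref{cond}), so the Fourier/Vandermonde diagonalisation available for genuine circulants does not apply off the shelf and one must track what happens across these exceptional rows — parity-sensitive bookkeeping, which is why the cases $M+N$ odd and even yield different dimensions. A sanity check, and an independent route to the corank, is the second ingredient announced in the introduction: the travelling-wave reduction of the lattice KdV equation (\ref{KdV}) lives on $\mathbb{C}^{M+N}$ and, through its monodromy matrix, carries a log-canonical Poisson structure of corank $1$ when $M+N$ is odd and $2$ when $M+N$ is even, so its symplectic leaves have exactly the dimensions claimed in Theorem~\ref{dimthm}.
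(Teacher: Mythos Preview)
Your overall strategy --- reduce to computing $\mathrm{rank}\,B$ via \cite{FH} and \cite{honeinoue}, and identify the palindromic basis with the substitutions of Proposition~\ref{redsymp} --- is exactly the paper's. But there is a genuine gap in your rank computation, and it is precisely the place where the hypothesis $\gcd(N,M)=1$ has to enter, which your argument never invokes.

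Your divisibility $w(\lambda)\mid r(\lambda)$ shows that row~$1$ of $B$ lies in $\mathrm{span}\{{\bf w}_m\}$ (indeed row~$1={\bf w}_{M+1}+\cdots+{\bf w}_N$, as you write); it does \emph{not} show ${\bf w}_m\in\mathrm{im}\,B$. Propagating through (\ref{cond}) then yields $\mathrm{im}\,B\subseteq\mathrm{span}\{{\bf w}_m\}$, i.e.\ $\mathrm{rank}\,B\leq 2d$, which is the easy direction. Your sentence ``linear independence already gives $\mathrm{rank}\,B\geq 2d$'' is only valid once ${\bf w}_m\in\mathrm{im}\,B$ is established, and you have not done so; your alternative formulation $\ker B=K$ suffers the same defect, since from what you have only $K\subseteq\ker B$ follows, and the parameter count for $\dim K$ does not bound $\dim\ker B$ from above. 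Without coprimality the claimed rank is false (the T-system decouples and the rank drops), so any correct argument must use it.

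The paper closes this gap by a different mechanism from anything you outline. Having written every row of $B$ explicitly as a combination of the ${\bf w}_j$ (this is your ``propagation'', done case by case), it selects $2d$ specific rows and shows that their coefficient matrix in the ${\bf w}_j$ basis has determinant equal, up to sign, to a Sylvester resultant: $\mathrm{Res}\bigl((x^{2M}-1)/(x-1),\,(x^{N-M}-1)/(x-1)\bigr)$ in the odd case, and the analogue with $x^2-1$ in the even case. These polynomials share a root if and only if $\gcd(N,M)>1$, so the resultant is nonzero exactly under the coprimality hypothesis, giving $\mathrm{rank}\,B\geq 2d$. Your circulant-with-corrections picture is the right setup for the upper bound, but the lower bound needs a concrete independence certificate of this kind, not further bookkeeping on the rows.
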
                                                                                                                                                                                                                                                                                                                                                                                                                                                                                                                                                                                                                                                                                                                                                                                                                                                                                                                                                                                                                                                                                                                                                                                                                                                                                                                                                                                                                                                                                                                                                                                                                                                                                                                                                                                                                                                                                                                                                                                                                                                                                                                                                                                                                                                                                                                                                                                                                                                                                                                                                                                                                                                                                                                                                                                                                                                                                                                                                                                                                                                                                                                                                                                                                                                                                                                                                                                                                                                                                                                                                                                                                                                                                                                                                                                                                                                                                                                                                                                                                                                                                                                                                                                                                                                                                                                                                                                                                                                                                                                                                                                                                                                                                                                                                                                                                                                                                                                                                                                                                                                                                                                                                                                                                                                                                                                                                                                                                                                                                                                                                                                                                                                                                                                                                                                                                                                                                                                                                                                                                                                                                                                                                                                                                                                                                                                                                                                                                                                                                                                                                                                                                                                                                                                                                                                                                                                                                                                                                                                                                                                                                                                                                                                                                                                                                                                                                                                                                                                                                                                                                                                                                                                                                                                                                                                                                                                                                                                                                                                                                                                                                                                                                                                                                                                                                                                                                                                                                                                                                                                                                                                                                                                                                                                                                                                                                                                                                                                                                                                                                                                                                                                                                                                                                                                                                                                                                                                                                                                                                                                                                                                                                                                                                                                                                                                                                                                                                                                                                                                                                                                                                                                                                                                                                                                                                                                                                                                                                                                                                                                                                                                                                                                                                                                                                                                                                                                                                                                                                                                                                                                                                                                                                                                                                                                                                                                                                                                                                                                                                                                                                                                                                                                                                                                                                                                                                                                                                                                                                                                                                                                                                                                                                                                                                                                                                                                                                                                                                                                                                                                                                                                                                                                                                                                                                                                  The proof of this result is presented in Appendix \ref{appa}, 
where we also provide an explicit description of the U-system Poisson 
brackets in dimension $2d$, which take the log-canonical form 
\beq\label{ubrackets} 
\{u_i,u_j\}=a_{j-i}u_i u_j 
\eeq 
with suitable constants $a_k =-a_{-k}$ for $0\leq k \leq 2d-1$.

\section{Reductions of Hirota's lattice KdV equation}

As was shown in \cite{HKW}, besides the underlying U-systems, 
there is another class of  
lower-dimensional recurrences associated with the 
T-systems (\ref{t1}) and (\ref{t2}), 
corresponding to travelling wave reductions of Hirota's discrete  KdV equation 
(\ref{KdV}) on a two-dimensional lattice. 
In this context, it is necessary to allow 
coefficients $a,b$ that are periodic in the independent variable $m$.    
(One can also have more general dependence on $m$, which leads to 
equations of discrete Painlev\'e type \cite{bershtein, bershtein2, honeinoue, okubo}.)

The $(N,M)$ travelling wave reduction of (\ref{KdV}) is derived by considering 
solutions that are periodic with respect to simultaneous shifts by $N$ steps and $M$ steps  
in the $k,l$ lattice directions, respectively, that is  
\begin{equation} \label{Vtov}
V_{k+N,l+M}=V_{k,l}\implies 
V_{k,l}=v_m, \qquad m=kM-l N
\end{equation}
%$V_{k+N,l+M}=V_{k,l}\implies V_{k,l}=v_m, \qquad m=l N-k M.$$ 
This is the discrete analogue of the travelling wave reduction for a partial 
differential equation in $1+1$ dimensions, which reduces a function $V(x,t)$ to a 
function $v=v(z)$ satisfying an ordinary differential equation in a single variable $z=x-ct$.  
In the discrete setting, the ratio $N/M\in\mathbb{Q}$ 
corresponds to the wave speed   $c$,  and from  (\ref{KdV})  
we obtain the following ordinary difference equation in terms of the travelling wave variable $m$: 
\begin{equation} \label{RedkdV}
v_{m+{N}+{M}}-v_m=\alpha\left(\frac{1}{v_{m+N}}-\frac{1}{v_{m+M}}
\right).
\end{equation}
(Equivalently, setting $m=l N-kM$ as  in \cite{HKW} leads to the same equation, 
due to a symmetry of the discrete 
KdV equation.)
An example of one of these travelling wave solutions of the lattice KdV equation is presented as a 3D plot in 
Figure \ref{f1}; another view of the same solution is provided by the contour plot in Figure \ref{f2} (which 
is somewhat reminiscent of a contour plot of a genus 3 solution of the continuous KP equation in \cite{sdeco}). 

The following key observation, from  \cite{HKW}, is that the same discrete reduction (\ref{RedkdV}) 
is related both to the T-system (\ref{t1}) with $a\to -\alpha$, $b\to\beta_m$, and to 
the T-system (\ref{t2}) with  $a\to\alpha$, $b\to\beta_m'$,
where $\beta_m,\beta_m'$ are periodic coefficients with periods $M,N$ respectively.

\begin{proposition} \label{kdvred} 
Suppose that a travelling wave solution %reduction 
of the lattice KdV equation is given by 
\begin{equation}\label{vtau} 
v_m=\frac{\tau_m  \tau_{m+N+M}}{\tau_{m+M} \tau_{m+N}}, 
\end{equation}
%is a solution of 
satisfying the ordinary difference equation \eqref{RedkdV}. Then $\tau_m$ satisfies 
the following two bilinear equations: 
\begin{eqnarray}
\tau_{m+2N+M}\tau_m &=& \beta_m \tau_{m+N+M}  \tau_{m+N} - \alpha \tau_{m+2 N}\tau_{m+M},  \qquad \beta_{m+M}=\beta_m, \label{1HKdV} \\
\tau_{m+2M+N}\tau_{m} &=& \beta'_m \tau_{m+N+M}  \tau_{m+M}+\alpha \tau_{m+2 M}\tau_{m+N},\qquad   \beta'_{m+N}=\beta'_m. \label{2HKdV}
\end{eqnarray}
Conversely, 
if $\tau_m$ is a solution of either \eqref{1HKdV} or \eqref{2HKdV}, then  
$v_m$ given by \eqref{vtau} satisfies \eqref{RedkdV}.
\end{proposition}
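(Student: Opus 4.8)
The plan is to prove the proposition by a direct substitution that rewrites \eqref{RedkdV} as the statement that an explicit quadratic ratio of the $\tau_m$ is periodic, and then to deduce the second bilinear equation from the first using a discrete symmetry of \eqref{RedkdV}. Throughout one assumes, as is implicit in \eqref{vtau} defining a genuine solution, that the $\tau_m$ (hence the $v_m$) are nonzero, so that denominators may be cleared freely; since every step below is an equivalence, both directions of the statement come out at once.

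First I would substitute \eqref{vtau} into \eqref{RedkdV}. Each of the four rational terms then carries an overall factor $\tau_{m+N+M}$ in its numerator; cancelling it and clearing the remaining denominators $\tau_{m+M}\tau_{m+N}\tau_{m+N+2M}\tau_{m+2N+M}$ turns \eqref{RedkdV} into the cubic identity
\[
\tau_{m+N}\bigl(\tau_{m+2N+2M}\tau_{m+M}+\alpha\,\tau_{m+2M}\tau_{m+2N+M}\bigr)
=\tau_{m+N+2M}\bigl(\tau_{m}\tau_{m+2N+M}+\alpha\,\tau_{m+2N}\tau_{m+M}\bigr).
\]
The key observation is that the parenthesis on the right equals $\tau_{m+N+M}\tau_{m+N}\,\beta_m$, where
\[
\beta_m:=\frac{\tau_{m+2N+M}\tau_{m}+\alpha\,\tau_{m+2N}\tau_{m+M}}{\tau_{m+N+M}\tau_{m+N}},
\]
while the parenthesis on the left is exactly the shift $m\mapsto m+M$ of that same numerator, hence equals $\tau_{m+N+2M}\tau_{m+N+M}\,\beta_{m+M}$. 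Cancelling the common nonzero factor $\tau_{m+N}\tau_{m+N+M}\tau_{m+N+2M}$, the cubic identity collapses to $\beta_{m+M}=\beta_m$. Reading the definition of $\beta_m$ as a relation among the $\tau$'s is precisely \eqref{1HKdV}, and $\beta_{m+M}=\beta_m$ is the stated periodicity of its coefficient; conversely, if $\tau_m$ solves \eqref{1HKdV} then $\beta_m$ is $M$-periodic, and running the chain of equivalences backwards shows that $v_m$ given by \eqref{vtau} solves \eqref{RedkdV}.

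For \eqref{2HKdV} I would avoid repeating the computation by noting that \eqref{RedkdV} is invariant under the substitution $(M,N,\alpha)\mapsto(N,M,-\alpha)$ — interchanging $M$ and $N$ merely swaps the two terms on the right-hand side — while the ansatz \eqref{vtau} is manifestly symmetric in $M$ and $N$. Applying this symmetry to the equivalence just established turns \eqref{1HKdV} into
\[
\tau_{m+2M+N}\tau_m=\beta'_m\,\tau_{m+N+M}\tau_{m+M}+\alpha\,\tau_{m+2M}\tau_{m+N},\qquad \beta'_{m+N}=\beta'_m,
\]
which is exactly \eqref{2HKdV}, now with an $N$-periodic coefficient. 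Combining the two equivalences yields the full claim: a solution of \eqref{RedkdV} of the form \eqref{vtau} forces $\tau_m$ to satisfy both \eqref{1HKdV} and \eqref{2HKdV}, and a solution of either bilinear equation produces, via \eqref{vtau}, a solution of \eqref{RedkdV}.

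I do not expect any conceptual obstacle: the substitution is elementary. The only thing that needs care is the bookkeeping of the many shifted arguments ($m+2N+2M$, $m+N+2M$, $m+2N+M$, and so on) and, crucially, spotting the factorization that identifies the two cubic parentheses with $\beta_m$ and its $M$-shift — this is the step that makes the argument collapse, and the main risk is merely an arithmetic slip in tracking the indices; one should also remember to record the genericity hypothesis $\tau_m\neq 0$ that licenses clearing denominators in both directions.
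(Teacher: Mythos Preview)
Your proof is correct and complete; the substitution, the identification of the two cubic parentheses with $\beta_m$ and $\beta_{m+M}$, and the symmetry argument for \eqref{2HKdV} all check out. The paper itself does not supply a proof of this proposition---it attributes the result to \cite{HKW} and simply states it---so there is nothing to compare against here, but your direct-substitution approach is exactly the natural one, and the use of the $(M,N,\alpha)\mapsto(N,M,-\alpha)$ symmetry to obtain \eqref{2HKdV} from \eqref{1HKdV} without repeating the calculation is a clean touch.
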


The T-systems \eqref{t1} and \eqref{t2} with constant coefficients can be seen as a particular case of (\ref{1HKdV}) 
and (\ref{2HKdV}), where  
$\beta_m=\beta'_m=b$ for any $m$. So it follows that $v_m$ defined by \eqref{vtau} satisfies \eqref{RedkdV} whenever  
$\tau_m$ satisfies one of these discrete Hirota reductions with constant coefficients, but the converse statement is not true.

\subsection{Lax representation and first integrals} \label{subsecLax}

It was shown in \cite{HKW} that any %Hirota-Miwa reduc
bilinear difference equation of the form \eqref{genform}, 
or a suitable generalization with periodic coefficients, admits a Lax pair derived from the 
Lax representation of the discrete Hirota equation. In the cases of (\ref{1HKdV}) and (\ref{2HKdV})  this construction involves
$N \times N$ and $\min (N, 2M ) \times \min (N, 2M )$ Lax matrices respectively. However, in these cases there is also a $2 \times 2$ Lax representation derived from the Lax representation of the lattice KdV equation.

The lattice KdV equation \eqref{KdV} is equivalent to the discrete 
zero curvature equation 
\begin{equation} \label{LaxKdV}
{\bf L}(V_{k,l+1},V_{k+1,l+1}){\bf M}(V_{k,l})={\bf M}(V_{k+1,l}){\bf L}(V_{k,l},V_{k+1,l}),
\end{equation}
where
\begin{equation} \label{LaxPair2}
{\bf L}(V,W)=\left(
\begin{array}{cc}
 V-\frac{\alpha}{W} & \ \lambda \\
 1 & 0
\end{array}
\right), \ {\bf M} (V)=\left(
\begin{array}{cc}
V & \ \lambda \\
 1 & \frac{\alpha}{V}
\end{array}
\right),
\end{equation}
and  $\lambda$ is a spectral parameter.

It is well known that the Lax representation of quadrilateral lattice equations gives rise to Lax representations of their periodic reductions (see 
e.g.\ \cite{staircase} and references therein). 
First integrals of these systems are derived from the 
%characteristic polynomial 
spectrum of an associated monodromy 
matrix. 
%In the case of coprime  %$N, M$ with 
%$N>M$ the monodromy matrix of \eqref{RedkdV} is   
In the case of the lattice KdV equation, from the periodic reduction \eqref{Vtov} and the Lax representation \eqref{LaxKdV} 
we derive the reduced version of the discrete zero curvature Lax representation of  \eqref{RedkdV}, 
that is 
\begin{equation} \label{laxvkdv}
{\bf L}(v_{m},v_{m+M}){\bf M}(v_{m+N})={\bf M}(v_{m+N+M}){\bf L}(v_{m+N},v_{m+N+M}). 
\end{equation}
Hence,  
by making the substitution \eqref{vtau} in \eqref{laxvkdv},
a $2 \times 2$ Lax representation can be obtained for the $(N,M)$ periodic reduction %of the lattice KdV 
\eqref{RedkdV}, and consequently  
for the 
discrete bilinear 
equations (\ref{1HKdV}) and (\ref{2HKdV}), as well as  %for 
for the corresponding U-systems of Proposition \ref{redsymp}. 

For coprime  $N, M$ with $N>M$,   
for convenience we write  
$$ 
{\bf M}_j ={\bf M}(v_j), \qquad {\bf L}_j =  {\bf {L}}(v_j,v_{j+M}), 
$$ and 
define the %{\it{monodromy}} 
monodromy matrix to be 
\beq\label{Monodr1}
\mathcal{M}_m=\prod_{i=0}^{M-1} {\bf {M}}_{m+r_i+N}{\bf {L}}_{m+r_i+N-M}
{\bf {L}}_{m+r_i+N-2M}  {\bf {L}}_{m+r_i+N-3M} 
\cdots{\bf{L}}_{m+r_{i+1}},
\eeq 
where 
\begin{equation*} \label{ri}
r_k=kN \bmod M 
\end{equation*}
(the product in \eqref{Monodr1} is arranged from left to right).  For $M>1$, 
we also consider the matrix 
\begin{align*} \label{Lmatrix}
\mathcal{L}_m:=&{\bf {L}}_{m}^{-1}{\bf {M}}_{m+N+M}{\bf {L}}_{m+N} 
{\bf {L}}_{m+N-M} {\bf {L}}_{m+N-2M}  
\cdots{\bf{L}}_{m+r_{1}} \nonumber \\
=& {\bf {M}}_{m+N}{\bf {L}}_{m+N-M} {\bf {L}}_{m+N-2M}    
 \cdots{\bf{L}}_{m+r_{1}},
\end{align*}
where in the last equality we used \eqref{laxvkdv}. Now, from the above definitions, a direct computation 
shows that $ \mathcal{M}_m$ satisfies the discrete Lax equation 
\begin{equation} \label{monodromyLax}
\mathcal{M}_m \mathcal{L}_m= \mathcal{L}_m \mathcal{M}_{m+1}. 
\end{equation}
Therefore, the following corollary holds. 
\begin{corollary}
For coprime  $N, M$ with $N>M$, the $(N,M)$ KdV periodic reduction \eqref{RedkdV} preserves the spectrum of the monodromy matrix \eqref{Monodr1}. 
\end{corollary}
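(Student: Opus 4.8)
The plan is to read the corollary off from the discrete Lax equation \eqref{monodromyLax}, which has already been established. The only preliminary point is to check that $\mathcal{L}_m$ is invertible over the field $\mathbb{C}(\lambda)$: by \eqref{LaxPair2} one has $\det{\bf L}(V,W)=-\lambda$ and $\det{\bf M}(V)=\alpha-\lambda$, and $\mathcal{L}_m$ is a product of one factor of the form ${\bf M}_j$ together with $\lfloor N/M\rfloor$ factors of the form ${\bf L}_j$ (in the degenerate case $M=1$ one uses the corresponding single-step transfer matrix), so $\det\mathcal{L}_m=(\alpha-\lambda)(-\lambda)^{\lfloor N/M\rfloor}$ is a nonzero element of $\mathbb{C}(\lambda)$. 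Hence $\mathcal{L}_m\in\mathrm{GL}_2\big(\mathbb{C}(\lambda)\big)$, and \eqref{monodromyLax} can be rewritten as $\mathcal{M}_{m+1}=\mathcal{L}_m^{-1}\mathcal{M}_m\mathcal{L}_m$.

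This similarity relation is all that is needed. It gives $\det(\mu I-\mathcal{M}_{m+1})=\det(\mu I-\mathcal{M}_m)$, so the characteristic polynomial $\mu^2-(\mathrm{tr}\,\mathcal{M}_m)\,\mu+\det\mathcal{M}_m$ is unchanged under $m\mapsto m+1$. A short count shows $\mathcal{M}_m$ comprises $M$ factors ${\bf M}_j$ and $N$ factors ${\bf L}_j$, so $\det\mathcal{M}_m=(\alpha-\lambda)^M(-\lambda)^N$ is already a constant of the phase space, and the nontrivial spectral data is carried by $\mathrm{tr}\,\mathcal{M}_m(\lambda)$, a polynomial in $\lambda$ whose coefficients are rational functions of the coordinates $(v_m,\dots,v_{m+N+M-1})$ on which one iteration of \eqref{RedkdV} acts by the shift $m\mapsto m+1$. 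Invariance of these coefficients --- equivalently, of the whole spectrum of $\mathcal{M}_m$ --- is precisely the assertion of the corollary, and it yields first integrals of the $(N,M)$ reduction.

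Finally, although \eqref{monodromyLax} is granted, it is worth indicating where the real work sits, since everything above rests on it: \eqref{monodromyLax} is obtained by applying the reduced zero-curvature relation \eqref{laxvkdv} repeatedly so as to transport the staircase product $\mathcal{M}_m$ by one lattice step, with $\mathcal{L}_m$ playing the role of the transfer matrix across the edge of the fundamental domain. The bookkeeping relies on the indices $r_k=kN\bmod M$ encoding a monotone staircase path joining a lattice point to its image under the period vector $(N,M)$; because $\gcd(N,M)=1$ this path is connected, and shifting it by one step cyclically permutes its $M$ blocks, with each intermediate application of \eqref{laxvkdv} matching a neighbouring ${\bf L}$ and ${\bf M}$ pair with exactly the right arguments. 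Verifying that combinatorial matching is the main obstacle in establishing \eqref{monodromyLax}; once it is in hand, the conjugation argument above makes the corollary immediate.
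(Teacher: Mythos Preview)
Your proof is correct and follows the same route as the paper: the corollary is stated immediately after the discrete Lax equation \eqref{monodromyLax}, and the paper simply says ``Therefore, the following corollary holds,'' leaving the conjugation argument implicit. You have filled in precisely the details one would expect (invertibility of $\mathcal{L}_m$ via its determinant, the resulting similarity $\mathcal{M}_{m+1}=\mathcal{L}_m^{-1}\mathcal{M}_m\mathcal{L}_m$, and the factor count for $\det\mathcal{M}_m$), and your closing paragraph correctly identifies that the substantive content lies in the derivation of \eqref{monodromyLax} rather than in the corollary itself.
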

Equivalently, the birational map  corresponding  to the KdV recurrence \eqref{RedkdV}, that is  
\begin{equation*} \label{KdVmap}
\phi: \quad (v_0,v_1,\dots,v_{N+M-1})\mapsto\left(v_1,v_2,\dots,
v_0+\alpha\Big(\frac{1}{v_{N}}-\frac{1}{v_{M}}\Big)\right), 
\end{equation*}
preserves the spectral curve 
\beq\label{spectral}
\det ( \mathcal{M}(\lambda) - \nu \, \mathbf{1}) = 0  
\eeq  for the monodromy matrix $\mathcal{M}(\lambda)= \mathcal{M}_0$ 
obtained by setting $m=0$ in (\ref{Monodr1}), namely 
\begin{equation} \label{monodr1}
\mathcal{M}(\lambda) =\prod_{i=0}^{M-1}{\bf {M}}_{r_i+N}{\bf {L}}_{r_i+N-M} 
{\bf {L}}_{r_i+N-2M}\cdots 
{\bf{L}}_{r_{i+1}}, 
\end{equation} 
where the dependence of ${\bf {L}}_j$ and ${\bf {M}}_j$ 
on the spectral parameter $\lambda$ is implicit from (\ref{LaxPair2}).

%%%%%%%%%%%%%%%%%%%%%%%%%%%%%%%%%%%%%%%%%%%%%%%%%%%%%%%%

\section{The odd case}

%Our next aim is to determine appropriate Poisson structures in order to study the Liouville integrability of the corresponding 
%birational maps or recurrences.  
From the U-systems described in Proposition \ref{redsymp}, it is evident that the two different cases of %$N+M$ is 
odd/even $N+M$   
are structurally different. 
Thus we continue our analysis by considering the odd case first. %each of these cases   separately.  
The periodic coefficients that appear in the discrete 
bilinear equations \eqref{1HKdV} and \eqref{2HKdV} 
introduce periodic coefficients in the 
corresponding U-systems of Proposition \ref{redsymp}. Furthermore, the $u$-variables of the U-systems are related with the $v$-variables of 
KdV periodic reductions. % as the next proposition states.  

\begin{proposition} \label{changevarodd}
Let $N+M$ be odd. If $v_m,u_m,u'_m$ are related by 
\begin{equation} \label{odvarall}
v_m = u_m u_{m+1} \dots u_{m+M-1}=u'_m u'_{m+1} \dots u'_{m+N-1}, 
\end{equation}
then the following statements are equivalent: 
\begin{enumerate}
\item[(i)] $v_m$ satisfies the $(N,M)$ KdV periodic reduction \eqref{RedkdV};  
\item[(ii)] $u_m$ satisfies the U-system (\ref{1o}) 
with periodic coefficients, that is %namely  
\begin{equation} \label{1U}
u_{m}u_{m+1} \ldots u_{m+N+M-1}=\beta_m - \alpha u_{m+M} u_{m+M+1} \ldots u_{m+N-1}, \  \ \beta_{m+M}=\beta_m ;
\end{equation}    
\item[(iii)] $u'_m$ satisfies the U-system (\ref{2o}) 
with periodic coefficients, that is 
\begin{equation} \label{2U}
u'_{m}u'_{m+1} \ldots u'_{m+N+M-1}=\beta'_m+\frac{\alpha}{u'_{m+M} u'_{m+M+1} \ldots u'_{m+N-1}}, \ \   \beta'_{m+N}=\beta'_m .
\end{equation} 
\end{enumerate}

\end{proposition}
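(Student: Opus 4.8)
The plan is to establish the chain of equivalences (i) $\Leftrightarrow$ (ii) and (i) $\Leftrightarrow$ (iii) by relating both $u$-systems to the $\tau$-function bilinear equations of Proposition~\ref{kdvred}, using the substitution \eqref{odvarall} as the bridge. First I would observe that, since $N+M$ is odd, $\gcd(2,N+M)=1$, so the map $m\mapsto$ (suitable multiple of $m$) mod $(N+M)$ is a bijection; this is what makes the products of $N$ consecutive $u$'s and of $M$ consecutive $u$'s compatible as two factorizations of the same $v_m$. Concretely, from part~$\mathbf{(1)}$ of Proposition~\ref{redsymp} (in its non-autonomous form) the variable $u_m=\tau_m\tau_{m+N+1}/(\tau_{m+1}\tau_{m+N})$ telescopes: the product $u_m u_{m+1}\cdots u_{m+M-1}$ collapses to $\tau_m\tau_{m+N+M}/(\tau_{m+M}\tau_{m+N})$, which is exactly $v_m$ as given by \eqref{vtau}. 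Thus \eqref{odvarall} is consistent with the $\tau$-parametrization, and likewise $u'_m=\tau_m\tau_{m+M+1}/(\tau_{m+1}\tau_{m+M})$ telescopes over $N$ factors to the same $v_m$.

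With this dictionary in hand, the equivalence (i)~$\Leftrightarrow$~(ii) follows by combining Proposition~\ref{kdvred} with Proposition~\ref{redsymp}: $v_m$ satisfies \eqref{RedkdV} iff $\tau_m$ satisfies \eqref{1HKdV} (this is Proposition~\ref{kdvred}, noting that \eqref{1HKdV} is \eqref{t1} with $a\to-\alpha$, $b\to\beta_m$), and $\tau_m$ satisfies \eqref{t1}-with-periodic-$b$ iff $u_m$ satisfies \eqref{1U} (this is part~$\mathbf{(1)}$ of Proposition~\ref{redsymp} with the constant $b$ promoted to the $M$-periodic $\beta_m$; the substitution argument there is insensitive to whether $b$ is constant or $M$-periodic, since the iteration shifts $m$ by $1$ and... no: one must check the period is respected, which it is because \eqref{1U} has the coefficient $\beta_m$ attached to the same index $m$ throughout). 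Similarly (i)~$\Leftrightarrow$~(iii) uses the $\beta'_m$, $N$-periodic version of \eqref{t2} via part~$\mathbf{(2)}$ of Proposition~\ref{redsymp} and the second bilinear equation \eqref{2HKdV}. I would then note that the only genuine content beyond quoting earlier results is verifying that the two telescoping products in \eqref{odvarall} can simultaneously be realized, i.e.\ that there is a common $\tau$ sequence; this is where oddness of $N+M$ is essential, since one needs the single bilinear hierarchy to support both the period-$M$ and period-$N$ reductions on a shared lattice of index~$m$.

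Alternatively, and perhaps more cleanly for the write-up, I would give a direct proof of (ii)~$\Leftrightarrow$~(iii) that bypasses $\tau$: starting from \eqref{odvarall}, one has $v_m = u_m\cdots u_{m+M-1} = u'_m\cdots u'_{m+N-1}$, and taking ratios $v_{m+1}/v_m$ expresses $u_{m+M}/u_m$ in terms of the $u$'s and, symmetrically, $u'_{m+N}/u'_m$ in terms of the $u'$'s. Substituting $v_m$ and the shifted products into \eqref{RedkdV} and simplifying should turn \eqref{RedkdV} into \eqref{1U} on the one hand and into \eqref{2U} on the other, after recognizing that $u_{m+M}u_{m+M+1}\cdots u_{m+N-1}$ is precisely $v_{m+M}/ (u_{m+N}\cdots u_{m+N+M-1})\cdot(\ldots)$—i.e.\ the quantities appearing on the right of \eqref{1U} and \eqref{2U} are, respectively, $v_m/v_{m+M}\cdot(\text{something})$, matching the $1/v_{m+M}$ and $1/v_{m+N}$ terms in \eqref{RedkdV}. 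The main obstacle I anticipate is purely bookkeeping: tracking the overlapping ranges of indices in the three products in \eqref{odvarall} and ensuring the periodicity statements $\beta_{m+M}=\beta_m$ and $\beta'_{m+N}=\beta'_m$ emerge with the correct periods rather than being assumed; concretely one must check that the ``constant of integration'' that appears when one passes from the first-order-in-$v$ relation \eqref{RedkdV} to the longer recurrence \eqref{1U} is forced to be $M$-periodic (resp.\ $N$-periodic) and is otherwise free, which identifies it with $\beta_m$ (resp.\ $\beta'_m$). Once the index arithmetic is organized—most transparently by passing through $\tau$ as above—the verification is routine, so I would present the $\tau$-route as the proof and remark that a direct substitution also works.
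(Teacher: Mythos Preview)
Your primary route---introducing the $\tau$-parametrization, verifying the telescoping identity $u_m u_{m+1}\cdots u_{m+M-1}=\tau_m\tau_{m+N+M}/(\tau_{m+M}\tau_{m+N})=v_m$, and then chaining Proposition~\ref{kdvred} with the periodic-coefficient version of Proposition~\ref{redsymp}---is exactly the paper's proof. Your side discussion of where oddness of $N+M$ enters is misplaced, though: the two factorizations in \eqref{odvarall} involve \emph{distinct} sequences $u_m$ and $u'_m$, so no simultaneous-realization or ``common $\tau$'' issue arises (the paper handles (i)$\Leftrightarrow$(ii) and (i)$\Leftrightarrow$(iii) completely separately); the oddness hypothesis is there only because it selects parts $\mathbf{(1)}$ and $\mathbf{(2)}$ of Proposition~\ref{redsymp} rather than $\mathbf{(3)}$ and $\mathbf{(4)}$, and you can drop that paragraph entirely.
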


\begin{proof}
According to Prop. \ref{kdvred}, $v_m=\frac{\tau_m  \tau_{m+N+M}}{\tau_{m+M} \tau_{m+N}}$ satisfies \eqref{RedkdV} if and only 
$\tau_m$ satisfies \eqref{1HKdV} and  from Prop. \ref{redsymp}, for any $\beta_m$, $\tau_m$ satisfies \eqref{1HKdV} if and only if 
$u_m=\frac{\tau_{m} \tau_{m+N+1}}{\tau_{m+1} \tau_{m+N}}$ satisfies   
\begin{equation*}
u_{m}u_{m+1} \ldots u_{m+N+M-1}=\beta_m - \alpha u_{m+M} u_{m+M+1} \ldots u_{m+N-1}. 
\end{equation*}    
Therefore, $u_m$ satisfies \eqref{1U}, if and only if 
$$v_m=\frac{\tau_m  \tau_{m+N+M}}{\tau_{m+M} \tau_{m+N}}=
\frac{\tau_{m} \tau_{m+N+1}}{\tau_{m+1} \tau_{m+N}} \frac{\tau_{m+1} \tau_{m+N+2}}{\tau_{m+2} \tau_{m+N+1}} \dots 
\frac{\tau_{m+M-1} \tau_{m+N+M}}{\tau_{m+M} \tau_{m+N+M-1}}=u_m u_{m+1} \dots u_{m+M-1}$$
satisfies \eqref{1U}. 

In a similar way, we derive that \eqref{RedkdV} is equivalent to \eqref{2U} 
for 
$$v_m=\frac{\tau_m  \tau_{m+N+M}}{\tau_{m+M} \tau_{m+N}}=
\frac{\tau_{m} \tau_{m+M+1}}{\tau_{m+1} \tau_{m+M}} \frac{\tau_{m+1} \tau_{m+M+2}}{\tau_{m+2} \tau_{m+M+1}} \dots 
\frac{\tau_{m+N-1} \tau_{m+N+M}}{\tau_{m+N} \tau_{m+N+M-1}}=u'_m u'_{m+1} \dots u'_{m+N-1}, $$ 
from the second $U$-system \eqref{2o} corresponding to \eqref{2HKdV}.
\end{proof}

Using the substitution \eqref{odvarall}, the Lax representation of the $(N,M)$ periodic reduction of lattice KdV %equation 
gives rise to 
a Lax representation of the two U-systems (\ref{1U}-\ref{2U}) and the corresponding monodromy matrix \eqref{monodr1} generates first integrals 
of the U-systems.

%%%%%%%%%%%%%%%%%%%%%%%%%%%%%%%%%%%%%%%%%%%%%%%%%%%%%%%%%%%%%%%%%%%%%%%%%%%%%%%%%
%%%%%%%%%%%%%%%%%%%%%%%%%%%%%%%%%%%%%%%%%%%%%%%%%%%%

\subsection{Bi-Poisson structure of the lattice KdV periodic reductions} 
We have seen that two different  bilinear equations with periodic coefficients, obtained as reductions 
of the discrete Hirota equation,  give rise to the same periodic reduction of the lattice KdV equation. 
%The integrability properties of the corresponding ${U}$-systems follow from the integrability properties of the KdV equation. 
By Theorem \ref{dimthm}, the associated U-systems inherit a nondegenerate log-canonical Poisson structure from the  mutation periodic quiver corresponding 
to each of the bilinear equations. 
In the case of coprime $N, M$, with $N+M$ odd, the Poisson structure of the two U-systems gives rise to two Poisson structures of the 
corresponding discrete KdV %periodic 
reductions. We will prove that these structures are compatible in the sense that any linear combination of them also defines a Poisson bracket. 
This fact will be the key to demonstrating the integrability of the lattice KdV reductions, and consequently of the original U-systems.
The main result is described in the next theorem.

\begin{theorem} \label{12bras}
Let $N, M$ be coprime with $N>M>1$ and $N+M$ odd. %For $0 \leq i<j \leq N+M-1$, the 
The brackets 
\begin{eqnarray} \label{poisbrack1}
\{ v_i, v_j \}_1 &=& \begin{cases}
c_{j-i} v_i v_j,  &j-i \neq  N, \\ 
c_{j-i} v_i v_j+ c_N \alpha, &j-i = N, 
\end{cases} 
\end{eqnarray} 
\begin{eqnarray} \label{poisbrack2}
\{ v_i, v_j \}_2 &=& \begin{cases}
d_{j-i} v_i v_j,  &j-i \neq kM, \\ 
d_{j-i} v_i v_j +d_M  (-\alpha)^k \prod_{l=1}^{k-1} v_{i + lM}^{-2}, &j-i = k M, 
\end{cases}
\end{eqnarray}
for $0 \leq i<j \leq N+M-1$,
where (up to rescaling by an arbitrary constant) 
\begin{equation} \label{coefcd}
c_k = d_k = (-1)^{h_k}, 
%c_k=(-1)^h c, \ d_k=(-1)^h d,  
%\qquad with \,\, h_k=\frac{k}{M} \bmod (N+M) 
\end{equation}
%for any $c$, $d \in \mathbb{C}$ and 
with \begin{equation} \label{hmod}   
h_k=\frac{k}{M} \bmod (N+M) % \  \ \text{for} \  k=1,\dots, N+M-1, 
\end{equation}
for $k=1,\dots, N+M-1$,
define two compatible Poisson structures on $\mathbb{C}^{N+M}$  of rank $N+M-1$. 
Furthermore, the map (\ref{KdVmap}) corresponding to  the $(N,M)$ %periodic 
reduction of the lattice KdV equation 
%, 
%\begin{equation} \label{KdVmapphi}
%\phi_v(v_0,v_1,\dots,v_{N+M-1})=(v_1,v_2,\dots,v_0+\alpha(\frac{1}{v_{N}}-\frac{1}{v_{M}}))
%\end{equation}
is a Poisson map with respect to both % Poisson 
of these brackets.
\end{theorem}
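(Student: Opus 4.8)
The plan is to prove Theorem~\ref{12bras} by pulling back the known U-system Poisson structures (\ref{ubrackets}) along the monomial change of variables (\ref{odvarall}), and then verifying compatibility and the map-invariance directly. The logical skeleton rests on Proposition~\ref{changevarodd}: the map $\phi$ for (\ref{RedkdV}) is intertwined with $\hat\varphi_1$ (acting on the $u_m$) by the substitution $v_m = u_m u_{m+1}\cdots u_{m+M-1}$, and with $\hat\varphi_2$ (acting on the $u'_m$) by $v_m = u'_m u'_{m+1}\cdots u'_{m+N-1}$. By Theorem~\ref{dimthm} each U-system carries an invariant log-canonical bracket $\{u_i,u_j\} = a_{j-i}u_iu_j$ of rank $N+M-1$ in the ambient $\mathbb{C}^{N+M}$ (the extra dimension coming from the periodic coefficient, which here is treated as a Casimir), and similarly $\{u'_i,u'_j\} = a'_{j-i}u'_iu'_j$. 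Pushing these forward through the respective monomial maps yields brackets on the $v$-variables, and the assertion is that these are precisely $\{\cdot,\cdot\}_1$ and $\{\cdot,\cdot\}_2$.

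First I would compute $\{v_i,v_j\}_1$ from the U-system bracket: writing $v_i = \prod_{p=0}^{M-1} u_{i+p}$ and using bilinearity plus log-canonicity, one gets $\{v_i,v_j\}_1 = \big(\sum_{p,q} a_{(j+q)-(i+p)}\big) v_i v_j$, a sum over a shifted $M\times M$ block of the antisymmetric coefficient array. The combinatorics here is where the specific formula $c_k = (-1)^{h_k}$ with $h_k = (k/M)\bmod(N+M)$ must drop out; I expect the palindromic-basis description of $\hat B$ (the pattern visible in the $M=3,N=4$ example) to make the double sum telescope, collapsing the $M^2$ terms to a single residue depending only on $j-i$. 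The anomalous $j-i=N$ case, with the additive constant $c_N\alpha$, arises because when $j-i=N$ the blocks $\{u_i,\dots,u_{i+M-1}\}$ and $\{u_j,\dots,u_{j+M-1}\}$ interact with the periodic coefficient $\beta_m$ through the defining relation (\ref{1U}); the constant $\alpha$ is exactly the coefficient of that term in (\ref{1U}), so I would track it by writing $v_{i+N}$ in terms of the $u$'s via the recurrence rather than just the monomial. The computation of $\{v_i,v_j\}_2$ is analogous but uses the $N$-fold monomial $v_i = \prod_{p=0}^{N-1}u'_{i+p}$, and the more complicated correction term $d_M(-\alpha)^k\prod_{l=1}^{k-1}v_{i+lM}^{-2}$ when $j-i=kM$ reflects the rational (rather than polynomial) nature of the U-system (\ref{2U}): iterating the relation to express $v_{i+kM}$ forces in inverse powers. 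Establishing that both pushforwards are well-defined Poisson brackets on $\mathbb{C}^{N+M}$ (not merely on the image) — i.e. that the Jacobi identity holds for the stated formulas — can be done either by this pullback argument or by a direct, if tedious, check; the rank-$(N+M-1)$ claim follows because each U-system bracket has rank $N+M-1$ and the monomial maps are dominant with one-dimensional generic fibre.

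That $\phi$ is Poisson for each bracket is then immediate from the intertwining: $\phi$ pushes forward $\{\cdot,\cdot\}_1$ because $\hat\varphi_1$ preserves $\{\cdot,\cdot\}$ (Theorem~\ref{dimthm}) and the monomial projection is equivariant (Proposition~\ref{changevarodd}), and likewise for bracket~2. The one genuinely new content is \emph{compatibility}: that $\lambda_1\{\cdot,\cdot\}_1 + \lambda_2\{\cdot,\cdot\}_2$ satisfies Jacobi for all $\lambda_1,\lambda_2$. Since each bracket individually satisfies Jacobi, compatibility reduces to the vanishing of the mixed Schouten bracket $[\pi_1,\pi_2]=0$, a set of trilinear identities in the $c_k,d_k$ and the correction terms. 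I would reduce this to a finite check on the structure constants using the explicit formulas (\ref{coefcd})–(\ref{hmod}); the key simplification is that $c_k=d_k$ for \emph{all} $k$, so away from the special index values $\pi_1$ and $\pi_2$ coincide and the Schouten bracket vanishes trivially — all the work is concentrated at the triples of indices involving a jump of $N$ or a multiple of $M$. \textbf{The main obstacle} I anticipate is precisely this bookkeeping at the anomalous indices: verifying that the additive constant $c_N\alpha$ and the whole family of correction terms $d_M(-\alpha)^k\prod v^{-2}$ are mutually consistent under Jacobi and under the Schouten bracket, which is where the number-theoretic role of the residue $h_k=(k/M)\bmod(N+M)$ (well-defined only because $\gcd(M,N+M)=1$) and the parity hypothesis $N+M$ odd must combine to make the signs line up. I would isolate this as a lemma on the coefficient array and handle the generic log-canonical part separately, so that the delicate sign analysis is quarantined in one place.
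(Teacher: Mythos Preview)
Your overall strategy---push forward the two U-system log-canonical brackets through the monomial substitutions of Proposition~\ref{changevarodd}, then verify compatibility---is exactly the paper's plan. But the execution differs at the two places you flag as hardest, and in each case the paper finds a shortcut that sidesteps your anticipated obstacle.

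For the coefficients $c_k$: rather than evaluating the double sum $\sum_{p,q} a_{(j+q)-(i+p)}$ directly and hoping it collapses, the paper defines $c_m$ by this sum (eq.~\eqref{coefcm}) and then uses the \emph{Poisson property of the map} $\hat\varphi_1$ to extract functional relations among the $c_m$. Because $\hat\varphi_1^*v_i=v_{i+1}$, comparing $\{v_0,v_m\}_u$ and $\{v_m,v_{N+M}\}_u$ through the KdV recurrence \eqref{RedkdV} forces the system of equations $c_m=-c_{N+M-m}=-c_{m-N}=-c_{m-M}$ (Lemma~\ref{lemmacoefmap1}, proved in Appendix~\ref{appendix1}). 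A short separate lemma then shows these relations have the unique solution $c_k=(-1)^{h_k}$ up to scale. The same argument applied to $\hat\varphi_2$ gives the identical relations for $d_m$, hence $c_k=d_k$. You never need the detailed structure of the $a_k$.

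For compatibility: the paper avoids the mixed Schouten bracket entirely. Since $c_k=d_k$, the difference $\{\,,\,\}_3:=\{\,,\,\}_1-\{\,,\,\}_2$ has only the anomalous terms surviving, giving the explicit formula \eqref{poi3}. This bracket is then \emph{recognised} as (a rescaling of) the Poisson structure coming from the Lagrangian formulation of lattice KdV in \cite{HKQT}, hence is itself Poisson. From $[\pi_1,\pi_1]=[\pi_2,\pi_2]=[\pi_1-\pi_2,\pi_1-\pi_2]=0$ one reads off $[\pi_1,\pi_2]=0$ at once. So the ``delicate sign analysis'' you quarantine as the main obstacle is replaced by an identification with a known object. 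Your direct route would work in principle, but the paper's detour through $\{\,,\,\}_3$ is both shorter and gives the third bracket \eqref{poissong3} that later drives the integrability proof via the dressing chain.
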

We will devote the rest of this section in the proof of this theorem. 

\begin{figure} \centering 
\includegraphics[width=7.33cm,height=7.33cm,keepaspectratio]{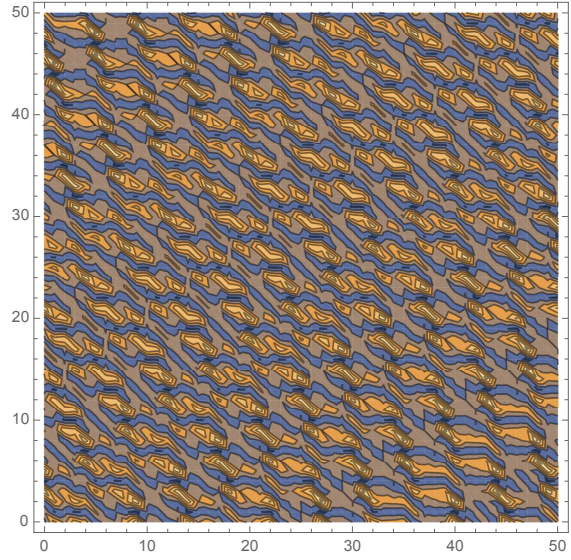}
\caption{\small{Contour plot of the same 3-phase travelling wave solution of the discrete KdV equation 
(\ref{KdV}) as in Figure \ref{f1}.  }}
\label{f2}
\end{figure}

\subsubsection{First Poisson bracket}
For coprime $N,M$ with $N>M$, $N+M$ odd, we consider the U-system  (\ref{1U}) 
 that corresponds to the discrete Hirota reduction \eqref{1HKdV}. Each iteration of the associated  map  
$\hat{\varphi}_1:\, \mathbb{C}^{M+N-1} \rightarrow \mathbb{C}^{M+N-1}$ defined by  
\begin{equation*} 
\hat{\varphi}_1: \quad (u_0,u_1,\dots, u_{N+M-3},u_{N+M-2}) 
\mapsto \left(
u_1,u_2,\dots,u_{N+M-2},\frac{\beta_m- \alpha u_{M} u_{M+1} \ldots u_{N-1} }{u_0 u_1 \dots u_{N+M-2}}\right)  
\end{equation*} 
is symplectic (really it is a family of maps depending on $m$, with the parameter $\beta_m$ varying with period $M$). 
Equivalently, $\hat{\varphi}_1$  a Poisson map with respect to the 
nondegenerate log-canonical bracket (\ref{ubrackets}), which we will 
denote by $\{,\}_u$. 
%\begin{equation} \label{uPB}
%\{u_i,u_j\}_u=a_{j-i}u_i u_j \ \text{with}  \ a_{-k}=-a_k, \  \text{for} \ 0\leq k \leq N+M-2.
%\end{equation}
According to Prop. \ref{changevarodd}, $v_m=u_m  u_{m+1}\dots u_{m+M-1}$ satisfies  
the $(N,M)$ periodic reduction of the lattice KdV equation
\eqref{RedkdV}, which corresponds to the birational map
\beq\label{kdvphi} 
\phi: \quad (v_0,v_1,\dots,v_{N+M-1})\mapsto \left(v_1,v_2,\dots,v_0 
+\alpha\Big(\frac{1}{v_{N}}-\frac{1}{v_{M}}\Big)\right).
\eeq 
We can write the variables $v_0,v_1,\dots,v_{N+M-1}$ of this map %with respect to 
in terms of % the  variables  
$u_0,\dots ,u_{N+M-2}$ using the recurrence \eqref{1U}, as
$$v_m=u_m u_{m+1} \dots u_{m+M-1}  \  \text{for} \  0 \leq  m \leq N-1, $$
and 
$$v_{N+k}=\frac{\beta_{k}-\alpha u_{k+M} u_{k+M+1} \dots u_{k+N-1}}{u_{k} u_{k+1} \dots u_{k+N-1}}  \  \text{for} \  0 \leq  k \leq M-1,$$
or equivalently, by setting $u_k u_{k+1} \dots u_{k+M-1}=v_k$,  
\begin{eqnarray}
v_m = \begin{cases} \label{vfromu1}
u_m u_{m+1} \dots u_{m+M-1},  & 0 \leq  m \leq N-1 \\ 
\frac{\beta_{m-N}}{u_{m-N} u_{m-N+1} \dots u_{m-1}}- \frac{\alpha}{v_{m-N}},  & N \leq  m \leq N+M-1.
\end{cases}
\end{eqnarray}

Next, we %re going to 
evaluate the Poisson brackets $\{ v_0, v_m \}_u$, $0 < m \leq N+M-1$.  
For $0< m \leq N-1$, we have 
\begin{eqnarray*}
\{ v_0, v_m \}_u=\sum_{i,j=0}^{N+M-2}a_{j-i}u_i u_j \frac{\partial{v_0}}{\partial{u_i}} \frac{\partial{v_m}}{\partial{u_j}}
=v_0 v_m \left(\sum_{i=0}^{M-1} \sum_{j=m}^{m+M-1} a_{j-i}\right).
\end{eqnarray*}
Then for $N\leq m \leq N+M-1$, we find 
\begin{eqnarray*}
\{ v_0, v_m \}_u &=&\{ v_0, \beta_{m-N}(u_{m-N} u_{m-N+1} \dots u_{m-1})^{-1} \}_u
+ \frac{\alpha}{v_{m-N}^2} \{ v_0, v_{m-N} \}_u \\ 
&=& -\frac{\beta_{m-N} v_0}{u_{m-N} u_{m-N+1} \dots u_{m-1}} 
\left(\sum_{i=0}^{M-1} \sum_{j=m-N}^{m-1} a_{j-i}\right) + \frac{\alpha}{v_{m-N}^2} \{ v_0, v_{m-N} \}_u \\ 
&=& \Big(-v_m- \frac{\alpha}{v_{m-N}}\Big)
\left(\sum_{i=0}^{M-1} \sum_{j=m-N}^{m-1} a_{j-i}\right) + \frac{\alpha}{v_{m-N}^2} \{ v_0, v_{m-N} \}_u .
\end{eqnarray*}
So, for any $0 \leq m \leq N+M-1$ we can evaluate the Poisson bracket of $\{ v_0, v_m \}_u$ in terms of  
$v_0,v_1,\dots, v_{N+M-1}$ by using the recurrence   
\begin{eqnarray} \label{recP1}
\{ v_0, v_m \}_u = \begin{cases}
c_m v_0 v_m,  & 0 \leq m \leq N-1,  \\ 
c_m v_0 v_m+ c_m \alpha \frac{v_0}{v_{m-N}}+\frac{\alpha}{v_{m-N}^2} \{ v_0,v_{m-N} \}_u, &N \leq  m \leq N+M-1,
\end{cases}
\end{eqnarray}
where 
\begin{eqnarray} \label{coefcm}
c_m= \begin{cases}
  \sum\limits_{i=0}^{M-1} \sum \limits_{j=m}^{m+M-1} a_{j-i},  & 0 <  m \leq N-1,  \\ 
- \sum \limits_{i=0}^{M-1} \sum \limits_{j=m-N}^{m-1} a_{j-i},   & N \leq  m \leq N+M-1. 
\end{cases}
\end{eqnarray}
Additionally, we define $c_{-m}=-c_m$, for $0\leq m \leq N+M-1$. 
%(We have not made use of $c_0$  
%have to note that with this notation we haven't  considered $c_0$ to be zero. 
%At this point we consider $c_0$ as a free parameter.  

Now, by considering the Poisson property of the map $\hat{\varphi}_1$ we can prove the following lemma.

\begin{lemma} \label{lemmacoefmap1}
For $0<m \leq N+M-1$, the coefficients $c_m$ defined by  \eqref{coefcm} 
with $c_m=-c_{-m}$ satisfy the equations
\begin{equation}\label{cmeq}
c_m=-c_{N+M-m}=-c_{m-N}=-c_{m-M}.
\end{equation}
\end{lemma}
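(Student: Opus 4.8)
The plan is to exploit the fact that $\hat\varphi_1$ is a Poisson map for the log-canonical bracket $\{,\}_u$ with coefficients $a_k=-a_{-k}$, and that $\phi$ is obtained from $\hat\varphi_1$ by the variable change $v_m=u_mu_{m+1}\cdots u_{m+M-1}$. Since the coefficients $a_k$ are $2d$-antiperiodic in the appropriate sense (the palindromic-basis structure from \cite{honeinoue} forces relations among the $a_k$), the quantities $c_m$ built from the double sum in \eqref{coefcm} will inherit symmetries. Concretely, I would first record the closed form $c_m=\sum_{i=0}^{M-1}\sum_{j=m}^{m+M-1}a_{j-i}=\sum_{r=-(M-1)}^{M-1}(M-|r|)\,a_{m+r}$ for $0<m\le N-1$, and note that the second branch of \eqref{coefcm} is $-\sum_{r}(M-|r|)a_{m-N+r}$, i.e. exactly $-$ of the first-branch expression evaluated at $m-N$. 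This already gives $c_m=-c_{m-N}$ for $N\le m\le N+M-1$ directly from the definition, and the antisymmetry convention $c_{-m}=-c_m$ extends it; so one identity is essentially bookkeeping.

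For the remaining relations $c_m=-c_{N+M-m}$ and $c_m=-c_{m-M}$, the idea is to feed the Poisson-invariance $\hat\varphi_1^*\{,\}_u=\{,\}_u$ into the recurrence \eqref{recP1}. Applying $\phi$ shifts all indices by one, and equating $\{v_0,v_m\}_u$ computed two ways (before and after shifting, using that $\phi$ is Poisson for the pushed-forward bracket, which by Proposition \ref{changevarodd} and the intertwining $\pi_1\circ\varphi_1=\hat\varphi_1\circ\pi_1$ is again $\{,\}_u$ in the $v$-coordinates) yields $\{v_i,v_j\}_u$ depends only on $j-i$ with the stated structure, and forces $c_{m}$ to satisfy the shift-invariance that translates into $c_m=-c_{N+M-m}$. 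The cleanest route is: use that the last component of $\phi$ is $v_0+\alpha(v_N^{-1}-v_M^{-1})$, so $\{v_1,v_0+\alpha(v_N^{-1}-v_M^{-1})\}_u$ must be computable both from the map and from the bracket formula; matching the coefficient of $v_0v_1$ and of the $\alpha$-correction terms gives linear relations among $c_1,c_{N-1},c_{M-1},c_{N+M-1}$, and then iterating over all shifted copies gives the full set \eqref{cmeq}.

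Alternatively — and this is probably the slicker argument — one observes that the $a_k$ themselves satisfy $a_k=-a_{2d-k}$ and a further relation coming from the exchange matrix $\hat B$ being the palindromic one (its rows are cyclic shifts truncated at the boundary), and then \eqref{cmeq} follows by a direct index manipulation of $\sum_r (M-|r|)a_{m+r}$: reflecting $r\mapsto -r$ and $m\mapsto N+M-m$, using $a_k=-a_{-k}$ and the periodicity $a_{k}=a_{k+(N+M-1)}$... wait, here $2d=N+M-1$, so the relevant period for the $u$-bracket is $N+M-1$, and one must be careful that $c_m$ is indexed mod $N+M$, not mod $N+M-1$. Reconciling these two moduli is the main obstacle: the shift from $u$-variables (dimension $N+M-1$) to $v$-variables (dimension $N+M$) changes the ambient period, and the extra $\alpha$-terms in \eqref{recP1} are precisely what absorb the mismatch. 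So the heart of the proof is to show that the $\alpha$-corrections in \eqref{recP1} are consistent with the purely log-canonical part obeying a cleaner symmetry, and that consistency is exactly \eqref{cmeq}. I would therefore organize the proof as: (i) derive $c_m=-c_{m-N}$ from the definition; (ii) use Poisson-invariance of $\hat\varphi_1$ applied to $\{v_1,\hat\varphi_1\text{-image of }v_0\}$ to get $c_m=-c_{m-M}$; (iii) combine (i) and (ii) with the antisymmetry $c_{-m}=-c_m$ to deduce $c_m=-c_{N+M-m}$, closing the system. The main obstacle is step (ii), keeping track of which index differences the $\alpha$-terms contribute to.
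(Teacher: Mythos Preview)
Your overall strategy --- use the Poisson property of $\hat\varphi_1$ to generate relations among the $c_m$ --- is the same as the paper's, but your step (i) contains a concrete error. You claim that the second branch of \eqref{coefcm} is ``exactly $-$ of the first-branch expression evaluated at $m-N$'', i.e.\ that $c_m=-c_{m-N}$ is pure bookkeeping from the definition. It is not: the inner sum in the first branch has $j\in[m,m+M-1]$ (an $M$-term sum), whereas in the second branch $j\in[m-N,m-1]$ (an $N$-term sum). So the second branch is an $M\times N$ double sum, not an $M\times M$ one, and your triangular-weight formula $-\sum_r(M-|r|)a_{m-N+r}$ does not represent it. The identity $c_m=-c_{m-N}$ therefore needs the Poisson argument too; it is not free.

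The paper's proof runs entirely through Poisson invariance, with no appeal to the explicit double-sum formula for $c_m$. It computes $\{v_m,v_{N+M}\}_u$ in two ways (shift of \eqref{recP1} versus expanding $v_{N+M}=v_0+\alpha(v_N^{-1}-v_M^{-1})$), obtaining
\[
-\{v_0,v_m\}_u-\tfrac{\alpha}{v_N^2}\{v_m,v_N\}_u+\tfrac{\alpha}{v_M^2}\{v_m,v_M\}_u
= f(v_m,\ldots,v_{m+N+M-1}),
\]
and then splits into cases according to where $m$ lies relative to $M$ and $N$. Setting $\alpha=0$ gives $c_m=-c_{N+M-m}$ first; the ranges $0<m<M$, $M<m<N$, and $N<m<N+M$ then yield the remaining relations $c_m=c_{N-m}$, $c_m=-c_{m-M}$, $c_m=-c_{m-N}$ by matching the coefficients of the various $\alpha$-terms. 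So rather than getting one relation ``for free'' and the others from invariance, all three come out of the same invariance computation, case by case. Your plan can be repaired by dropping the bogus step (i) and instead carrying out that case analysis.
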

The proof appears in Appendix \ref{appendix1}. 
This lemma leads to a closed-form expression for the Poisson brackets $\{ v_i, v_j \}_u$ purely 
 in terms of the variables  
$v_0,v_1,\dots, v_{N+M-1}$, determined by the coefficients $c_m$, 
so that the bracket $\{,\}_u$ lifts to a bracket on $\mathbb{C}^{N+M}$ for the $v_i$, denoted 
$\{,\}_1$ and  
 given by
(\ref{poisbrack1}).

To see how this comes about, 
note that 
for $N<m < N+M$, we have $0<m-N <M<N$, so from \eqref{recP1} 
$$\{ v_0, v_m \}_u=c_m v_0 v_m+ c_m \alpha \frac{v_0}{v_{m-N}}+c_{m-N} \alpha \frac{v_0}{v_{m-N}}$$
and from Lemma \ref{lemmacoefmap1}, $\{ v_0, v_m \}_u=c_m v_0 v_m$. Furthermore, 
$\{ v_0, v_N \}_u=c_N v_0 v_N+c_N \alpha$ and $\{ v_0, v_m \}_u=c_m v_0 v_m$, for $0<m <N$. So,  since 
%the map 
$\hat{\varphi}_1$ is a Poisson map and $\hat{\varphi}_1^* v_i=v_{i+1}$, the bracket  $\{,\}_u$ lifts to a bracket 
(\ref{poisbrack1}) of the form for the $v_i$. Moreover, by construction (\ref{KdVmap}) is a Poisson map with 
respect to $\{,\}_1$, which proves the first part of Theorem \ref{12bras}, except that it remains to show that 
the coefficients $c_k$ are given by (\ref{coefcd}), up to rescaling by an arbitrary constant. In due course we shall 
see that the latter follows from (\ref{cmeq}).

\subsubsection{Second Poisson bracket}

Now, for %every $N, M$ 
coprime $N>M$  with $N+M$ odd, 
we consider the second U-system  (\ref{2U}), 
with periodic coefficients $\beta'_{m+N}=\beta'_m$, that corresponds to the discrete Hirota reduction \eqref{2HKdV}.
The map  
$\hat{\varphi}_2:\,\mathbb{C}^{M+N-1} \rightarrow \mathbb{C}^{M+N-1}$, given by 
\begin{equation*} 
\hat{\varphi}_2: \quad (u'_0,\dots,u'_{N+M-2})\mapsto 
\left(u'_1,\dots,u'_{N+M-2},\frac{\beta'_m u'_{M} u'_{M+1} \ldots u'_{N-1}+\alpha}
{(u'_{0}u'_{1} \ldots u'_{N+M-2})(u'_{M} u'_{M+1} \ldots u'_{N-1})}\right)
\end{equation*} 
is symplectic, with the associated nondegenerate log-canonical Poisson bracket of the form 
\eqref{ubrackets}.  It turns out 
that the coefficients of this bracket 
for $\hat{\varphi}_2$ are the same as for (this is proved  in Appendix A), but to distinguish between the 
coordinates for the  two 
different U-systems we denote the bracket for $\hat{\varphi}_2$ by 
$\{ \ , \ \}_{u'}$. 
In this case, the quantities $v_m=u'_m  u'_{m+1}\dots u'_{m+N-1}$ satisfy  the KdV periodic reduction \eqref{RedkdV}. 
From the second  
$U$-system we can wite  
$$v_m=u'_m u'_{m+1} \dots u'_{m+N-1},  \  \text{for} \  0 \leq  m \leq M-1,$$
$$v_{M+k}=\frac{\beta'_{k}}{u'_{k} u'_{k+1} \dots u'_{k+M-1}} + \frac{\alpha}{u'_{k} u_{k'+1} \dots u_{k'+N-1}},  \  \text{for} \  0 \leq  k \leq N-1,$$
and by setting $u'_k u'_{k+1} \dots u'_{k+N-1}=v_k$, we derive  
\begin{eqnarray}
v_m = \begin{cases} \label{vfromu2}
u'_m u'_{m+1} \dots u'_{m+N-1},  & 0 \leq  m \leq M-1, \\ 
\frac{\beta'_{m-M}}{u'_{m-M} u'_{m-M+1} \dots u'_{m-1}}+\frac{\alpha}{v_{m-M}},  & M \leq  m \leq N+M-1.
\end{cases}
\end{eqnarray}

As before, we evaluate the Poisson brackets $\{ v_0, v_m \}_{u'}$,  for $ 0 < m \leq N+M-1.$  After some calculations 
we arrive at  
\begin{eqnarray} \label{recP2}
\{ v_0, v_m \}_{u'} = \begin{cases}
d_m v_0 v_m,  & 0 <  m \leq M-1,  \\ 
d_m v_0 v_m- d_m \alpha \frac{v_0}{v_{m-M}}-\frac{\alpha}{v_{m-M}^2} \{ v_0,v_{m-M} \}_{u'}, &M \leq  m \leq N+M-1,
\end{cases}
\end{eqnarray}
where 
\begin{eqnarray} \label{coefdm}
d_m= \begin{cases}
  \sum\limits_{i=0}^{N-1} \sum \limits_{j=m}^{m+N-1} a_{j-i},  & 0 <  m \leq M-1,  \\ 
- \sum \limits_{i=0}^{N-1} \sum \limits_{j=m-M}^{m-1} a_{j-i},   & M \leq  m \leq N+M-1. 
\end{cases}
\end{eqnarray}
Additionally, we define $d_{-m}=-d_m$, for $0\leq m \leq N+M-1$.

\begin{lemma} \label{lemmacoefmap2}
For $0<m \leq N+M-1$, the coefficients $d_m$  defined by \eqref{coefdm} with $d_m= -d_{-m}$ satisfy the equations 
as for $c_m$ in Lemma \ref{lemmacoefmap1}, that is 
\begin{equation}\label{dmeq}
d_m=-d_{N+M-m}=-d_{m-N}=-d_{m-M}.
\end{equation}
\end{lemma}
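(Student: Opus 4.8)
The plan is to prove Lemma~\ref{lemmacoefmap2} by reducing it to the already-established Lemma~\ref{lemmacoefmap1}, rather than repeating the whole argument for the second U-system from scratch. The key structural fact to exploit is that, by Proposition~\ref{kdvred} and Proposition~\ref{changevarodd}, \emph{both} U-systems \eqref{1U} and \eqref{2U} are associated with the \emph{same} KdV periodic reduction \eqref{RedkdV}, and the construction of the lifted bracket on $\mathbb{C}^{N+M}$ goes through the $v$-variables. In the first case we have $v_m = u_m u_{m+1}\cdots u_{m+M-1}$, and in the second $v_m = u'_m u'_{m+1}\cdots u'_{m+N-1}$; apart from the role of $M$ and $N$ being interchanged in the defining Laurent monomial, the two computations that produced the recurrences \eqref{recP1} and \eqref{recP2} are identical in form. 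Since (as asserted in the text, and proved in Appendix~A) the underlying log-canonical coefficients $a_k$ of $\{\,,\}_u$ and $\{\,,\}_{u'}$ coincide, the coefficients $d_m$ in \eqref{coefdm} are obtained from the formula for $c_m$ in \eqref{coefcm} by the single substitution $M \leftrightarrow N$ in the summation ranges (with the same $a_k$).

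The concrete steps I would carry out are as follows. First, I would recall that $\hat{\varphi}_2$ is a Poisson map with respect to $\{\,,\}_{u'}$, and that $\hat{\varphi}_2^* v_i = v_{i+1}$, exactly as for $\hat{\varphi}_1$; this is the only dynamical input needed. Second, applying the Poisson property to the pair $(v_0, v_m)$ and using the shift $\hat{\varphi}_2^* v_0 = v_1$, I would write down the invariance identity $\{v_1, v_{m+1}\}_{u'} = \hat{\varphi}_2^*\{v_0,v_m\}_{u'}$ and expand both sides using \eqref{vfromu2} and the log-canonical form \eqref{ubrackets}. Third, I would match coefficients: because the structure of \eqref{vfromu2} mirrors \eqref{vfromu1} with $M$ and $N$ swapped, the resulting relations on the $d_m$ are literally the relations on the $c_m$ with $M$ and $N$ swapped, namely $d_m = -d_{N+M-m} = -d_{m-N} = -d_{m-M}$ --- and since this triple of relations is symmetric under $M \leftrightarrow N$, it is \emph{identical} to \eqref{cmeq}. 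Hence \eqref{dmeq} follows. Alternatively, and perhaps cleanest, one notes that the three relations in \eqref{cmeq} together with $c_m = -c_{-m}$ and the normalisation $c_1$ (or any single nonzero value) already determine the $c_k$ uniquely up to scale via the closed form \eqref{coefcd}--\eqref{hmod}; the same closed-form solution satisfies the $M\leftrightarrow N$-swapped system because $h_k = k/M \bmod (N+M)$ and $k/N \bmod (N+M)$ generate the same set of relations once one observes $M N \equiv -M^2 \equiv \dots$ modulo $N+M$, so $d_k = c_k$.

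The main obstacle --- and the only place where genuine care is required --- is the bookkeeping in Step~2: one must verify that when $m$ crosses the threshold $M$ (so that the second branch of \eqref{vfromu2}, involving $\beta'_{m-M}$ and $\alpha/v_{m-M}$, comes into play), the "extra" terms proportional to $\alpha v_0/v_{m-M}$ and $\alpha v_0 / v_{m-M}^2$ assemble correctly, using the already-derived lower-order brackets, into the clean log-canonical shape and do not obstruct the shift-invariance. This is the analogue of the argument given in the text just after \eqref{coefcm} for the first bracket (the cancellation using $c_m\alpha v_0/v_{m-N} + c_{m-N}\alpha v_0/v_{m-N} = 0$ when $c_{m-N} = -c_m$). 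I expect this to go through verbatim with $N$ replaced by $M$ in the relevant index shifts, but it should be checked that the period conditions $\beta'_{m+N} = \beta'_m$ are consistent with the resulting relations (in particular that the relation $d_m = -d_{m-N}$ is compatible with $\beta'$ having period $N$, which is forced precisely because a shift by $N$ must map the bracket to itself). Consequently I would relegate the detailed index-chasing to Appendix~\ref{appendix1}, parallel to the proof of Lemma~\ref{lemmacoefmap1}, and in the main text simply indicate the $M \leftrightarrow N$ symmetry that makes \eqref{dmeq} coincide with \eqref{cmeq}.
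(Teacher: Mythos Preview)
Your overall strategy --- exploit the Poisson property of $\hat\varphi_2$, expand $\{v_0,v_m\}_{u'}$ via \eqref{vfromu2}, and match coefficients --- is indeed what the paper does in Appendix~\ref{appendix2}. However, the claim that the computation ``goes through verbatim with $N$ replaced by $M$'' is not correct, and this is where your proposal has a genuine gap.

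The asymmetry is forced by the standing hypothesis $N>M$. In \eqref{vfromu1} the recursive branch (second line) is used only for the \emph{short} range $N\le m\le N+M-1$ of length $M$, so the recursion \eqref{recP1} is applied at most once and the case analysis of Appendix~\ref{appendix1} covers every $m$. By contrast, in \eqref{vfromu2} the recursive branch occupies the \emph{long} range $M\le m\le N+M-1$ of length $N$, so \eqref{recP2} must be iterated up to $\lfloor(N+M-1)/M\rfloor$ times; this is exactly why \eqref{poisbrack2} has special terms at all multiples $j-i=kM$, not just one. The Poisson-map argument alone, carried out as in Appendix~\ref{appendix1}, then only yields $d_m=-d_{m+M}$ on the restricted range $0<m<2M$ (see \eqref{final02M} in the paper). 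To propagate this to all of $[1,N+M-1]$ the paper needs an \emph{additional} ingredient that has no counterpart in the proof of Lemma~\ref{lemmacoefmap1}: it invokes the relation $a_k=a_{k+2M}$ for the underlying log-canonical coefficients (relation~{\bf III}, equation \eqref{hglipIII} of Appendix~\ref{appa}) to deduce directly from the defining sum \eqref{coefdm} that $d_m=d_{m+2M}$, as in \eqref{dmd2M}. Your sketch does not anticipate this step.

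Your alternative route --- appealing to the closed form \eqref{coefcd}--\eqref{hmod} --- is circular: that closed form is established \emph{after} Lemmas~\ref{lemmacoefmap1} and~\ref{lemmacoefmap2}, precisely by showing that any sequence satisfying \eqref{cmeq} must have that shape. You cannot assume $d_k$ has the form \eqref{coefcd} in order to prove \eqref{dmeq}.
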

The proof of this lemma appears in Appendix \ref{appendix2}.  

Similarly to the 
result of Lemma \ref{lemmacoefmap1}, the latter result allows the brackets $\{v_i,v_j\}_{u'}$ to be written 
as closed form expressions in terms of $v_i$ only, determined by the coefficients $d_m$,  so that the bracket $\{,\}_{u'}$ lifts to a bracket 
on $\mathbb{C}^{N+M}$, denoted 
$\{,\}_2$ and  
 given by
(\ref{poisbrack2}).

To see this, observe that 
for $M<m < 2M$, we have $0<m-M <M$, so from \eqref{recP2} 
$$\{ v_0, v_m \}_{u'}=d_m v_0 v_m- d_m \alpha \frac{v_0}{v_{m-M}}-d_{m-M} \alpha \frac{v_0}{v_{m-M}}$$
and from lemma \ref{lemmacoefmap2} we derive $\{ v_0, v_m \}_{u'}=d_m v_0 v_m.$ 
Similarly, by induction we can show that  
\begin{equation*} \label{prpp21}
\{ v_0, v_m \}_{u'}=d_m v_0 v_m, 
\end{equation*}
for any $m$ with $0<kM<m < (k+1)M<N+M$.  

Moreover, \eqref{recP2} implies  $$\{ v_0, v_M \}_{u'}=d_M v_0 v_M- d_M \alpha.$$ So we obtain 
$$ \{ v_0, v_{2M} \}_{u'}=d_{2M} v_0 v_{2M}- d_{2M} \alpha \frac{v_0}{v_M}-\frac{\alpha}{v^2_M}(d_M v_0 v_M-d_M \alpha)
= d_{2M} v_0 v_{2M} + d_M \frac{\alpha^2}{v^2_M},$$ 
where in the last equality we used that $d_M=-d_{2M}$ from Lemma \ref{lemmacoefmap2}, and by induction 
we can show that for $k>1$, 
 $$ \{ v_0, v_{kM} \}_{u'}= d_{kM} v_0 v_{k M} +\frac{d_M  (-\alpha)^k }{v_M^2 v_{2M}^2 \dots v_{(k-1)M}^2}.$$ 
Finally, we have obtained explicit expressions for 
$\{ v_0, v_m \}_{u'}$,  in terms of the $v_i$ only, for $0\leq m\leq N+M-1$, and then the fact that   
$\hat{\varphi}_2$ is a Poisson map with $\hat{\varphi}_2^*v_i=v_{i+1}$ yields the required formulae for 
the bracket (\ref{poisbrack2}), and completes the next 
part of the proof of Theorem \ref{12bras}, apart from showing that (up to an overall constant), 
the coefficients $d_k$ must have the form (\ref{coefcd}). Furthermore, it remains to prove that the 
brackets $\{,\}_1$  and $\{,\}_2$ are compatible. These points are addressed in the next subsection.

%%%%%%%%%%%%%%%%%%%%%%%%%%%%%%%%%%%%%%%%%%%%%%%%%
%%%%%%%%%%%
%%%%%%%%%%%

%%%%%%%%%%%
%%%%%%%%%%%
%%%%%%%%%%%%%%%%%%%%%%%%%%%%%%%%%%%%%%%%%%%%%%%%%%%%%

\subsubsection{Coefficients and compatibility} 

So far we have proved that the map $\phi$ corresponding to a lattice KdV reduction, 
given by (\ref{kdvphi}), is Poisson with  respect to the Poisson brackets $\{ \ , \ \}_1$ and $\{ \ , \ \}_2$ 
in (\ref{poisbrack1}) and (\ref{poisbrack2}),  
 with 
 coefficients  $c_m=-c_{-m}$, $d_m=-d_{-m}$ defined by  \eqref{coefcm} 
and  \eqref{coefdm}, respectively, and these two sets of  coefficients satisfy the same 
conditions, namely \eqref{cmeq}, or equivalently \eqref{dmeq}.
We now show that 
the latter conditions uniquely determine the coefficients in the form 
(\ref{coefcd}) with (\ref{hmod}),  up to an overall constant.

\begin{lemma}
If $c_m$ satisfies the conditions  \eqref{cmeq} and $c_m=-c_{-m}$ for $0<m \leq N+M-1$,   
then $c_k=(-1)^{h_k} c$, for $k=1,\dots, N+M-1$, where $c$ is an arbitrary constant and 
$h_k$ is given by (\ref{hmod}). \end{lemma}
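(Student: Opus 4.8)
The plan is to distill from \eqref{cmeq} the one fact that drives everything --- that raising the index by $M$ reverses the sign, $c_{m+M}=-c_m$ --- and then to propagate it around the residues modulo $N+M$ using coprimality. First I would record that $\gcd(M,N+M)=\gcd(M,N)=1$, so multiplication by $M$ permutes $\mathbb{Z}/(N+M)\mathbb{Z}$; hence for each $k\in\{1,\dots,N+M-1\}$ there is a unique $h_k\in\{1,\dots,N+M-1\}$ with $h_kM\equiv k\pmod{N+M}$, which is precisely \eqref{hmod}. Next I would verify that the hypotheses make the sequence (extended by $c_{-m}=-c_m$) periodic with period $N+M$: combining $c_m=-c_{m-N}$ with $c_{m-N}=-c_{m-N-M}$ gives $c_m=c_{m-(N+M)}$.

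Granting periodicity, I would iterate the sign reversal along the progression $M,2M,3M,\dots$, reducing modulo $N+M$ at each step, so that after $j-1$ applications of $c_{m+M}=-c_m$ one reaches $c_{jM \bmod (N+M)}=(-1)^{j-1}c_M$. Coprimality guarantees this progression sweeps out every nonzero residue exactly once before returning to $0$ at step $N+M$, so the chain never lands on the index $0$, where the relation would fail. Setting $j=h_k$ then gives $c_k=c_{h_kM \bmod (N+M)}=(-1)^{h_k-1}c_M$ for all $k$, and putting $c:=-c_M$ yields $c_k=(-1)^{h_k}c$. The constant $c$ is genuinely free: the conditions are linear homogeneous and the argument shows their solution space is one-dimensional, while conversely the sequence $\big((-1)^{h_k}\big)$ does satisfy all of them --- the reflection identity $c_k=-c_{N+M-k}$ following from $h_{N+M-k}=N+M-h_k$ and the oddness of $N+M$.

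I expect the only real obstacle to be the boundary bookkeeping. The relations $c_m=-c_{m-M}$ and $c_m=-c_{m-N}$ in \eqref{cmeq} must be used only where the shifted index is an honest element of $\{1,\dots,N+M-1\}$ (when a downward shift would reach or pass $0$ one instead invokes the equivalent upward shift by $+N$ or $+M$, legitimate modulo $N+M$); applied naively across the full range together with $c_{-m}=-c_m$ they would be too rigid, forcing a spurious value $c_0$ and, via the chain, $c=0$. Cataloguing precisely which instances of \eqref{cmeq} are available, and checking that the periodicity derivation and the chain of sign flips never stumble over the forbidden index, is the careful part; once that is in hand the result follows as sketched.
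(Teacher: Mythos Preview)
Your proposal is correct and essentially the same as the paper's proof: both set $c=-c_M$, propagate the sign-flip $c_{m+M}=-c_m$ through the residues, and read off the exponent $h_k$ from $k\equiv h_kM\pmod{N+M}$. The only cosmetic difference is that the paper, instead of first establishing $(N+M)$-periodicity and working modulo $N+M$, keeps both shifts $c_m=-c_{m-M}$ and $c_m=-c_{m-N}$ in play and writes $k=(h_k-\ell)M-\ell N$ directly, which amounts to the same thing since a $-N$ shift equals a $+M$ shift mod $N+M$.
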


\begin{proof}
We set $c_{M}=-c$, where $c$ is an arbitrary constant. From the conditions \eqref{cmeq}, we have that  
$c_{(i+1) M}=-c_{i M}$. Hence, $c_{i M}=(-1)^{i}c$, for any integer $i \geq1$, such that 
$1\leq i M \leq N+M-1$.  
Furthermore, from  \eqref{cmeq} we derive that 
$c_{iM-(j-1) N}=-c_{i M-j N}$. Therefore,
\begin{equation} \label{auxcm}
c_{i M-j N}= (-1)^{j} c_{i M}=(-1)^{j} (-1)^{i}c=(-1)^{i+j} c,
\end{equation}
for $j\geq 1$, with $1\leq i M-j N \leq N+M-1$.
Now, let us consider an integer $k \in [1,N+M-1]$ and 
$h_k=\frac{k}{M} \bmod (N+M)$. That is 
$M h_k-k=\ell (N+M)$, for some integer $\ell$. So,  $k= (h_k-\ell)M-\ell N$ and from \eqref{auxcm} we conclude that 
$c_k=(-1)^{h_k-\ell+\ell} c=(-1)^{h_k} c$. 
\end{proof} 

The coefficients $d_m$ of the second Poisson satisfy the same conditions (Lemma \ref{lemmacoefmap2}). Hence, 
 $d_k=(-1)^{h_k} c$, for $k=1,\dots, N+M-1$, where $d$ is an arbitrary constant and 
$h_k$ is given by (\ref{hmod}). By choosing $c=d=1$, we derive the Poisson brackets of Theorem \ref{12bras}. 

It remains to show the compatibility of the two Poisson brackets. 
To see this, consider 
$$\{ ., . \}_3 =\{.,.\}_1-\{.,.\}_2.$$
Then, for $0\leq i<j \leq N+M-1$, from  (\ref{poisbrack1}) and (\ref{poisbrack2}) together with  
(\ref{coefcd}) we derive that 
\begin{eqnarray}
\{ v_i, v_j \}_3 &=& \begin{cases}
(-\alpha)^k \prod_{l=1}^{k-1} v_{l M+i}^{-2}, &j-i = k M,  \label{poi3} \\ 
\alpha, &j-i = N, \\ 
0, & otherwise.
\end{cases} 
\end{eqnarray}
This is a Poisson bracket that coincides (under the transformation $v_i \mapsto \frac{1}{v_1}$ and by inserting the parameter $\alpha$) 
with the one that is derived from the Lagrangian structure of the lattice KdV equation 
in \cite{HKQT}.  %[TranKampQuispel]. 
Therefore, any linear combination $\lambda_1\{.,.\}_1+\lambda_2\{.,.\}_2$ satisfies the Jacobi identity and so defines a Poisson bracket.

An example of the aforementioned bi-Hamiltonian formalism appears in \cite{HKW}, where the case 
$N=3$, $M=2$  is presented in detail  
and Liouville integrability is proved for the corresponding lattice KdV reduction and U-systems.

\section{Liouville integrability}
In this section we will prove the Liouville integrability of the $(N,M)$ KdV periodic reductions in the case of coprime  $N, M$ with $N>M$ and $N+M$ odd.

\subsection{Monodromy matrix refactorization}
  
 As stated in section \ref{subsecLax}, the integrals of the Poisson map \eqref{KdVmap} are derived from the trace of the monodromy matrix  \eqref{monodr1}.
Let us now consider the matrix 
\begin{equation} \label{laxg}
\tilde{{\bf L}}(g,\lambda)=\left(
\begin{array}{cc}
g & \ \lambda \\
 1 & 0
\end{array}
\right).
\end{equation}
We notice that
$$\mathbf{M}(v_j)\mathbf{L}(v_i,v_j)=\tilde{{\bf L}}(v_j,\lambda-\alpha)\tilde{{\bf L}}(v_i,\lambda)$$
and 
$$\mathbf{L}({v}_i,{v}_j)=\tilde{{\bf L}}(v_{i}-\alpha/v_j,\lambda).$$
Therefore, the monodromy matrix $\mathcal{M}=\mathcal{M}(\lambda )$ in \eqref{monodr1} can be rewritten as 
\small
\begin{equation*} \label{monodr2}
\mathcal{M}=\prod_{i=0}^{M-1}\tilde{{\bf L}}(v_{r_i+N},\lambda-\alpha)\tilde{{\bf L}}(v_{r_i+N-M},\lambda)
\tilde{{\bf L}}(v_{r_i+N-2M}- 
{\alpha}/{v_{r_i+N-M}},\lambda)\cdots \tilde{{\bf L}}(v_{r_{i+1}}-{\alpha}/{v_{r_{i+1}+M}},\lambda).
\end{equation*}
\normalsize
This form of the monodromy matrix motivates us to consider a new set of variables that we present below. 

\subsection{A new set of coordinates}  \label{sectiongvar}

We consider the variables $g_i$, $i=0,\dots v_{N+M-1}$, defined by 
\begin{eqnarray}
g_i &=& \begin{cases}
v_i,    \ \  \ \text{for}  \ N-M \leq i \leq N+M-1, & \label{gvar} \\ 
v_i-\frac{\alpha}{v_{i+M}}, \  \ \text{for} \ 0 \leq i \leq N-M-1.& 
\end{cases} 
\end{eqnarray}
In these new variables the monodoromy matrix \eqref{monodr1} can be written as 
\begin{equation} \label{monodrg}
\mathcal{M}(\lambda)=\prod_{i=0}^{M-1}\tilde{{\bf L}}(g_{r_i+N},\lambda-\alpha)\tilde{{\bf L}}(g_{r_i+N-M},\lambda)\tilde{{\bf L}}(g_{r_i+N-2M},\lambda)
\cdots \tilde{{\bf L}}(g_{r_{i+1}},\lambda) 
\end{equation}
and the $(N,M)$ KdV periodic reduction as  
\begin{equation*} 
\tilde{\phi}=f \circ \phi \circ f^{-1},
\end{equation*}
where $\phi$ is the map  \eqref{kdvphi} and  $f:\mathbb{C}^{N+M} \rightarrow \mathbb{C}^{N+M}$ is the 
birational change of coordinates  
$$f(v_0,v_1,\dots,v_{N+M-1}):=(g_0,g_1,\dots,g_{N+M-1}).$$
The explicit form of the map $\tilde{\phi}$ is  
\begin{eqnarray}
&& \tilde{\phi}(g_0,\dots,g_{N+M-1}) = \label{mapg} \\ 
&& \left(g_1,g_2,\dots,g_{N-M-1},g_{N-M}-\frac{\alpha}{g_N},g_{N-M+1},g_{N-M+2},\dots, g_{N+M-1},g_0+\frac{\alpha}{g_N}\right). \nonumber
\end{eqnarray}

We can also express the three invariant (compatible) Poisson brackets of the $(N,M)$ periodic reduction that were presented in the previous section in terms of the 
$g$-variables. 
The pushforward of the Poisson bracket \eqref{poi3} by the function $f$ implies the following Poisson bracket in $g$-variables 
(we use the same symbol $\{ \ , \ \}_3$  for this bracket as well)
\begin{eqnarray} \label{poissong3}
\{ g_i, g_j \}_3 = \begin{cases}
-\alpha, &j-i = M, \\ 
\alpha, &j-i = N, \\ 
0, & otherwise,
\end{cases} 
\end{eqnarray}
for $0\leq i<j \leq N+M-1$. 
This bracket is invariant under the map $\tilde{\phi}$.

\begin{remark}
The g-variables that we introduced here are not the same with the g-variables that appear in 
\cite{HKQT} 
%[{\it{Integrability of reductions of the discrete KdV and potential KdV equations}}] 
for the $(N,1)$ periodic reductions of the lattice KdV equation.
\end{remark}

\subsection{Connection with the dressing chain and integrability}
We now consider, for any $K$ odd, the system of ordinary differential equations 
\begin{equation} \label{drsystem}
\dot{h}_i=-h_i(h_{i+1}-h_{i+2}+h_{i+3}-\dots -h_{i+K-1})+ b_i-b_{i+1},
\end{equation}
where the indices are considered modulo $K$, labelled 
$1,\ldots, K$. 
%using the index $K$ instead of the zero one. 
This system  was  
introduced by Veselov and Shabat in \cite{VesShab} from the  
dressing chain 
for Schr\"{o}dinger operators, and they  proved that this  is a Liouville integrable Hamiltonian system. 

\begin{proposition} 
The integrals of the dressing chain \eqref{drsystem} are given by the trace of the monodromy matrix   
\begin{equation} \label{dressmon}
\mathcal{K}(\lambda)=\tilde{{\bf L}}(h_K,\zeta_{K})\tilde{{\bf L}}(h_{K-1},\zeta_{K-1})\cdots \tilde{{\bf L}}(h_1,\zeta_{1}),
\end{equation}
where $\tilde{{\bf L}}$ is the matrix \eqref{laxg} and $\zeta_i=b_i-\lambda$.
\end{proposition}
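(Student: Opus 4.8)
The plan is to verify directly that the trace of the monodromy matrix $\mathcal{K}(\lambda)$ in \eqref{dressmon} is invariant under the flow \eqref{drsystem}, by exhibiting a discrete zero-curvature (isospectral) structure for the dressing chain, in complete analogy with the monodromy matrix construction for the lattice KdV reductions. The key observation is that the dressing chain \eqref{drsystem} is, in suitable variables, itself a reduction (or continuous-time analogue) whose Lax equation takes the form $\dot{\mathcal{K}} = [\mathcal{K}, A]$ for an appropriate auxiliary matrix $A$ built from the $\tilde{{\bf L}}$ factors; such a commutator form immediately gives $\frac{d}{dt}\,\mathrm{tr}\,\mathcal{K}^n=0$ for all $n$, hence isospectrality of the spectral curve $\det(\mathcal{K}(\lambda)-\nu\mathbf{1})=0$, and the coefficients of this curve (polynomials in $\lambda$ whose coefficients are functions of the $h_i$) furnish the integrals.

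First I would recall the factorization identity already used in the excerpt, namely that each $\tilde{{\bf L}}(h_i,\zeta_i)$ is the elementary matrix \eqref{laxg} with $\zeta_i=b_i-\lambda$, and compute $\dot{\tilde{{\bf L}}}(h_i,\zeta_i)$, which is the constant matrix $\dot h_i\,E_{11}$ since only the top-left entry depends on time. Next I would seek matrices $A_i=A_i(\lambda)$, one at each site, such that the "local" compatibility relation $\dot{\tilde{{\bf L}}}(h_i,\zeta_i) = A_{i+1}\tilde{{\bf L}}(h_i,\zeta_i) - \tilde{{\bf L}}(h_i,\zeta_i)A_i$ holds precisely when \eqref{drsystem} is satisfied; this is the standard auxiliary-linear-problem ansatz for chains. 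Telescoping this relation around the full product \eqref{dressmon} (using periodicity of indices mod $K$, so $A_{K+1}=A_1$) then yields $\dot{\mathcal{K}} = A_1\mathcal{K} - \mathcal{K}A_1 = [A_1,\mathcal{K}]$, from which $\mathrm{tr}\,\mathcal{K}^n$ is conserved. Expanding $\det(\mathcal{K}(\lambda)-\nu\mathbf{1})$ in powers of $\lambda$ and $\nu$ then extracts the individual first integrals, and one notes there are enough of them (roughly $(K-1)/2$ functionally independent ones, consistent with $K$ odd) for Liouville integrability in the sense of \cite{VesShab}.

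The main obstacle is pinning down the correct auxiliary matrices $A_i$ so that the local compatibility relation reproduces exactly the cubic nonlinearity $-h_i(h_{i+1}-h_{i+2}+\cdots\pm h_{i+K-1})$ together with the linear term $b_i-b_{i+1}$; the alternating-sum structure of \eqref{drsystem} strongly suggests that $A_i$ should itself be a partial product of the $\tilde{{\bf L}}$ factors (truncated at site $i$) together with a diagonal piece accounting for the $b_i$-shifts, and verifying that the resulting commutator collapses to the stated right-hand side is the one genuinely computational point. An alternative, and perhaps cleaner, route is to invoke the original result of Veselov and Shabat \cite{VesShab}: they already established Liouville integrability of \eqref{drsystem} via the dressing chain's own spectral data, so it suffices to identify their spectral data with the characteristic polynomial of $\mathcal{K}(\lambda)$ here — i.e. to check that the product \eqref{dressmon} of elementary matrices reproduces (up to conjugation and relabelling $\lambda\mapsto\lambda$, $b_i\mapsto b_i$) the monodromy of the second-order difference/differential operator underlying the Schr\"odinger dressing chain. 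I would present the direct zero-curvature verification as the main argument, since it is self-contained and parallels the KdV monodromy computation used earlier in the paper, and mention the identification with \cite{VesShab} as the conceptual origin of the integrals.
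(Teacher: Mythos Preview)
Your main proposal --- finding local auxiliary matrices $A_i$ so that $\dot{\tilde{{\bf L}}}(h_i,\zeta_i) = A_{i+1}\tilde{{\bf L}}(h_i,\zeta_i) - \tilde{{\bf L}}(h_i,\zeta_i)A_i$ reproduces \eqref{drsystem} --- is not what the paper does, and you correctly identify its weak point: you have not actually produced the $A_i$. Note that $\dot h_i$ involves the full alternating sum $h_{i+1}-h_{i+2}+\cdots\pm h_{i+K-1}$, so any such $A_i$ must be genuinely nonlocal; your guess that it is a ``partial product of the $\tilde{{\bf L}}$ factors'' is plausible but unverified, and without it the argument is incomplete.

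The paper takes your ``alternative route'' instead, and makes it precise in a way you do not. Rather than proving isospectrality of $\mathcal{K}(\lambda)$ directly, the paper computes $\mathrm{tr}\,\mathcal{K}(\lambda)$ explicitly via a closed combinatorial formula,
\[
\mathrm{tr}\,\mathcal{K}(\lambda)=\prod_{i=1}^{K}\Big(1+\zeta_{i+1}\,\partial_{h_i}\partial_{h_{i+1}}\Big)\prod_{i=1}^{K}h_i,
\]
borrowed from a trace identity in \cite{honeward} (applied after a simple conjugation of each factor). The point is that this expression coincides with the generating function for the integrals that Veselov and Shabat obtained from their own (different) monodromy matrix in \cite{VesShab}. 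So the invariance of $\mathrm{tr}\,\mathcal{K}(\lambda)$ under \eqref{drsystem} is not re-proved here at all --- it is inherited from \cite{VesShab} once the two trace formulae are matched. Your alternative sketch gestures at this identification but does not say how to make it; the trace identity is the missing ingredient.
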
 

\begin{proof}
The trace of $\mathcal{K}(\lambda)$ is 
\begin{equation} \label{trdr}
\mathrm{tr}\mathcal{K}(\lambda)=\prod_{i=1}^{K}\Big(1+\zeta_{i+1} \frac{\partial^2}{\partial_{h_i} \partial_{h_{i+1}}}\Big)\prod_{i=1}^{K}h_i, %\ \text{with} \ h_{K+1}=h_1, \ \zeta_{K+1}=\zeta_1 
\end{equation}
and this formula coincides with the trace formula of the (different) monodromy that is given by Veselov, Shabat in \cite{VesShab}. Equation \eqref{trdr} is obtained by a corresponding trace formula in Lemma 4.3 of \cite{honeward} for the product 
$\prod_{i=1}^{p}T_i$, where 
 $T_i=(P L(g_i,\lambda_i) P^{-1})^T$, with  
$P =\left(
\begin{array}{cc}
0 & 1 \\
 1 & 0
\end{array}
\right).$

\end{proof}

According to \cite{VesShab}, the integrals of the dressing chain are pairwise in involution with respect to the Poisson bracket 
\begin{eqnarray} \label{Poissondr}
\{ h_i, h_j \} = \begin{cases}
-1, &j-i = 1, \\ 
\ 1, &j-i = K-1, \\ 
0, & otherwise,
\end{cases} 
\end{eqnarray}
for $0\leq i<j \leq K$. 
This Poisson structure has rank $K-1$ and the function $h_1+h_2+ \dots + h_K$ (the coefficient of the highest degree term of 
the polynomial \eqref{trdr}) is a Casimir function.

Next, we set $K=N+M$ and we change the $h_i$ variables of the dressing chain to $g_j$ variables by setting  
\begin{equation} \label{htog}
(h_1,h_2,h_3\ldots, h_{N+M})=(\bf{g_0,g_1,g_2},\ldots, \bf{g_{r_{M-1}}}), 
\end{equation}
where
$${\bf{g_i}}=(g_{r_{M-i}},g_{r_{M-i}+M},g_{r_{M-i}+2M},g_{r_{M-i}+3M},...,g_{r_{M-i-1}+N}).$$

\begin{lemma} \label{IntdrtoKdV}
Under the change of variables \eqref{htog}, the integrals of the map \eqref{mapg} coincide with the integrals of the dressing chain \eqref{drsystem} for $K=N+M$ and a particular 
choice of the parameters $b_1,\dots,b_K$. 
\end{lemma}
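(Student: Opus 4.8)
The plan is to show that the two monodromy matrices coincide as matrix polynomials in $\lambda$, after the change of variables \eqref{htog} and a suitable identification of the parameters $b_i$; then taking traces immediately gives the equality of the generating functions for the integrals. First I would observe that both \eqref{monodrg} and \eqref{dressmon} are ordered products of factors of the form $\tilde{{\bf L}}(\cdot,\cdot)$, so it suffices to check that the two lists of factors agree factor-by-factor, up to the fact that the spectral parameter enters the KdV monodromy only through the two shifted values $\lambda$ and $\lambda-\alpha$, whereas in the dressing chain each factor carries its own $\zeta_i=b_i-\lambda$. Thus the key bookkeeping is: (a) the ordering of the $g$-indices appearing in \eqref{monodrg}, read from left to right, must be exactly $g_{r_{M-1}},\ldots,g_{r_1},g_{r_0},\ldots$ reversed so as to match the reversed order in \eqref{dressmon}, i.e. the block structure $({\bf g_0},{\bf g_1},\ldots,{\bf g_{r_{M-1}}})$ in \eqref{htog} is precisely engineered to reproduce the product in \eqref{monodrg}; and (b) the spectral-parameter assignments match, meaning $b_i-\lambda$ equals $\lambda$ for the ``interior'' factors $\tilde{{\bf L}}(g_{r_i+N-jM},\lambda)$ with $j\ge 1$ and equals $\lambda-\alpha$ for the leading factor $\tilde{{\bf L}}(g_{r_i+N},\lambda-\alpha)$ of each of the $M$ blocks.

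Concretely, I would proceed as follows. Step one: unwind the definition of $r_k=kN\bmod M$ and verify that the sequence of indices $r_i+N, r_i+N-M, r_i+N-2M,\ldots, r_{i+1}$, as $i$ runs over $0,\ldots,M-1$, is a permutation of $\{0,1,\ldots,N+M-1\}$ that coincides with the index list in \eqref{htog}; this is the combinatorial heart and is essentially the statement that iterating ``subtract $M$, and jump by $N$ at the block boundaries'' cycles through all residues, which uses $\gcd(N,M)=1$. Step two: use the two identities recorded just before \eqref{sectiongvar}, namely $\mathbf{M}(v_j)\mathbf{L}(v_i,v_j)=\tilde{{\bf L}}(v_j,\lambda-\alpha)\tilde{{\bf L}}(v_i,\lambda)$ and $\mathbf{L}(v_i,v_j)=\tilde{{\bf L}}(v_i-\alpha/v_j,\lambda)$, together with the definition \eqref{gvar} of the $g_i$, to rewrite $\mathcal{M}(\lambda)$ exactly in the form \eqref{monodrg}; this was already sketched in the text, so I would just confirm that the ``$-\alpha$'' shift lands on precisely the $M$ leading factors. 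Step three: set $K=N+M$, and read off from the matching of $\zeta_i=b_i-\lambda$ against the values $\lambda$ or $\lambda-\alpha$ what the constants $b_i$ must be — they take just two values differing by $\alpha$, distributed according to which slots are ``$\mathbf{M}$-slots''. Step four: since $\tilde{{\bf L}}(g,\mu)$ depends only on $g$ and $\mu$, the factor-by-factor agreement from Steps one--three shows $\mathcal{K}(\lambda)=\mathcal{M}(\lambda)$ after \eqref{htog}, hence $\operatorname{tr}\mathcal{K}(\lambda)=\operatorname{tr}\mathcal{M}(\lambda)$ identically in $\lambda$; expanding in powers of $\lambda$ identifies the integrals. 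Finally, invariance of these integrals under $\tilde\phi$ (equivalently under $\phi$) is the Corollary already established, so nothing further is needed there.

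The main obstacle I anticipate is Step one: getting the index combinatorics exactly right, in particular checking that the grouping into the $M$ blocks ${\bf g_i}$ in \eqref{htog}, with the block lengths dictated by how the arithmetic progressions $r_i+N-jM$ terminate at $r_{i+1}$, genuinely reproduces the ordered product \eqref{monodrg} with no off-by-one discrepancy at the block seams, and that the cyclic labelling mod $K$ in the dressing chain \eqref{drsystem} is consistent with the cyclic structure of the monodromy product (which is only defined up to conjugation, i.e. up to cyclic rotation of the factors). I would handle the latter by noting that traces are conjugation-invariant, so it is enough to match the two products up to an overall cyclic permutation of factors, which relaxes the bookkeeping. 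A secondary, purely mechanical point is to pin down the parameters $b_i$ explicitly and confirm they are consistent (i.e. that the same slot never gets assigned both values); this follows once Step one fixes which slots are $\mathbf{M}$-slots, since those are exactly the slots indexed by $r_i+N$ for $i=0,\ldots,M-1$.
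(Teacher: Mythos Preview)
Your overall strategy---match the two monodromy matrices factor by factor via the change of variables \eqref{htog}, then take traces---is exactly the paper's approach, and your Steps one and two (the index combinatorics and the rewriting of $\mathcal{M}(\lambda)$ in the form \eqref{monodrg}) are fine.

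However, Step three has a concrete error. You propose to match $\zeta_i=b_i-\lambda$ against the values $\lambda$ and $\lambda-\alpha$ appearing in \eqref{monodrg}. But $b_i-\lambda=\lambda$ has no constant solution for $b_i$, so you cannot get $\mathcal{K}(\lambda)=\mathcal{M}(\lambda)$ on the nose. The correct identification is $\mathcal{M}(\lambda)=\mathcal{K}(-\lambda)$: evaluating the dressing chain monodromy at $-\lambda$ gives $\zeta_i=b_i+\lambda$, and then the interior factors force $b_i=0$ while the $M$ leading (``$\mathbf{M}$-slot'') factors force $b_i=-\alpha$. This is precisely the parameter choice the paper makes. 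Consequently $\operatorname{tr}\mathcal{M}(\lambda)=\operatorname{tr}\mathcal{K}(-\lambda)$, so the coefficients of the two trace polynomials agree only up to alternating signs, not literally; the integrals therefore ``coincide up to sign,'' which is all that is needed. Once you insert this sign flip, your argument goes through and matches the paper's (very brief) proof.
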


\begin{proof}
We consider $b_i=-\alpha$, if $h_i=g_{r_{i}+N}$ according to \eqref{htog}, and the rest of the parameters $b_j$ 
being zero. 
Then, by comparing \eqref{monodrg} with \eqref{dressmon} under \eqref{htog} we conclude that 
$\mathcal{M}(\lambda)=\mathcal{K}(-\lambda)$. Therefore, the coefficients of the polynomials 
$\mathrm{tr}\mathcal{M}_g(\lambda)$ and $\mathrm{tr}\mathcal{K}(\lambda)$ coincide up to a sign. 
\end{proof}

Now, we can  prove the complete integrability of the map \eqref{mapg} and subsequently of the lattice KdV periodic reductions in the odd case. 
\begin{theorem}
For any coprime $N, M$, with $N> M>1$ and $N + M$ odd, the $(N,M)$ periodic reduction of lattice KdV given 
by the map $\phi$ in (\ref{kdvphi})  
%$\phi_v:\mathbb{C}^{N+M} \rightarrow  \mathbb{C}^{N+M}$ 
is Liouville integrable.
\end{theorem}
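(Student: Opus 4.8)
The plan is to transport the Liouville integrability of the Veselov--Shabat dressing chain \eqref{drsystem} with $K=N+M$ across the explicit changes of variables built above. Since $\phi$ is conjugate to the map $\tilde\phi$ of \eqref{mapg} through the birational map $f$, and $\tilde\phi$ preserves the Poisson bracket $\{\,,\,\}_3$ of \eqref{poissong3} (see Section~\ref{sectiongvar}, where it is obtained as the pushforward by $f$ of the bracket \eqref{poi3}, itself equal to $\{\,,\,\}_1-\{\,,\,\}_2$ and hence Poisson by the compatibility in Theorem~\ref{12bras}), it suffices to prove that $\tilde\phi$ is Liouville integrable for $\{\,,\,\}_3$. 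As will emerge below, this bracket has rank $N+M-1$ with the single Casimir $g_0+g_1+\cdots+g_{N+M-1}$ (note that the two terms $\pm\alpha/g_N$ in \eqref{mapg} cancel in this sum, so the function is $\tilde\phi$-invariant). Because $N+M$ is odd the rank $N+M-1=2d$ is even, with $d=(N+M-1)/2$, so Liouville integrability amounts to exhibiting $d$ functionally independent, pairwise-commuting integrals of $\tilde\phi$.

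These integrals are supplied by the monodromy matrix. By the corollary in Section~\ref{subsecLax} the spectral curve \eqref{spectral} is $\phi$-invariant, so the coefficients of $\mathrm{tr}\,\mathcal{M}(\lambda)$, written in the $g$-variables as in \eqref{monodrg}, are integrals of $\tilde\phi$. By Lemma~\ref{IntdrtoKdV}, under the relabelling \eqref{htog} these coefficients agree, up to sign, with the coefficients of $\mathrm{tr}\,\mathcal{K}(\lambda)$ for the dressing chain with $K=N+M$ and the parameters $b_i$ prescribed there, and by the trace formula established above these are exactly the integrals of \eqref{drsystem}. Veselov and Shabat proved that, for $K$ odd, these integrals are functionally independent and pairwise in involution with respect to the bracket \eqref{Poissondr}, giving $(K-1)/2=d$ of them on the generic symplectic leaf together with the Casimir $h_1+\cdots+h_K$ --- precisely the count required. (This is where the hypothesis that $N+M$ is odd is used, since it forces $K=N+M$ odd.)

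The one point that needs genuine verification is that \eqref{htog} intertwines \eqref{Poissondr} with a nonzero constant multiple of $\{\,,\,\}_3$. This is an index computation: in \eqref{Poissondr} the only nonvanishing brackets are between cyclically adjacent $h$-indices (the case $j-i=K-1$ being the wrap-around pair $h_K,h_1$), while in \eqref{poissong3} the only nonvanishing brackets are between $g$-indices differing by $M$ modulo $N+M$ (the case $j-i=N$ being the wrap-around, since $N\equiv-M\pmod{N+M}$). Using $\gcd(M,N)=1$ together with $r_k=kN\bmod M$, one checks that a unit step in the $h$-ordering of \eqref{htog} corresponds exactly to a step by $M$ modulo $N+M$ in the $g$-index, both within each block $\mathbf{g}_i$ and in passing from the last entry $g_{r_{M-i-1}+N}$ of $\mathbf{g}_i$ to the first entry $g_{r_{M-i}}$ of $\mathbf{g}_{i+1}$; consequently $\{\,,\,\}_3$ pulls back to $-\alpha$ times \eqref{Poissondr}, its rank is $N+M-1$, and the Casimirs $g_0+\cdots+g_{N+M-1}$ and $h_1+\cdots+h_K$ match. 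Granting this, functional independence and involutivity of the $d$ integrals of $\tilde\phi$ follow from the Veselov--Shabat theorem by pullback along the birational isomorphism $(v_i)\mapsto(g_i)\mapsto(h_a)$, so $\tilde\phi$, and hence $\phi$, is Liouville integrable. The main obstacle is exactly this bracket-matching step, where one must confirm the mutual consistency of the cyclic structure of \eqref{Poissondr}, the wrap-around term in \eqref{poissong3}, the block decomposition in \eqref{htog}, and the parameter assignment in Lemma~\ref{IntdrtoKdV}; all the remaining ingredients are either quoted from \cite{VesShab} or immediate consequences of results already proved in the paper.
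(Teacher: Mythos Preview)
Your proposal is correct and follows essentially the same route as the paper: conjugate $\phi$ to $\tilde\phi$ via $f$, observe that the relabelling \eqref{htog} identifies the bracket \eqref{poissong3} with a nonzero constant multiple of the dressing chain bracket \eqref{Poissondr}, and then invoke Lemma~\ref{IntdrtoKdV} together with the Veselov--Shabat integrability of \eqref{drsystem} for $K=N+M$. The only discrepancy is the constant: the paper records $\{h_i,h_j\}_3=\alpha\{h_i,h_j\}$, whereas you obtain $-\alpha$; since within a block of \eqref{htog} consecutive $h$-entries have $g$-indices differing by $+M$ and $\{g_j,g_{j+M}\}_3=-\alpha$ while $\{h_a,h_{a+1}\}=-1$, the factor is indeed $\alpha$ --- but this sign is immaterial to the argument.
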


\begin{proof}
It suffices to show that the map $\tilde{\phi}$ \eqref{mapg} is Liouville integrable. 
We have already proved that this map is Poisson 
with respect to the Poisson bracket \eqref{poissong3}. By considering the change of variables \eqref{htog} we 
observe that, in terms of the bracket (\ref{Poissondr}),  
\begin{equation*}
\{h_i,h_j\}_3=\alpha \{h_i,h_j\}.
\end{equation*}
That means that the map %change of coordinates  
$\eta :\mathbb{C}^{N+M} \rightarrow \mathbb{C}^{N+M}$, with 
$$\eta(g_0,g_1,\dots,g_{N+M-1}):=(h_1,h_2,\ldots,h_{N+M})$$
is a Poisson map. 
Now, we consider the map \eqref{mapg} in the $h$-variables, given by conjugation  
%\begin{equation} \label{KdVhvar}
$\eta \circ \tilde{\phi} \circ \eta^{-1}.$
%\end{equation}
 By construction, the latter  map is a Poisson map with respect to the bracket \eqref{Poissondr}. Furthermore, 
 by Lemma \ref{IntdrtoKdV} and by the Liouville integrability of the dressing chain we conclude that 
this map  is Liouville integrable and 
 consequently the map $\tilde{\phi}$ is Liouville integrable as well, as is $\phi$ which is conjugate to it by a 
birational transformation. 
\end{proof}

\begin{remark}
In a similar way we can express the first two Poisson brackets \eqref{poisbrack1} and \eqref{poisbrack2} in $g$-variables by \eqref{gvar} and consequently 
in  $h$-variables by \eqref{htog}. The induced compatible Poisson structures in $h$-variables agree with the bi-Hamiltonian formulation 
presented in \cite{VesShab} (see also \cite{Evri1}, where a relation of the dressing chain with integrable deformations of the Bogoyavlenskij-Itoh systems is established). Likewise, if we denote by $\Pi_2$ and $\Pi_3$ the Poisson bivectors of  \eqref{poisbrack2} and  \eqref{poi3} 
respectively, then it can be shown that 
\begin{equation*}
(\Pi_2+\frac{\lambda}{\alpha}\Pi_3)^{\#}(\mathrm{d}\,\mathrm{tr} \mathcal{M}(\lambda))=0.
\end{equation*}

Furthermore, we remark that the case of $(N,1)$ periodic reductions, for $N$ even, can be treated in the same way. In 
this case by just setting $h_{i}=g_{i-1}$, for $i=1,\ldots, N+1$, we obtain the equivalent dressing chain system with 
$b_1=b_2=\ldots b_n=0$ and $b_{N+1}=-\alpha$.

\end{remark}

Since the brackets $\{,\}_1$ and $\{,\}_2$ were obtained by lifting the log-canonical brackets of the form 
(\ref{ubrackets}) for the U-systems, the commuting first integrals for the map $\phi$ can be rewritten 
in terms of the variables $u_j$ and the parameters $\alpha$, $\beta$ so they provide commuting integrals 
for the U-systems, leding to the following.

\begin{corollary}
The maps $\hat{\varphi}_1$, $\hat{\varphi}_2$ in  $\mathbb{C}^{N+M-1}$ % \rightarrow \mathbb{C}^{N+M-1}$ 
that correspond to the U-systems 
\eqref{1o} and \eqref{2o} are Liouville integrable.
\end{corollary}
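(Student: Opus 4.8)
The plan is to leverage everything already established for the KdV map $\phi$ and transport it back to the U-system maps $\hat\varphi_1,\hat\varphi_2$ via the projections introduced in Proposition \ref{changevarodd}. Recall that we have just proved that $\phi$ (equivalently $\tilde\phi$) is Liouville integrable: it preserves the nondegenerate Poisson bracket $\{,\}_1$ (respectively $\{,\}_2$) of rank $N+M-1$ on $\mathbb{C}^{N+M}$, and it has $\lfloor (N+M)/2\rfloor$ functionally independent first integrals in involution, obtained from the coefficients of $\mathrm{tr}\,\mathcal{M}(\lambda)$, with one Casimir. Since $N+M$ is odd, $N+M=2d+1$ where $2d=N+M-1$ is the dimension of each U-system, so after removing the Casimir we have exactly $d$ independent commuting integrals on a symplectic leaf of dimension $2d$ — precisely the count needed for Liouville integrability in dimension $2d$.

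First I would make the projection explicit. By Proposition \ref{changevarodd}, the map $\pi_1:(u_0,\dots,u_{N+M-2})\mapsto(v_0,\dots,v_{N+M-1})$ with $v_m=u_mu_{m+1}\cdots u_{m+M-1}$ (the $v_i$ for $N\le i\le N+M-1$ being expressed through the recurrence \eqref{1U} as in \eqref{vfromu1}) intertwines $\hat\varphi_1$ with $\phi$, i.e. $\pi_1\circ\hat\varphi_1=\phi\circ\pi_1$; likewise $\pi_2$ intertwines $\hat\varphi_2$ with $\phi$ using $v_m=u'_mu'_{m+1}\cdots u'_{m+N-1}$. Crucially, the construction in the previous section showed that the log-canonical bracket $\{,\}_u$ on the $2d$-dimensional U-system space \emph{pushes forward} to $\{,\}_1$ on $\mathbb{C}^{N+M}$ (and $\{,\}_{u'}$ to $\{,\}_2$): this is exactly the content of the statement that $\{,\}_u$ ``lifts to'' $\{,\}_1$. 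Therefore $\pi_1$ is a Poisson submersion from the symplectic space $(\mathbb{C}^{2d},\{,\}_u)$ to the Poisson space $(\mathbb{C}^{N+M},\{,\}_1)$, and its image is a symplectic leaf of $\{,\}_1$ — indeed $\pi_1$ is (generically) a diffeomorphism onto such a leaf since $\{,\}_u$ is nondegenerate and the ranks match.

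The second step is to pull back the integrals. Let $I_0(\lambda),I_1(\lambda),\dots$ be the coefficients of $\mathrm{tr}\,\mathcal{M}(\lambda)$ as a polynomial in $\lambda$; these are functions of $v_0,\dots,v_{N+M-1}$, they Poisson-commute with respect to $\{,\}_1$, they are $\phi$-invariant, and one of them is the Casimir of $\{,\}_1$. Set $\hat I_j:=\pi_1^*I_j=I_j\circ\pi_1$, a function of $u_0,\dots,u_{N+M-2}$. Because $\pi_1$ is a Poisson map and $\phi\circ\pi_1=\pi_1\circ\hat\varphi_1$, the $\hat I_j$ are in involution with respect to $\{,\}_u$ and are $\hat\varphi_1$-invariant. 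It remains to check functional independence and count: the Casimir of $\{,\}_1$ pulls back to a constant (or is functionally dependent) on the leaf $\mathrm{im}\,\pi_1$, so of the $\lfloor(N+M)/2\rfloor=d+1$ coefficient-integrals (including Casimir), exactly $d$ remain functionally independent after restriction to the leaf, and these give $d$ commuting independent integrals of $\hat\varphi_1$ in the $2d$-dimensional symplectic space — this is Liouville integrability of $\hat\varphi_1$. The identical argument with $\pi_2$, $\{,\}_{u'}=\{,\}_2$, and the same monodromy integrals handles $\hat\varphi_2$. One should also note that $\mathcal{M}(\lambda)$ can be written directly in the $u$ (resp. $u'$) variables via \eqref{vtau} and \eqref{odvarall}, as remarked after Proposition \ref{changevarodd}, so the integrals are genuinely rational functions on the U-system phase space.

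The main obstacle I anticipate is verifying functional independence of the pulled-back integrals: although the $I_j$ are known to be independent on $\mathbb{C}^{N+M}$ and the Casimir accounts for the rank defect, one must ensure that restricting to the particular leaf $\mathrm{im}\,\pi_1$ does not introduce further functional relations among the surviving $d$ integrals. This follows because $\pi_1$ is a submersion onto the leaf and the $d$ integrals $I_1,\dots,I_d$ restricted to the leaf (after discarding the Casimir) are independent there — a point that for the $(N,1)$ case was established in \cite{HKQT} via the explicit spectral curve, and which here follows from Lemma \ref{IntdrtoKdV}: under the change of variables \eqref{htog} the integrals of $\tilde\phi$ coincide with those of the Veselov--Shabat dressing chain \eqref{drsystem}, whose functional independence on the generic symplectic leaf is part of the Liouville integrability proved in \cite{VesShab}. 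Thus independence descends to the U-systems through the chain of birational identifications $\hat\varphi_1\leftrightarrow\phi\leftrightarrow\tilde\phi\leftrightarrow(\text{dressing chain})$, completing the proof of the corollary.
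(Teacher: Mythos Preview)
Your approach matches the paper's: both argue that since the brackets $\{,\}_1$ and $\{,\}_2$ were constructed by lifting the U-system brackets $\{,\}_u$ and $\{,\}_{u'}$, the commuting first integrals of $\phi$ pull back along $\pi_1,\pi_2$ to commuting integrals of $\hat\varphi_1,\hat\varphi_2$, with the Casimir becoming a trivial function of the parameters (so exactly $d$ independent integrals survive in dimension $2d$). Two minor slips to fix: $\pi_1$ is an \emph{immersion} (or embedding onto a leaf), not a submersion, since it goes from dimension $2d$ to $2d+1$; and $\lfloor(N+M)/2\rfloor=d$, not $d+1$ --- though your counting is otherwise correct.
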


\begin{remark} A careful counting shows that $\phi$ has one more integral than is necessary for 
the U-systems. It turns out that the Casimir of   $\{,\}_1$ or $\{,\}_2$ is the extra integral,  
and becomes a trivial function of the parameters when rewritten in the U-system variables. For 
explicit examples of this see \cite{HKW}.  
\end{remark}

\subsection{Poisson bracket on the space of monodromy matrices} 

A direct calculation shows that for $M+N$ odd the monodromy matrix can be written as 
\begin{equation}\label{mdy}
\mathcal{M}(\lambda)  = 
\left( \begin{array}{cc}
P(\lambda) & Q(\lambda) \\ 
R(\lambda) & S(\lambda) 
\end{array} 
\right) , \qquad Q(\lambda )=\lambda Q^*(\lambda), \quad S(\lambda )=\lambda S^*(\lambda),  
\end{equation} 
where $P$ has degree $\bar{g}=(M+N-1)/2$, %$Q=\lambda Q^*(\lambda) $ with 
$Q^*$ is monic of degree $\bar{g}$, 
$R$ is monic of degree $\bar{g}$, and %$S=\lambda S^*(\lambda) $ with 
$S^*$ has degree $\bar{g}-1$. 
For the second Poisson bracket $\{ , \}_2$, the entries of $\mathcal{M}$ satisfy a quadratic Poisson algebra, 
defined by 
$$ %\begin{array}{rcl}
\{ R(\lambda), \nu^{-1}S(\nu) \}_2  =  \frac{R(\lambda) S(\nu) - R(\nu) S(\lambda)}{\lambda - \nu }, 
% \\ 
$$ 
$$
  \{ \lambda^{-1}Q(\lambda), \nu^{-1}P(\nu) \}_2  =  \frac{\lambda^{-1}Q(\lambda) P(\nu) - \nu^{-1}Q(\nu) P(\lambda)}{\lambda - \nu },
%\\ 
$$ 
$$  
\{ \lambda^{-1}Q(\lambda), \nu^{-1}S(\nu) \}_2  =  - \frac{\lambda^{-1}Q(\lambda) S(\nu) - \nu^{-1}Q(\nu) S(\lambda)}{\lambda - \nu },
%\\
$$
$$   
\{ R(\lambda), \nu^{-1}P(\nu) \}_2  =  - \frac{R(\lambda) P(\nu) - R(\nu) P(\lambda)}{\lambda - \nu }, %\\
$$ 
$$  
\{ \lambda^{-1}Q(\lambda), R(\nu) \}_2  =  \frac{S(\lambda) P(\nu) - S(\nu) P(\lambda)}{\lambda - \nu },
%\\
$$ 
$$  
\{ \lambda^{-1}P(\lambda), \nu^{-1}S(\nu) \}_2  =  -\frac{\lambda^{-1}Q(\lambda) R(\nu) - \nu^{-1}Q(\nu) R(\lambda)}{\lambda - \nu },
%\end{array} 
$$ 
together with 
$$ 
\{P(\lambda), P(\nu)\}_2 = \{Q(\lambda), Q(\nu)\}_2 =\{R(\lambda), R(\nu)\}_2 =\{S(\lambda), S(\nu)\}_2 =0, 
$$ 
so that the coefficients of the polynomial $P$ all Poisson commute with one another, and the same is true  
for the coefficients of the polynomials $Q,R,S$. 

If we write 
$$ 
Q^*=\prod_{i=1}^{\bar{g}} (\lambda - \lambda_i), 
\qquad 
=\prod_{i=1}^{\bar{g}} (\lambda - \zeta_i), 
$$ 
and set 
$$ \nu_i = S(\lambda_i). \qquad \eta_i = P(\zeta_i), 
$$ 
then from the spectral curve (\ref{spectral}) written as 
$$ 
\nu^2 - (P(\lambda)+S(\lambda)) \, \nu +P(\lambda) S(\lambda) - Q(\lambda) R(\lambda) = 0,  
$$
which is hyperelliptic and of genus $\bar{g}$, we see that 
$$ 
p_j = (\lambda_j,\nu_j), \quad \tilde{p}_j = (\zeta_j,\eta_j), \quad j=1,\ldots ,\bar{g} 
$$ 
are points on the curve. The poles and zeros of function 
$$ 
\frac{Q(\lambda)}{\nu -P(\lambda)} =  \frac{\nu -S(\lambda)} {R(\lambda)} 
$$ 
give the linear equivalence of divisors 
$$ 
(0,0)+ \sum_{j=1}^{\bar{g}} p_j \sim (\infty,\infty)+ \sum_{j=1}^{\bar{g}} \tilde{p}_j. 
$$ 
Moreover, the brackets for the monodromy matrix imply that the coordinates of these points (or rather, their 
logarithms) provide two sets of canonically conjugate variables on the phase space: 
$$ 
\{ \lambda_i,\nu_j\} =  \lambda_i \nu_i\delta_{ij}, \qquad 
\{ \zeta_i,\eta_j\} =  \zeta_i \eta_i\delta_{ij}, 
$$ 
with $\{ \lambda_i,\lambda_j\} = 0 =   \{\nu_i,\nu_j\}$, $\{ \zeta_i,\zeta_j\} = 0 =   \{\eta_i,\eta_j\}$. 

It is known that the algebro-geometric solutions of the discrete Hirota equation are given in terms 
of the Fay trisecant identity for an arbitrary algebraic curve  \cite{finitegenus}. 
It would be interesting to use the above spectral coordinates on the hyperelliptic curves (\ref{spectral}) to derive explicit formulae for the solutions of the 
iterated maps corresponding to 
the lattice KdV travelling wave reductions, as has been done for solutions of the discrete potential KdV equation in    
\cite{xunijhoff}.

\section{The even case}

We will now investigate the case when $N$, $M$ are odd and coprime. The situation in this case is different than before because the 
corresponding $U$-systems cannot provide an invariant Poisson structure for the KdV periodic reductions. However, they can do it for 
a third map that is one dimension higher than the associated $U$-systems and one dimension lower than the KdV map. The Liouville 
integrability of the latter map ensures the integrability of the $U$-systems and of the KdV periodic reductions.   

For $N,M$ odd and coprime we consider the Hirota reductions with periodic coefficients \eqref{1HKdV} and  \eqref{2HKdV}: 
\begin{eqnarray*}
\tau_{m+2N+M}\tau_m &=& \beta_m \tau_{m+N+M}  \tau_{m+N} - \alpha \tau_{m+2 N}\tau_{m+M},  \ \beta_{m+M}=\beta_m, \\
\tau_{m+2M+N}\tau_{m}&=& \beta'_m \tau_{m+N+M}  \tau_{m+M}+\alpha \tau_{m+2 M}\tau_{m+N},\  \beta'_{m+N}=\beta'_m.
\end{eqnarray*}    
The corresponding $U$-systems 
\begin{eqnarray} 
u_m u_{m+2} \ldots u_{m+N+M-2} &=& \beta_m- \alpha u_{m+M} u_{m+M+2} \ldots u_{m+N-2}, \ \beta_{m+M}=\beta_m,  \label{1Uev}\\ 
u'_m u'_{m+2} \ldots u'_{m+N+M-2} &=& \beta'_m+ \frac{\alpha}{ u'_{m+M} u'_{m+M+2} \ldots u'_{m+N-2}}, \ \  \beta'_{m+N}=\beta'_m\label{2Uev}
\end{eqnarray}   
are obtained by considering  
$$u_m=\frac{\tau_{m} \tau_{m+N+2}}{\tau_{m+2} \tau_{m+N}}, \ u'_m=\frac{\tau_{m} \tau_{m+M+2}}{\tau_{m+2} \tau_{m+M}},$$
respectively. 
On the other hand, the substitution $v_m=\frac{\tau_{m} \tau_{m+N+M}}{\tau_{m+M} \tau_{m+N}}$ leads to the KdV periodic reduction \eqref{RedkdV}, 
\begin{equation*} 
v_{m+{N}+{M}}-v_m=\alpha(\frac{1}{v_{m+N}}-\frac{1}{v_{m+M}}).
\end{equation*}
From the above substitutions it is not hard to derive that the $U$-systems variables and the KdV variables satisfy  
\begin{eqnarray*}
v_m v_{m+1} = u_m u_{m+1} \dots u_{m+M-1}=u'_m u'_{m+1} \dots u'_{m+N-1}.
\end{eqnarray*}
Next, we consider a new set of variables by setting $w_m=v_m v_{m+1}$ and we prove the following proposition. 

\begin{proposition}
Let $N,M$ be odd and co-prime. If $w_m,v_m,u_m,u'_m$ satisfy 
\begin{equation} \label{evvarall}
w_m=v_m v_{m+1} = u_m u_{m+1} \dots u_{m+M-1}=u'_m u'_{m+1} \dots u'_{m+N-1}, 
\end{equation}
then the following statements are equivalent: 
\begin{enumerate}
\item[(i)]  $v_m$ satisfies the $(N,M)$ KdV periodic reduction \eqref{RedkdV} 
\item[(ii)]  $u_m$ satisfies the $U$-system  \eqref{1Uev}
\item[(iii)]  $u'_m$ satisfies the $U$-system \eqref{2Uev}
\item[(iv)]  $w_m$ satisfies the recurrence  
\end{enumerate}
\begin{equation} \label{wsystem}
\prod_{i=0}^{\frac{N+M-2}{2}}w_{m+2i+1}-\prod_{i=0}^{\frac{N+M-2}{2}}w_{m+2i}=
\alpha(\prod_{i=0}^{\frac{N-3}{2}}w_{m+2i+1}\prod_{i=0}^{\frac{M-3}{2}}w_{m+2i+N+1}-
\prod_{i=0}^{\frac{M-3}{2}}w_{m+2i+1}\prod_{i=0}^{\frac{N-3}{2}}w_{m+2i+M+1})
\end{equation}

\end{proposition}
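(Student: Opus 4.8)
The plan is to route all four statements through (i), establishing (i)$\,\Leftrightarrow\,$(ii), (i)$\,\Leftrightarrow\,$(iii) and (i)$\,\Leftrightarrow\,$(iv). The first two are the even-case analogues of Proposition~\ref{changevarodd} and follow the same argument word for word; the third, (i)$\,\Leftrightarrow\,$(iv), is a short direct computation showing that \eqref{wsystem} is nothing but the reduction \eqref{RedkdV} with its denominators cleared and then rewritten in the variables $w_m=v_m v_{m+1}$.

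For (i)$\,\Leftrightarrow\,$(ii) I would proceed exactly as in the proof of Proposition~\ref{changevarodd}: by Proposition~\ref{kdvred}, $v_m=\tau_m\tau_{m+N+M}/(\tau_{m+M}\tau_{m+N})$ satisfies \eqref{RedkdV} if and only if $\tau_m$ satisfies \eqref{1HKdV}; repeating the substitution behind the proof of Proposition~\ref{redsymp}\,$\mathbf{(3)}$ with the $M$-periodic coefficient $\beta_m$ in place of the constant $b$ shows that $\tau_m$ satisfies \eqref{1HKdV} if and only if $u_m=\tau_m\tau_{m+N+2}/(\tau_{m+2}\tau_{m+N})$ satisfies \eqref{1Uev}; and a telescoping product collapses $u_m u_{m+1}\cdots u_{m+M-1}$ to $\frac{\tau_m\tau_{m+N+M}}{\tau_{m+M}\tau_{m+N}}\cdot\frac{\tau_{m+1}\tau_{m+N+M+1}}{\tau_{m+M+1}\tau_{m+N+1}}=v_m v_{m+1}=w_m$, which is the middle equality in \eqref{evvarall}. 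Chaining these gives (ii)$\,\Leftrightarrow\,$(i). The equivalence (i)$\,\Leftrightarrow\,$(iii) is the identical computation run with \eqref{2HKdV}, Proposition~\ref{redsymp}\,$\mathbf{(4)}$, the variable $u'_m=\tau_m\tau_{m+M+2}/(\tau_{m+2}\tau_{m+M})$ and the telescoping $u'_m u'_{m+1}\cdots u'_{m+N-1}=v_m v_{m+1}=w_m$.

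For (i)$\,\Leftrightarrow\,$(iv) I would use that, since $N$ and $M$ are odd, the upper limits $\frac{N+M-2}{2}$, $\frac{N-3}{2}$, $\frac{M-3}{2}$ in \eqref{wsystem} are integers and each product of $w$'s there tiles a full run of consecutive $v$'s, via the elementary identities $\prod_{i=0}^{k}w_{m+2i}=\prod_{j=0}^{2k+1}v_{m+j}$ and $\prod_{i=0}^{k}w_{m+2i+1}=\prod_{j=1}^{2k+2}v_{m+j}$ (immediate from $w_m=v_m v_{m+1}$, since successive blocks abut). Applying these to the six products in \eqref{wsystem} turns its left-hand side into $\prod_{j=1}^{N+M}v_{m+j}-\prod_{j=0}^{N+M-1}v_{m+j}=\bigl(\prod_{j=1}^{N+M-1}v_{m+j}\bigr)(v_{m+N+M}-v_m)$ and its right-hand side into $\alpha\bigl(\prod_{j=1}^{N+M-1}v_{m+j}\bigr)\bigl(v_{m+N}^{-1}-v_{m+M}^{-1}\bigr)$, the point being that the two short products in each term on the right recombine into $\prod_{j=1}^{N+M-1}v_{m+j}$ with exactly one factor dropped --- $v_{m+N}$ in the first term, $v_{m+M}$ in the second. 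Cancelling the common, generically nonzero, factor $\prod_{j=1}^{N+M-1}v_{m+j}$ reduces the resulting identity to \eqref{RedkdV}, and conversely multiplying \eqref{RedkdV} through by that factor and regrouping into the $w_m$ recovers \eqref{wsystem}; hence (iv)$\,\Leftrightarrow\,$(i). The only genuinely non-formal step is this product bookkeeping, so that is where I would be most careful; everything else is a direct appeal to Propositions~\ref{kdvred} and~\ref{redsymp} together with the telescoping identities above.
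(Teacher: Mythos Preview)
Your argument is correct, and for (i)$\,\Leftrightarrow\,$(iv) it is essentially identical to the paper's: both multiply \eqref{RedkdV} through by $\prod_{j=1}^{N+M-1}v_{m+j}$ and regroup the resulting $v$-products into the $w$-variables via $w_k=v_kv_{k+1}$, exactly as you describe.

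For (i)$\,\Leftrightarrow\,$(ii) and (i)$\,\Leftrightarrow\,$(iii) you take a genuinely different route from the paper. You mirror the odd-case proof of Proposition~\ref{changevarodd}, routing through the $\tau$-function via Propositions~\ref{kdvred} and~\ref{redsymp}, and then recovering $w_m=v_mv_{m+1}$ from the telescoping product of the $u_m$. The paper instead avoids $\tau$ altogether: it derives two auxiliary identities directly from \eqref{evvarall}, namely $u_m/u_{m+M}=v_m/v_{m+2}$ and $\prod_{i=0}^{M-1}u_{m+2i}=v_mv_{m+M}$ (with the analogous pair for $u'_m$), and uses these to rewrite each U-system purely in $v$-variables, after which the periodicity condition $\beta_{m+M}=\beta_m$ (resp.\ $\beta'_{m+N}=\beta'_m$) is seen to be equivalent to \eqref{RedkdV}. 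Your approach is shorter and reuses existing machinery; the paper's direct computation has the advantage of working from \eqref{evvarall} alone, so it applies to \emph{any} sequences $u_m,v_m$ related by \eqref{evvarall}, not only those arising from a common $\tau$. If you want your argument to be airtight at that level of generality, you would need to observe that two $v$-sequences with the same $w_m=v_mv_{m+1}$ differ by an alternating factor $c^{(-1)^m}$, and that \eqref{RedkdV} is invariant under this scaling since $N,M,N+M$ have the right parities.
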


\begin{proof}
From \eqref{evvarall} we obtain 
\begin{equation} \label{extrarel1}
\frac{u_m}{u_{m+M}}=\frac{u'_m}{u'_{m+N}}=\frac{v_m}{v_{m+2}}, 
\end{equation}
and 
\begin{equation} \label{extrarel2}
\prod_{i=0}^{M-1}u_{m+2i}=v_m v_{m+M}, \ \prod_{i=0}^{N-1}u'_{m+2i}=v_m v_{m+N}. 
\end{equation}
The first $U$-system \eqref{1Uev} can be written as  
\begin{equation*}
(u_m u_{m+2} \dots u_{m+2M-2})\frac{u_{m+2M} u_{m+2M+2} \ldots u_{m+N+M-2}}{ u_{m+M} u_{m+M+2} \ldots u_{m+N-2}}+
\alpha =\frac{\beta_m}{ u_{m+M} u_{m+M+2} \ldots u_{m+N-2}}
\end{equation*}
and from (\ref{extrarel1}-\ref{extrarel2}) as 
$$ v_m v_{m+M} \frac{v_{m+N}}{v_{m+M}}+ \alpha= \frac{\beta_m}{u_{m+M} u_{m+M+2} \ldots u_{m+N-2}}.$$
By the periodicity condition $\beta_m=\beta_{m+M}$ we get 
\begin{eqnarray*}
v_m {v_{m+N}}+ \alpha &=& \frac{\beta_{m+M}}{u_{m+M} u_{m+M+2} \ldots u_{m+N-2}}\\ 
 &=&u_{m+N} u_{m+N+2} \dots u_{m+N+2M-2}+ \alpha \frac{u_{m+2M} u_{m+2M+2} \ldots u_{m+M+N-2}}{{u_{m+M} u_{m+M+2} \ldots u_{m+N-2}}} \\ 
 &=& v_{m+N}v_{m+N+M}+ \alpha \frac{v_{m+N}}{v_{m+M}}  
 \end{eqnarray*}
(by using (\ref{extrarel1}-\ref{extrarel2}) in the last equality), which implies the KdV reduction \eqref{RedkdV}. Conversely, 
 \eqref{RedkdV} implies $\beta_m=\beta_{m+M}$, for 
$ \beta_m=u_m u_{m+2} \ldots u_{m+N+M-2}+\alpha u_{m+M} u_{m+M+2} \ldots u_{m+N-2}$. 

The second  $U$-system \eqref{2Uev} can be written as
\begin{equation*}
u'_m u'_{m+2} \ldots u'_{m+2N-2}-\frac{\alpha u'_{m+N+M} u'_{m+N+M+2} \ldots u'_{m+2N-2}}{ u'_{m+M} u'_{m+M+2} \ldots u'_{m+N-2}} = 
\beta'_m u'_{m+N+M} u'_{m+N+M+2} \ldots u'_{m+2N-2}.
\end{equation*}
By (\ref{extrarel1}-\ref{extrarel2}) and the  periodicity condition $\beta'_m=\beta'_{m+N}$ we derive 
\begin{eqnarray*}
v_m v_{m+N} -\alpha \frac{v_{m+N}}{v_{m+M}}&=&\beta'_m u'_{m+N+M} u'_{m+N+M+2} \ldots u'_{m+2N-2} \\&=&
\beta'_{m+N} u'_{m+N+M} u'_{m+N+M+2} \ldots u'_{m+2N-2}  \\
&=& ( u'_{m+M} u'_{m+M+2} \ldots u'_{m+N-2})( u'_{m+N} u'_{m+N+2} \ldots u'_{m+2N+M-2})\\ && \cdot 
\frac{u'_{m+N+M} u'_{m+N+M+2} \ldots u'_{m+2N-2}}{u'_{m+M} u'_{m+M+2} \ldots u'_{m+N-2}}-\alpha
=v_{m+N+M}v_{m+N}-\alpha
\end{eqnarray*}
and conversely from \eqref{RedkdV} we conclude that $\beta'_m=\beta'_{m+N}$ for 
$$\beta'_m=u'_m u'_{m+2} \ldots u'_{m+N+M-2}- \frac{\alpha}{ u'_{m+M} u'_{m+M+2} \ldots u'_{m+N-2}}.$$

Finally, \eqref{RedkdV} is equivalent to 
\begin{equation*}
\prod_{i=0}^{N+M-1}v_{m+i+1}-\prod_{i=0}^{N+M-1}v_{m+i}=
\alpha(\frac{\prod_{i=1}^{N+M-1}{v_{m+i}}}{v_{m+N}}
-\frac{\prod_{i=1}^{N+M-1}{v_{m+i+1}}}{v_{m+M}}), 
\end{equation*}
which up to  \eqref{evvarall} is equivalent to \eqref{wsystem}.
\end{proof}

We claim that in this case the corresponding map of the recurrence  \eqref{wsystem} inherits bi-Poisson structure and it is Liouville integrable. 
We will demonstrate this in the following example for $N=5$ and $M=3$.

\begin{remark}
In the case of $M=1$, \eqref{evvarall} implies that $\omega_m=u_m$, i.e. recurrence \eqref{wsystem} coincides with the U-system \eqref{1Uev} that corresponds to the Hirota reduction 
\eqref{1HKdV}. 
Therefore,   U-system \eqref{1Uev} inherits a bi-Poisson structure which proves the integrability of these cases.

The integrability of KdV reductions with $M=1$ 
has been proved in \cite{HKQT}, using the observation that these reductions are given in terms of a tau function that satisfies the bilinear recurrence relation of the 
form \eqref{2HKdV} (for $M=1$) and a Poisson structure derived by the Lagrangian formulation of the reduced maps.  The results in this work extend this observation to the
general $(N,M)$ KdV reductions, and shows that in each case there are actually two different bilinear
equations involved  (\eqref{1HKdV} and  \eqref{2HKdV}) which provide the two compatible Poisson structures.
\end{remark}

\subsection{A Periodic KdV reduction of order 8}  
We consider the case of $N=5$, $M=3$. The two bilinear equations in this case are  
\begin{eqnarray}
\tau_{m+13}\tau_m &=& \beta_m \tau_{m+8}  \tau_{m+5} - \alpha \tau_{m+10}\tau_{m+3},  \  \beta_{m+3}=\beta_m, \\
\tau_{m+11}\tau_{m}&=& \beta'_m \tau_{m+8}  \tau_{m+3}+\alpha \tau_{m+6}\tau_{m+5}, \ \  \beta'_{m+5}=\beta'_m.
\end{eqnarray}
For 
$u_m=\frac{\tau_{m} \tau_{m+7}}{\tau_{m+2} \tau_{m+5}}, \ u'_m=\frac{\tau_{m} \tau_{m+5}}{\tau_{m+2} \tau_{m+3}},$ 
we obtain the corresponding $U$-systems
\begin{eqnarray}
u_m u_{m+2} u_{m+4} u_{m+6} &=& \beta_m- \alpha u_{m+3}, \ \  \beta_{m+3}=\beta_m,  \label{1Uex} \\ 
u'_m u'_{m+2} u'_{m+4} u'_{m+6} &=& \beta'_m+ \frac{\alpha}{ u'_{m+3}}, \ \  \ \beta'_{m+N}=\beta'_m\label{2Uex}
\end{eqnarray} 
and for 
 $v_m=\frac{\tau_{m} \tau_{m+8}}{\tau_{m+5} \tau_{m+3}}$ the $(5,3)$ reduction of the lattice KdV equation 
\begin{equation} \label{KdV53}
v_{m+8}-v_m=\alpha(\frac{1}{v_{m+5}}-\frac{1}{v_{m+3}}).
\end{equation}
Furthermore, by setting 
\begin{equation} \label{wvar}
w_m=v_m v_{m+1} = u_m u_{m+1}  u_{m+2}=u'_m u'_{m+1}u'_{m+2} u'_{m+3} u'_{m+4}, 
\end{equation}
we derive the recurrence 
\begin{equation} \label{wsystex}
w_{m+4} w_{m+6}(w_m w_{m+2}-\alpha w_{m+1})=w_{m+1} w_{m+3}(w_{m+5} w_{m+7}-\alpha w_{m+6}).
\end{equation}
We will denote by $\phi^{(i)}_m$, $i=1,2$ the associated maps with the $U$-systems and by $\phi_{v}$, $\phi_w$ the 
maps that correspond to the recurrences \eqref{KdV53} and \eqref{wsystem} respectively, that is     
\begin{eqnarray*}
&& \phi^{(1)}_m(u_0,u_1,u_2,u_3,u_4, u_5) = (u_1,u_2, u_3, u_4,u_5,\frac{ \beta_m- \alpha u_{3}}{u_0 u_{2} u_{4}} ), \\ 
&& \phi^{(2)}_m(u'_0,u'_1,u'_2,u'_3,u'_4, u'_5) = (u'_1,u'_2, u'_3, u'_4,u'_5,\frac{\alpha + \beta'_m u'_3  }{u'_0 u'_{2} u'_3 u'_{4}} ), \\ 
&& \phi_v(v_0,v_1,v_2,v_3,v_4,v_5, v_6, v_7)  =   (v_1,v_2,v_3,v_4,v_5, v_6, v_7,v_0+\alpha(\frac{1}{v_{5}}-\frac{1}{v_{3}})), \\
&& \phi_w(w_0,w_1,w_2,w_3,w_4,w_5, w_6) = (w_1,w_2, w_3,w_4,w_5, w_6, 
w_6 \frac{\alpha w_1(w_3-w_4)+w_0 w_2 w_4}{w_1 w_3 w_5}).
\end{eqnarray*}

\subsubsection{Compatible Poisson structures} 

The maps $\phi^{(1)}_m$, $ \phi^{(2)}_m$  associated with the $U$-systems are symplectic with respect to the symplectic structure 
specified by the Poisson brackets 
\begin{align}
  \{u_i,u_{i+3}\}_1 &=u_i u_{i+3},  \ \{u_i,u_{i+5}\}_1=-u_i u_{i+5}, \ \{u_i,u_{i+j}\}_1 =0, \ \text{for} \ j=1,2,4, \    \label{u1pbr}\\
 \{u'_i,u'_{i+3}\}_2 &=u'_i u'_{i+3},  \ \{u'_i,u'_{i+5}\}_2=-u'_i u'_{i+5}, \ \{u'_i,u'_{i+j}\}_1 =0, \ \text{for} \ j=1,2,4. \label{u2pbr}
\end{align}
We will show that these Poisson structures for the $U$-systems give rise to two different Poisson structures for \eqref{wsystex}. 

First, by considering $w_m=u_m u_{m+1} u_{m+2}$ and the recurrence  \eqref{1Uex}, we can write the $w$ variables of the map $\phi_w$  as  
\begin{eqnarray} \label{subw1}
&& w_0=u_0 u_1 u_2, \ w_1=u_1 u_2 u_3, \ w_2=u_2 u_3 u_4, \ w_3=u_3 u_4 u_5, \\
&& w_4=\frac{(\beta_0-\alpha u_3)u_5}{ u_0 u_2}, \ 
 w_5=\frac{(\beta_0-\alpha u_3)(\beta_1-\alpha u_4)}{ u_0 u_1 u_2 u_3 u_4}, \ w_6=\frac{(\beta_1-\alpha u_4)(\beta_2-\alpha u_5)}{ u_1 u_2 u_3 u_4 u_5}. \nonumber
\end{eqnarray}
Now, we can evaluate the Poisson brackets $\{w_i.w_j \}_1$ from \eqref{u1pbr} and  \eqref{subw1} to derive

\begin{eqnarray} \label{PBw1}
&& \{w_i,w_{i+1}\}_1=w_i w_{i+1}, \ \{w_i,w_{i+2}\}_1=2w_i w_{i+2}, \ \{w_i,w_{i+3}\}_1=2w_i w_{i+3}, \\
 && \{w_i,w_{i+4}\}_1=-\frac{\alpha w_{i+1}  w_{i+3}}{w_{i+2}},   \   
   \{w_i,w_{i+5}\}_1=-2w_i w_{i+5}-\alpha(\frac{ w_i  w_{i+2}w_{i+4}}{w_{i+1}w_{i+3}}+\frac{ w_{i+1}  w_{i+3}w_{i+5}}{w_{i+2}w_{i+4}}), \nonumber \\ 
 &&   \{w_i,w_{i+6}\}_1=-2w_i w_{i+6}-\alpha \frac{ w_i  w_{i+2}w_{i+4} w_{i+6}}{w_{i+1}w_{i+3}w_{i+5}}. \nonumber
\end{eqnarray}

Furthermore, by setting $w_m=u'_m u'_{m+1}u'_{m+2} u'_{m+3} u'_{m+4}$, from \eqref{2Uex}, we obtain 
\begin{eqnarray} \label{subw2}
&& w_0=u'_0 u'_1 u'_2 u'_3 u'_4, \ w_1=u'_1 u'_2 u'_3 u'_4 u'_5,  \\
&& w_2=(\alpha+\beta_0' u'_3) \frac{w_1}{w_0}, \ w_3=(\alpha+\beta_1' u'_4) \frac{w_2}{w_1}, \ w_4=(\alpha+\beta'_2 u'_5) \frac{w_3}{w_2},\nonumber \\ 
&& w_5=(\alpha+\frac{\beta'_3(\alpha+\beta'_0 u'_3)}{u'_0 u'_2 u'_3u'_4} )\frac{w_4}{w_3},  \  
w_6=(\alpha+\frac{\beta'_4(\alpha+\beta'_1 u'_4)}{u'_1 u'_3 u'_4u'_5} )\frac{w_5}{w_4}.\nonumber 
\end{eqnarray}
Similarly, from \eqref{u2pbr} and \eqref{subw2} we evaluate  the brackets $\{w_i.w_j \}_2$ in terms of the $w$ variables that gives 
\begin{eqnarray} \label{PBw2}
&& \{w_i,w_{i+1}\}_2=w_i w_{i+1}, \ \{w_i,w_{i+2}\}_2=-\alpha w_{i+1}+2w_i w_{i+2}, \\
 && \{w_i,w_{i+3}\}_2=2w_i w_{i+3}-\alpha(\frac{ w_i  w_{i+2}}{w_{i+1}}+\frac{ w_{i+1}  w_{i+3}}{w_{i+2}}),   \nonumber \\
&&  \{w_i,w_{i+4}\}_2=-\frac{\alpha w_i  w_{i+2}w_{i+4}}{w_{i+1}w_{i+3}},  \ 
 \{w_i,w_{i+5}\}_1=-2w_i w_{i+5}+\frac{\alpha^2  w_{i+1}w_{i+4}}{w_{i+2}w_{i+3}},  \nonumber \\ 
 && \{w_i,w_{i+6}\}_2=-2w_i w_{i+6}+\alpha^2( \frac{ w_i  w_{i+2}w_{i+5}}{w_{i+1}w_{i+3}w_{i+4}}+\frac{ w_{i+1}  w_{i+4}w_{i+6}}{w_{i+2}w_{i+3}w_{i+5}}). \nonumber
\end{eqnarray}

\begin{proposition}
The brackets \eqref{PBw1},  \eqref{PBw2} define two compatible Poisson brackets on $\mathbb{C}^7$ (with coordinates $\{w_0,\ldots, w_6 \}$) and 
the birational  map $\phi_w:\mathbb{C}^7 \rightarrow \mathbb{C}^7$ that corresponds to the recurrence \eqref{wsystex} preserves both of them. 
\end{proposition}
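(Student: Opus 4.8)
The plan is to obtain each of the two brackets as a pushforward of the log-canonical symplectic structure of the corresponding $U$-system, so that its Poisson property and its $\phi_w$-invariance become formal, and then to treat compatibility separately by identifying the difference $\Pi_1-\Pi_2$ as a third, simpler, Poisson bivector. To make the pushforwards available, I would adjoin the periodic coefficients as extra (Casimir) coordinates on the $U$-system phase spaces: the formulas \eqref{subw1} define a rational map $\pi_1\colon\mathbb{C}^6\times\mathbb{C}^3\to\mathbb{C}^7$, $(u_0,\dots,u_5,\beta_0,\beta_1,\beta_2)\mapsto(w_0,\dots,w_6)$, which is dominant (a generic submersion with two-dimensional fibres, since $(u_0,u_1,w_0,w_1,w_2,w_3)$ recovers $u_0,\dots,u_5$ and then $w_4,w_5,w_6$ determine $\beta_0,\beta_1,\beta_2$); by the equivalence of \eqref{1Uex} and \eqref{wsystex} proved in the preceding proposition, $\pi_1$ intertwines the ($\beta$-extended) map $\phi^{(1)}_m$ with $\phi_w$, i.e.\ $\phi_w\circ\pi_1=\pi_1\circ\phi^{(1)}_m$. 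The same construction applied to \eqref{subw2} produces a dominant $\pi_2$ intertwining $\phi^{(2)}_m$ with $\phi_w$. On each source I put the log-canonical bracket \eqref{u1pbr} (resp.\ \eqref{u2pbr}) with the coefficients declared to be Casimirs.

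The content of \eqref{PBw1} and \eqref{PBw2} is then precisely that the chain-rule brackets $\{\pi_1^*w_i,\pi_1^*w_j\}$ and $\{\pi_2^*w_i,\pi_2^*w_j\}$ are themselves pullbacks, i.e.\ functions of the $w_i$ alone; this is obtained by substituting \eqref{subw1}, resp.\ \eqref{subw2}, and simplifying. Granting this, the Leibniz rule gives $\{\pi_i^*F,\pi_i^*G\}=\pi_i^*\{F,G\}_{\Pi_i}$ for all $F,G$, where $\Pi_i$ is the bivector on $\mathbb{C}^7$ specified by \eqref{PBw1}, resp.\ \eqref{PBw2}; in other words $\pi_i$ is a Poisson map onto $(\mathbb{C}^7,\Pi_i)$. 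Since $\pi_i$ is dominant, the Jacobiator of $\Pi_i$ is a rational function whose pullback vanishes, hence it vanishes identically, so $\Pi_i$ is Poisson; and $\phi_w^*\Pi_i=\Pi_i$ follows from $\phi_w\circ\pi_i=\pi_i\circ\phi^{(i)}_m$ together with the (symplectic, hence Poisson) invariance of \eqref{u1pbr}, \eqref{u2pbr} under the $\beta$-extended $\phi^{(i)}_m$, again using dominance of $\pi_i$ to cancel $\pi_i^*$. This establishes the Poisson property and the $\phi_w$-invariance of both brackets.

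It remains to prove compatibility, which is the one genuinely computational point. Subtracting \eqref{PBw2} from \eqref{PBw1} yields an explicit bivector $\Pi_3:=\Pi_1-\Pi_2$ that is markedly simpler than $\Pi_1$ or $\Pi_2$ (for instance $\{w_i,w_{i+1}\}_3=0$, $\{w_i,w_{i+2}\}_3=\alpha w_{i+1}$, $\{w_i,w_{i+3}\}_3=\alpha(\tfrac{w_iw_{i+2}}{w_{i+1}}+\tfrac{w_{i+1}w_{i+3}}{w_{i+2}})$). I would verify that $\Pi_3$ is again a Poisson bracket; conceptually this holds because $\Pi_3$ is, up to sign, the pushforward under the dominant map $q\colon\mathbb{C}^8\to\mathbb{C}^7$, $w_m=v_mv_{m+1}$, of the lattice-KdV bracket \eqref{poi3} (which, exactly as in the odd case, arises from the Lagrangian structure of \eqref{KdV} and is invariant under $\phi_v$), so that the argument of the previous paragraph applies to $q$ as well; concretely it reduces to the Jacobi identity for $\Pi_3$ on the generators $w_0,\dots,w_6$, and by $\phi_w$-invariance only finitely many cases need checking. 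Once $\Pi_1$, $\Pi_2$ and $\Pi_3=\Pi_1-\Pi_2$ are all known to be Poisson, compatibility is automatic: writing $[\,,\,]$ for the Schouten bracket, which is symmetric on bivectors, $0=[\Pi_1-\Pi_2,\Pi_1-\Pi_2]=[\Pi_1,\Pi_1]-2[\Pi_1,\Pi_2]+[\Pi_2,\Pi_2]=-2[\Pi_1,\Pi_2]$, so $[\lambda_1\Pi_1+\lambda_2\Pi_2,\lambda_1\Pi_1+\lambda_2\Pi_2]=2\lambda_1\lambda_2[\Pi_1,\Pi_2]=0$ for all scalars $\lambda_1,\lambda_2$.

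The hard part is thus the verification that $\Pi_3$ satisfies the Jacobi identity (equivalently, that the pushforward by $q$ is well defined and hence Poisson); the rest of the proof is formal once the substitutions producing \eqref{PBw1} and \eqref{PBw2} — and the analogous one for $\Pi_3$ — have been carried out. As a fallback, if one prefers to avoid the auxiliary map $q$, the same conclusions can be reached by checking the Jacobi identity for $\Pi_1$, $\Pi_2$ and $\Pi_1-\Pi_2$, and the $\phi_w$-invariance of $\Pi_1$ and $\Pi_2$, by direct computation on the generators, using $\phi_w$-equivariance to bound the number of brackets that must be evaluated.
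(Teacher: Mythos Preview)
Your proposal is correct and follows essentially the same approach as the paper's proof: both obtain the Poisson property and $\phi_w$-invariance of $\{\,,\,\}_1$ and $\{\,,\,\}_2$ from their construction as pushforwards of the U-system brackets, and both establish compatibility by observing that the difference $\{\,,\,\}_1-\{\,,\,\}_2$ is itself a Poisson bracket. Your version is considerably more detailed---you make explicit the dominance/closure argument needed for the pushforward to be well defined, you identify $\Pi_3$ as the pushforward under $q\colon v_m\mapsto v_mv_{m+1}$ of the Lagrangian bracket \eqref{poi3} (which the paper does only \emph{after} the proof, when treating integrability of $\phi_v$), and you spell out the Schouten-bracket computation showing that three Poisson structures with $\Pi_3=\Pi_1-\Pi_2$ force $[\Pi_1,\Pi_2]=0$---but the skeleton of the argument is the same.
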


\begin{proof}
From the construction of  \eqref{PBw1} and  \eqref{PBw2}, it follows directly that they are Poisson brackets. Furthermore, the bracket 
$\{ \ , \ \}^w_3=\{ \ , \ \}_1-\{ \ , \ \}_2$ satisfies the Jacobi identity, so  \eqref{PBw1} and  \eqref{PBw2} are compatible. Finally, the preservation of these brackets under the map $\phi_w$ follows from the preservation of the Poisson brackets \eqref{u1pbr} and \eqref{u2pbr} under the maps 
$\phi^{(1)}_m$ and $\phi^{(2)}_m$ respectively. 
\end{proof}

\subsubsection{Monodromy matrix and integrability} 

The monodromy matrix in terms of the KdV coordinates is derived by \eqref{monodr1} for $N=5$, $M=3$ and it reads 
\begin{equation} \label{monex}
\mathcal{M}={\bf M}(v_5){\bf L}(v_2,v_5){\bf M}(v_7){\bf L}(v_4,v_7){\bf L}(v_1,v_4){\bf M}(v_6){\bf L}(v_3,v_6){\bf L}(v_0,v_3), 
\end{equation}
where ${\bf L}({v}_i,{v}_j)$ and ${\bf M}({v_i})$ are given by \eqref{LaxPair2}. 
The trace of the Monodromy matrix can be written as 
$$tr\mathcal{M}_v=2 \lambda^4+I_3 \lambda^3+I_2 \lambda^2+I_1 \lambda+I_0,$$
where $I_0,I_1,I_2,I_3$ are functionally independent first integrals of the map  $\phi_v$. From these integrals, by 
considering the substitution \eqref{wvar}, we obtain four integrals for the map $\phi_w$. Let us denote them by 
 $\tilde{I}_0,\tilde{I}_1,\tilde{I}_2,\tilde{I}_3$ respectively.

\begin{proposition}
The map $\phi_w: \mathbb{C}^7  \rightarrow \mathbb{C}^7$,  
$$ \phi_w(w_0,w_1,\ldots,w_5, w_6) = (w_1,w_2, \ldots, w_6, 
w_6 \frac{\alpha w_1(w_3-w_4)+w_0 w_2 w_4}{w_1 w_3 w_5})$$
is Liouville integrable. 
\end{proposition}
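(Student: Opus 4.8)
Because $\phi_w$ acts on the odd-dimensional space $\mathbb{C}^7$, what we need is a $\phi_w$-invariant Poisson bracket of rank $6$ together with three functionally independent, pairwise commuting first integrals on its generic (six-dimensional) symplectic leaves. The raw materials are already at hand: the four integrals $\tilde I_0,\tilde I_1,\tilde I_2,\tilde I_3$ extracted from $\mathrm{tr}\,\mathcal{M}$ via \eqref{wvar}, and the two compatible $\phi_w$-invariant brackets \eqref{PBw1} and \eqref{PBw2} furnished by the preceding proposition.

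The central step is to exhibit a bi-Hamiltonian description of the monodromy trace. Let $\Pi_1^w,\Pi_2^w$ be the Poisson bivectors of \eqref{PBw1}, \eqref{PBw2} and $\Pi_3^w=\Pi_1^w-\Pi_2^w$ their (compatible) difference. Expressing the monodromy matrix \eqref{monex} in the $w$-coordinates through \eqref{wvar}, so that $\mathrm{tr}\,\mathcal{M}(\lambda)=2\lambda^4+\tilde I_3\lambda^3+\tilde I_2\lambda^2+\tilde I_1\lambda+\tilde I_0$, I would verify by direct computation — in analogy with the odd case treated earlier — that $\mathrm{tr}\,\mathcal{M}(\lambda)$ is a Casimir of the Poisson pencil:
\begin{equation*}
\Big(\Pi_2^w+\tfrac{\lambda}{\alpha}\,\Pi_3^w\Big)^{\#}\big(\rd\,\mathrm{tr}\,\mathcal{M}(\lambda)\big)=0
\qquad\text{for every }\lambda
\end{equation*}
(the precise element of the pencil being fixed by matching the lowest powers of $\lambda$). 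Expanding this identity in powers of $\lambda$ produces a Lenard--Magri chain tying $(\Pi_2^w)^{\#}\rd\tilde I_k$ to $(\Pi_3^w)^{\#}\rd\tilde I_{k+1}$; in particular $\tilde I_0$ is a Casimir of $\Pi_2^w$ and $\tilde I_3$ a Casimir of $\Pi_3^w$, and the usual argument then gives
\[
\{\tilde I_i,\tilde I_j\}_2=\{\tilde I_i,\tilde I_j\}_3=0\quad\text{for all }i,j,
\]
hence also $\{\tilde I_i,\tilde I_j\}_1=0$.

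To finish, one checks that the bracket \eqref{PBw2} has rank $6$ on $\mathbb{C}^7$ (by evaluating its matrix at a generic point), so that its generic symplectic leaves are six-dimensional with the single Casimir $\tilde I_0$. Since $\tilde I_0,\dots,\tilde I_3$ are functionally independent on $\mathbb{C}^7$, the functions $\tilde I_1,\tilde I_2,\tilde I_3$ restrict to three functionally independent, pairwise commuting integrals on a generic leaf, and $\phi_w$ — a Poisson map for $\{\,,\,\}_2$ that preserves each $\tilde I_j$ — restricts to that leaf. The discrete Liouville--Arnold theorem then shows that $\phi_w$ is Liouville integrable.

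The main obstacle is the bi-Hamiltonian step: identifying the member of the pencil that annihilates $\rd\,\mathrm{tr}\,\mathcal{M}(\lambda)$ and verifying the Casimir identity, which amounts to matching all the powers of $\lambda$ after differentiating the degree-$4$ polynomial $\mathrm{tr}\,\mathcal{M}$ and contracting with the degree-$1$ pencil bivector. Together with the rank computation for \eqref{PBw2} and the functional independence of $\tilde I_0,\dots,\tilde I_3$, this is the only part requiring genuine (though finite) calculation; compatibility and invariance of the brackets, and the passage from ``Casimir of a pencil'' to involutivity, are already in place.
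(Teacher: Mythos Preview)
Your proposal is correct, and in fact supplies more of an argument than the paper does. The paper's proof consists essentially of two bald assertions: that the brackets \eqref{PBw1} and \eqref{PBw2} have rank six, and that $\tilde I_0,\tilde I_1,\tilde I_2,\tilde I_3$ are functionally independent and in involution with respect to both; these facts are simply stated as though verified by direct computation in this fixed $7$-dimensional example.

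Your route differs in that you supply a \emph{mechanism} for the involutivity, namely the pencil Casimir identity $(\Pi_2^w+\tfrac{\lambda}{\alpha}\Pi_3^w)^{\#}\rd\,\mathrm{tr}\,\mathcal{M}(\lambda)=0$ and the resulting Lenard--Magri chain. This is entirely consistent with the paper: the identical identity is recorded (in the $v$-variables) as a remark after the odd-case integrability theorem, and the Casimirs the paper writes down immediately after the present proposition are exactly what your chain predicts --- $\mathcal{C}_2=-\tilde I_0$ is the Casimir of $\Pi_2^w$ (your $\lambda^0$ term), while $\mathcal{C}_1=\tilde I_0+\alpha\tilde I_1+\alpha^2\tilde I_2+\alpha^3\tilde I_3+2\alpha^4=\mathrm{tr}\,\mathcal{M}(\alpha)$ is the Casimir of $\Pi_1^w=\Pi_2^w+\Pi_3^w$ (your pencil at $\lambda=\alpha$). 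So your approach buys a conceptual explanation and a direct link to the odd case, at the cost of having to verify the pencil identity rather than simply checking the six pairwise brackets $\{\tilde I_i,\tilde I_j\}$; in a $7$-dimensional example with polynomial integrals both are finite computations of comparable size.
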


\begin{proof}
The rank of the Poisson brackets \eqref{PBw1} and 
 \eqref{PBw2} is six. Furthermore, the integrals  $\tilde{I}_0,\tilde{I}_1,\tilde{I}_2,\tilde{I}_3$ are functionally independent and pairwise in 
 involution with respect to both Poisson brackets. 
\end{proof}

The function 
\begin{eqnarray*}
\mathcal{C}_1 &=& \tilde{I}_0+\alpha \tilde{I}_1+\alpha^2 \tilde{I}_2+\alpha^3\tilde{I}_3+2 \alpha^4 \\ 
&=&\frac{(\alpha w_1 w_3+w_0 w_2 w_4)(\alpha w_2 w_4+w_1 w_3 w_5)(\alpha w_3 w_5+w_2 w_4 w_6)}{w_1 w_2 w_3 w_4 w_5}
\end{eqnarray*}
is a Casimir function for the Poisson bracket \eqref{PBw1}, while the function 
\begin{eqnarray*}
\mathcal{C}_2 \ = \ - \tilde{I}_0 \ = \ \frac{1}{w_3} \prod_{i=0}^{4} (\alpha-\frac{w_i w_{i+2}}{w_{i+1}})
\end{eqnarray*}
is a Casimir function for the Poisson bracket \eqref{PBw2}. 

The integrability of the map $\phi_v: \mathbb{C}^8 \mapsto \mathbb{C}^8$, that corresponds to the $(5,3)$ KdV periodic reduction, 
follows from the integrability of the map $\phi_w$. In particular, let us consider the Poisson bracket 
\begin{eqnarray*}
\{ v_i, v_j \}^v_3 &=& \begin{cases}
-\alpha, &j-i = 3, \\ 
 \ \alpha, &j-i = 5, \\ 
  \frac{\alpha^2} {v_{3+i}^{2}}, &j-i = 6, \\ 
 \ 0, & otherwise.
\end{cases} 
\end{eqnarray*}
for $0\leq i<j \leq 7$. This Poisson bracket coincides with the bracket \eqref{poi3} for $N=5$ and $M=3$. Now, we can see directly that 
the pushforward of the Poisson bracket $\{ \ , \ \}^v_3$ by the map 
$\pi_v:\mathbb{C}^8 \rightarrow \mathbb{C}^7$, 
$$  \pi_v(v_0, \ldots v_7)=(w_0, \ldots, w_6),  \ w_j=v_j v_{j+1},$$
yields the Poisson bracket $\{ \ , \ \}^w_3=\{ \ , \ \}_1-\{ \ , \ \}_2$, where $\{ \ , \ \}_1, \ \{ \ , \ \}_2$ are the brackets \eqref{PBw1} and 
\eqref{PBw2} respectively. In other words, the map $\pi_v:( \mathbb{C}^8, \{ \ , \ \}^v_3) \rightarrow (\mathbb{C}^7, \{ \ , \ \}^w_3)$ is 
a Poisson map. Therefore, the involution of the integrals $\tilde{I}_i$, $i=0,\ldots,3,$ implies the involution of the KdV integrals  
$I_i=\tilde{I}_i \circ \pi_v$. Furthermore, $\phi_v$ is a Poisson map with respect to $\{ \ , \ \}^v_3$. Hence, we conclude that $\phi_v$ 
is Liouville integrable as well. With similar arguments we can prove the Liouville integrability of the U-systems \eqref{1Uev} and 
\eqref{2Uev}. 

\section{Conclusions}
Plane wave type reductions of the discrete Hirota equation are associated with periodic reductions of integrable lattice equation. 
In this work, we focussed on a particular class of Hirota reductions associated with the $(N,M)$ periodic reductions of the lattice KdV equation and we studied the integrability of the induced ordinary difference equations using the properties of the underlying cluster algebra structure. We developed various integrability aspects of the corresponding maps, including invariant Poisson structures, a  bi-hamiltonian formalism, a refactorization of the monodromy matrices and a connection with the integrals of the dressing chain.  
In this way, we managed to prove the Liouville integrability of  all  lattice KdV periodic reductions and of the corresponding $U$-systems when $N+M$ is odd. The even case turned out to be more complicated. In this case the integrability can be justified by the integrability of a different map (one dimension lower than the KdV map) that inherits two compatible Poisson structures. We demonstrated this through an example for $N=5$ and $M=3$ and we aim to give a full proof in the future. In a similar framework we 
are motivated to study different families of discrete Hirota reductions associated with various integrable lattice equations.

\newpage

\appendix
\section{Poisson brackets for U-systems} \label{appa} 

In this appendix, we give a proof of Theorem \ref{dimthm}, and further give 
a precise description of the nondegenerate Poisson brackets for the U-systems in 
Proposition  \ref{redsymp}.                                                                                                                                                                                                                                                                                                                                                                                                                                                                                                                                                                                                                                                                                                                                                                                                                                                                                                                                                                                                                                                                                                                                                                                                                                                                                                                                                                                                                                                                                                                                                                                                                                                                                                                                                                                                                                                                                                                                                                                                                                                                                                                                                                                                                                                                                                                                                                                                                                                                                                                                                                                                                                                                                                                                                                                                                                                                                                                                                                                                                                                                                                                                                                                                                                                                                                                                                                                                                                                                                                                                                                                                                                                                                                                                                                                                                                                                                                                                                                                                                                                                                                                                                                                                                                 
The fact that these U-systems are symplectic when $N$ and $M$ are coprime follows from a computation of the rank  
of the exchange matrix $B$ associated with one of the T-systems (\ref{t1}) and (\ref{t2}).

In order to illustrate the proof, we consider the $B$ matrix for the T-system  for (\ref{t1}) 
in the case $N=4$, $M=3$, which is the $11\times 11$ matrix 
\beq\label{bex2}
B=\left(
\begin{array}{ccccccccccc}
0 & 0 & 0 & 1 & -1 & 0  & 0 &  -1 & 1 & 0 & 0\\
0 & 0 & 0 & 0 &  1  & -1 &  0 & 0 &-1 & 1 & 0\\
0 & 0 & 0 & 0 &  0 &   1 & -1 &  0 & 0 &-1& 1\\
-1&0&0&0&1&0&1 & 0 & 0 &0&-1\\
1&-1&0&-1&0&1&0&1&-1&0 &0\\
0& 1&-1&0&-1&0&1&0&1&-1&0 \\
0& 0& 1&-1&0&-1&0&1&0&1&-1 \\
1&0&0&0&-1&0&-1 & 0 & 0 &0&1\\
-1&1&0&0& 1&-1&0&0&0&0&0\\
0&-1&1&0&0&1&-1&0&0&0&0\\
0&0&-1&1&0& 0 & 1&-1&0&0&0
\end{array}
\right).
\eeq
In this case, $B$ has rank 6, and 
according to Proposition 3.9 in \cite{honeinoue}, 
im$\,B$ has a palindromic basis of integer vectors, unique up to fixing an overall   
sign. This basis is obtained by starting from the vector  
$$
{\bf w}_1=(1,-1,0,0, -1,1,0,0,0,0,0)^T\in \mathbb{Z}^{11},$$ 
and then repeatedly applying the shift operator s, which 
shifts the non-zero entries of any vector  so that 
$$
{\bf w}_2=\rs ({\bf w}_1)=(0, 1,-1,0,0, -1,1,0,0,0,0)^T,$$ 
and so on, up to   
$${\bf w}_6=\rs^5 ({\bf w}_1)=(0,0,0,0, 0,1,-1,0,0, -1,1)^T.$$
Letting ${\bf r}_j$ denote the $j$th row of $B$, we see that   
the first and last three rows are given by 
$$
{\bf r}_1^T ={\bf w}_4, \, 
{\bf r}_2^T ={\bf w}_5, \,
{\bf r}_3^T ={\bf w}_6, \,
{\bf r}_9^T =-{\bf w}_1, \,
{\bf r}_{10}^T =-{\bf w}_2, \,
{\bf r}_{11}^T =-{\bf w}_3,  $$
while the middle three rows are written in terms of the basis as 
$$ 
{\bf r}_5^T ={\bf w}_1-{\bf w}_4, \quad 
{\bf r}_6^T ={\bf w}_2-{\bf w}_5, \quad
{\bf r}_7^T ={\bf w}_3-{\bf w}_6,
$$ 
and the the remaining rows on either side of these are given by 
$$ 
{\bf r}_4^T =-({\bf w}_1+{\bf w}_2+{\bf w}_3+{\bf w}_4+{\bf w}_5+{\bf w}_6), \qquad 
{\bf r}_8^T ={\bf w}_1+{\bf w}_2+{\bf w}_3+{\bf w}_4+{\bf w}_5+{\bf w}_6. 
$$ 
By direct calculations using the conditions (\ref{cond}) on 
the entries $(B_{ij})$, these formulae generalize to the exchange matrices of all the T-systems (\ref{t1}), 
of size $2N+M$, which are obtained by starting from 
the first row with $1$ in entries $M+1$ and $2N+1$, $-1$ in entries $N+1$ and 
$M+N+1$, and all other entries zero, and then recursively obtaining the 
subsequent rows from the conditions (\ref{cond}). Similar calculations apply 
to the exchange matrices of all the T-systems (\ref{t2}), 
of size $2M+N$.

\begin{lemma}\label{blem}
For $N+M$ odd, 
%Given the vectors 
the rows of the exchange matrix $B$ of size $2N+M$ for the T-system (\ref{t1}) 
are given by 
$$ 
{\bf r}_k^T = \rs^{M+k-1} ({\bf v}), 
\quad 
 {\bf r}_{2N+k}^T = -\rs^{k-1} ({\bf v}), 
 \quad 
 {\bf r}_{N+k}^T = \rs^{k-1} ({\bf v}'), \quad 
 \mathrm{for}\quad k=1,\ldots,M, 
$$ 
$$ 
 {\bf r}_{M+k}^T = -\rs^{k-1} ({\bf v}'') = 
  - {\bf r}_{N+M+k}^T , \quad 
 \mathrm{for}\quad k=1,\ldots,N-M, 
$$ 
with 
$$ 
{\bf v} = \sum_{j=1}^{N-M} {\bf w}_j, 
\quad 
{\bf v}' = \sum_{j=1}^{M} {\bf w}_j-{\bf w}_{N-M+j}, 
\quad {\bf v}'' = \sum_{j=1}^{2M} {\bf w}_j, 
$$ 
where   the vectors ${\bf w}_j = \rs^{j-1} ({\bf w}_1)$, $j=1,\ldots, N+M-1$ are the shifts of the 
vector ${\bf w}_1$ with 1 in entries 1 and and $N+2$,  $-1$ in entries 2 and $N+1$, and all 
other entries zero.  
Similarly, the rows of the exchange matrix $B$ of size $2M+N$ for the T-system (\ref{t2}) 
are given by 
$$ 
{\bf r}_k^T = \rs^{M+k-1} ({\bf v}), 
\quad 
 {\bf r}_{N+M+k}^T = -\rs^{k-1} ({\bf v}), 
 \quad \mathrm{for}\quad k=1,\ldots,M, 
$$
$$ 
 {\bf r}_{\min (2M,N)+k}^T = \rs^{k-1} ({\bf v}'), \quad 
 \mathrm{for}\quad k=1,\ldots,\vert N-2M\vert, 
$$ 
$$ 
 {\bf r}_{M+k}^T = -\rs^{k-1} ({\bf v}'') = 
  - {\bf r}_{\max (2M,N) +k}^T,
 \quad 
 \mathrm{for}\quad k=1,\ldots,\min (M,N-M), 
$$ 
with 
$$ 
{\bf v} = \sum_{j=1}^{N-M} {\bf w}_j, 
\quad 
{\bf v}' = \sum_{j=1}^{\min (M,N-M)} {\bf w}_j-{\bf w}_{\min (2M,N) +j}, 
\quad {\bf v}'' = \sum_{j=1}^{2M} {\bf w}_j +\sum_{j=1}^{N-M} {\bf w}_{M+j}, 
$$ 
where in the latter case  the vectors ${\bf w}_j$ for  $j=1,\ldots, N+M-1$ are the shifts of the 
vector ${\bf w}_1$ with 1 in entries 1 and and $M+2$,  $-1$ in entries 2 and $M+1$, and all 
other entries zero. 
\end{lemma}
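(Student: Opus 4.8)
The plan is to use that the exchange matrix $B$ is uniquely determined by its prescribed first row together with the recursive conditions \eqref{cond}, so that it suffices to check that the row formulas in the statement satisfy \eqref{cond}; I would carry this out by induction on the row index. For the base case I would evaluate the telescoping sum $\mathbf{v}=\sum_{j=1}^{N-M}\mathbf{w}_j$: since $\mathbf{w}_j=\rs^{j-1}(\mathbf{w}_1)$ carries $+1$ in entries $j$ and $N+1+j$ and $-1$ in entries $j+1$ and $N+j$, the two telescoping pieces collapse to a vector with $+1$ in entries $1$ and $2N-M+1$ and $-1$ in entries $N-M+1$ and $N+1$, and applying $\rs^M$ then yields $+1$ in entries $M+1$ and $2N+1$ and $-1$ in entries $N+1$ and $M+N+1$ --- exactly the prescribed first row of $B$ for \eqref{t1}. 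The identical bookkeeping settles the base case for \eqref{t2}, and the analogous telescoping identities for $\mathbf{v}'=\sum_{j=1}^{M}(\mathbf{w}_j-\mathbf{w}_{N-M+j})$ and $\mathbf{v}''=\sum_{j=1}^{2M}\mathbf{w}_j$ put those vectors in a form convenient for the inductive step.

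For the inductive step I would exploit that $B_{1,k}$ is nonzero only for $k$ in the four-element set $\{M+1,N+1,M+N+1,2N+1\}$ (and the analogous set for \eqref{t2}), so that in \eqref{cond} the quadratic correction $B_{1,i+1}[-B_{1,k+1}]_+-B_{1,k+1}[-B_{1,i+1}]_+$ vanishes unless $i+1$ or $k+1$ is one of those special indices. Away from those indices \eqref{cond} therefore reduces to $B_{i+1,k+1}=B_{i,k}$, i.e.\ $\mathbf{r}_{i+1}^{T}=\rs(\mathbf{r}_i^{T})$, which is consistent with the claimed rows because within each of the five blocks of rows (the shifted $\mathbf{v}$-block, the $-\mathbf{v}''$-block, the $\mathbf{v}'$-block, the $+\mathbf{v}''$-block, and the $-\mathbf{v}$-block) consecutive rows are indeed shifts of one another. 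The substantive part is then to verify, at each of the finitely many values of $i$ for which $i+1$ meets a special index, that the corrections --- together with the condition $B_{i,D}=B_{1,i+1}$ governing the last column, and with skew-symmetry, which together supply the wraparound from the last row back to the first --- reproduce exactly the prescribed jumps between the blocks. At each such $i$ only a bounded number of entries change relative to a pure shift, so each check is a short finite computation, and preservation of skew-symmetry at every step is a useful running consistency check.

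The T-system \eqref{t2}, of size $2M+N$, is treated in parallel, with $\mathbf{w}_1$ now carrying $+1$ in entries $1$ and $M+2$ and $-1$ in entries $2$ and $M+1$. The one genuine extra wrinkle is that the relative order of the special indices $2M+1$ and $N+1$ depends on the sign of $N-2M$; this is precisely the origin of the $\min$, $\max$ and $|N-2M|$ in the statement, and it forces a split into the sub-cases $2M<N$ and $2M>N$, each handled by the same telescoping-plus-induction argument. Once the rows are identified in either case, linear independence of $\mathbf{w}_1,\dots,\mathbf{w}_{N+M-1}$ together with the fact that every row is an explicit integer combination of them shows that these vectors span the row space of $B$; combined with skew-symmetry this pins down the rank of $B$ as $N+M-1$ when $N+M$ is odd, which is exactly the input required for Theorem \ref{dimthm} and for identifying the $\mathbf{w}_j$ as the palindromic basis.

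I expect the main obstacle to be precisely the boundary bookkeeping in the inductive step: tracking how the two telescoping pieces of each row interact with the special indices and with the last-column condition, and checking that the net effect matches the prescribed block-to-block jumps with the right signs and index offsets. The interior of each block is immediate from the generic form of \eqref{cond}; it is only at the handful of transition values of $i$, where the quadratic corrections switch on, that genuine case analysis is needed.
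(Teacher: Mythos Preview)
Your inductive strategy --- fix the first row by telescoping the sum $\sum_j\mathbf{w}_j$, then use \eqref{cond} to propagate, noting that when $B_{1,i+1}=0$ the recursion collapses to a pure shift $\mathbf{r}_{i+1}^T=\rs(\mathbf{r}_i^T)$ and only the four special values of $i+1$ require a block-transition check --- is correct and is exactly what the paper means by ``direct calculations using the conditions \eqref{cond}''.  The paper gives no further detail for this lemma, so your proposal is a faithful elaboration of the intended argument, including the $2M<N$ versus $2M>N$ split for \eqref{t2}.

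One point in your final paragraph is wrong, though it is extraneous to the lemma itself.  You claim that linear independence of $\mathbf{w}_1,\ldots,\mathbf{w}_{N+M-1}$, together with every row of $B$ being an integer combination of them, shows that these vectors span the row space and hence that $\mathrm{rank}\,B=N+M-1$.  The implication runs the wrong way: the lemma only yields $\mathrm{im}\,B\subseteq\mathrm{span}\{\mathbf{w}_j\}$, and nothing here forces equality.  The paper says this explicitly just after the lemma, and establishes the rank in a separate proposition whose proof genuinely uses $\gcd(N,M)=1$ via a Sylvester resultant of $(x^{2M}-1)/(x-1)$ and $(x^{N-M}-1)/(x-1)$.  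So keep your lemma proof as is, but drop the rank claim --- it is not a consequence of this lemma and needs the coprimality hypothesis, which the lemma does not assume.
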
 

The above explicit expressions for the rows of $B$ show that im$\,B$ is 
a %in the 
subspace of the span of the vectors ${\bf w}_1, \ldots, {\bf w}_{N+M-1}$, but to show 
that these spaces coincide requires coprimality of $N$ and $M$. 
 
\begin{proposition} \label{brank} 
For $\gcd (N,M)=1$ with $N+M$ odd, the exchange matrices for the T-systems %$B$ has 
(\ref{t1}) and (\ref{t2}) both have rank $N+M-1$. 
\end{proposition}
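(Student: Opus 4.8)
The plan is to reduce the rank computation, via Lemma~\ref{blem}, to a coprimality statement about two polynomials of the form $(1-x^{a})/(1-x)$. First I would fix the identification of the ambient lattice $\mathbb{Z}^{D}$ (with $D=2N+M$ for \eqref{t1} and $D=2M+N$ for \eqref{t2}) with the space of polynomials of degree $<D$, sending the $i$-th standard basis vector to $x^{i-1}$; under this dictionary the shift operator $\rs$ becomes multiplication by $x$ (truncated above degree $D-1$), and the vector $\mathbf{w}_1$ of Lemma~\ref{blem} corresponds to $p(x):=(1-x)(1-x^{N})$ for \eqref{t1} and to $p(x):=(1-x)(1-x^{M})$ for \eqref{t2}. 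Then $\mathbf{w}_j\leftrightarrow x^{j-1}p(x)$ for $j=1,\dots,N+M-1$; since $p\neq0$ these are linearly independent, and $\deg\!\big(x^{N+M-2}p\big)=D-1$ confirms that they lie in the ambient space. Since Lemma~\ref{blem} exhibits every row of $B$ as an integer combination of the $\mathbf{w}_j$, this at once gives $\mathrm{im}\,B\subseteq\langle\mathbf{w}_1,\dots,\mathbf{w}_{N+M-1}\rangle$, hence $\mathrm{rank}\,B\le N+M-1$.

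For the reverse inequality I would read $\mathrm{im}\,B$ off Lemma~\ref{blem}: it is the span of the listed shifts of $\mathbf{v},\mathbf{v}',\mathbf{v}''$. Translating to polynomials, $\mathbf{v}$ corresponds to $p\,q_1$ and $\mathbf{v}''$ (in the generic case) to $p\,q_3$, where $q_1(x)=(1-x^{N-M})/(1-x)$ and $q_3(x)=(1-x^{2M})/(1-x)$, and the shift ranges furnished by Lemma~\ref{blem} are precisely $q_1\cdot\mathbb{Q}[x]_{<2M}$ and $q_3\cdot\mathbb{Q}[x]_{<N-M}$ (after division by $p$), the shifts of $\mathbf{v}'$ being accounted for by these, possibly after a small manipulation in case \eqref{t2} with $N>2M$, described below. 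Since $\mathrm{im}\,B$ is contained in the space $W$ of polynomials of degree $\le D-1$ divisible by $p$, and division by $p$ identifies $W$ with $\mathbb{Q}[x]_{\le N+M-2}$, the claim reduces to proving $q_1\,\mathbb{Q}[x]_{<2M}+q_3\,\mathbb{Q}[x]_{<N-M}=\mathbb{Q}[x]_{\le N+M-2}$.

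This last equality is a standard resultant fact: when $\gcd(q_1,q_3)=1$, the linear map $(a,b)\mapsto aq_1+bq_3$ on $\mathbb{Q}[x]_{<2M}\oplus\mathbb{Q}[x]_{<N-M}$ has kernel exactly the line spanned by $(q_3,-q_1)$, so its image has dimension $2M+(N-M)-1=N+M-1$; as that image lies in $\mathbb{Q}[x]_{\le N+M-2}$, of equal dimension, the map is onto. The hypotheses enter only through $\gcd(q_1,q_3)=1$: over $\mathbb{Q}$ one has $\gcd(1-x^{a},1-x^{b})=1-x^{\gcd(a,b)}$, so $\gcd\!\big((1-x)q_1,(1-x)q_3\big)=1-x^{\gcd(N-M,2M)}$; now $\gcd(N,M)=1$ forces $\gcd(N-M,2M)$ to be a power of $2$, and $N+M$ odd makes $N-M$ odd, so $\gcd(N-M,2M)=1$ and hence $\gcd(q_1,q_3)=1$. (This is precisely where the even case breaks down: if $2\mid\gcd(N-M,2M)$ then $\gcd(q_1,q_3)$ has positive degree and the rank can drop.)

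I expect the bookkeeping for \eqref{t2} to be the main obstacle, since there the exact shapes of $\mathbf{v}'$ and $\mathbf{v}''$ in Lemma~\ref{blem} depend on the sign of $N-2M$. For $M<N<2M$ everything parallels \eqref{t1} verbatim (the shifts of $\mathbf{v}'$ are redundant and one is left with $q_1\,\mathbb{Q}[x]_{<2M}+q_3\,\mathbb{Q}[x]_{<N-M}$). For $N>2M$, however, $\mathbf{v}''$ corresponds to $p\,(q_3+x^{M}q_1)$ and supplies only $M$ shifts, so one must use the shifts of $\mathbf{v}'$ — which correspond to $p\,(1-x^{M})q_3\cdot\mathbb{Q}[x]_{<N-2M}$ — together with the elementary identity $\mathbb{Q}[x]_{<M}+(1-x^{M})\,\mathbb{Q}[x]_{<N-2M}=\mathbb{Q}[x]_{<N-M}$ (a direct sum, by comparison of leading terms) to recover $q_3\,\mathbb{Q}[x]_{<N-M}$ modulo $q_1\,\mathbb{Q}[x]_{<2M}$. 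In every case the problem then collapses to the coprimality statement above, which gives $\mathrm{rank}\,B=N+M-1$ for both \eqref{t1} and \eqref{t2}.
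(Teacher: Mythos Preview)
Your proof is correct and follows essentially the same route as the paper: both arguments use Lemma~\ref{blem} to express the rows of $B$ in the basis $\mathbf{w}_j$, and then reduce the rank question to the coprimality of $q_1=(x^{N-M}-1)/(x-1)$ and $q_3=(x^{2M}-1)/(x-1)$, which holds precisely when $\gcd(N,M)=1$ and $N+M$ is odd. The only difference is packaging: the paper selects $N+M-1$ specific rows, writes them in the $\mathbf{w}_j$ basis to obtain an explicit $(N+M-1)\times(N+M-1)$ matrix, and identifies its determinant (after expanding along the last column) with the Sylvester resultant $\mathrm{Res}(q_3,q_1)$; you instead identify $\mathbb{Z}^D$ with polynomials and argue directly that $q_1\,\mathbb{Q}[x]_{<2M}+q_3\,\mathbb{Q}[x]_{<N-M}=\mathbb{Q}[x]_{\le N+M-2}$ via the kernel of $(a,b)\mapsto aq_1+bq_3$. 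These are equivalent formulations of the same linear-algebra fact. Your treatment of \eqref{t2} is more explicit than the paper's (which simply asserts that suitable row operations lead to the same determinant), and your case analysis for $N>2M$ via the direct-sum identity $\mathbb{Q}[x]_{<M}\oplus(1-x^M)\mathbb{Q}[x]_{<N-2M}=\mathbb{Q}[x]_{<N-M}$ is a clean way to handle that subcase.
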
 
\begin{proof} 
To prove the result, it suffices to show that $N+M-1$ rows (or columns) of $B$ are linearly 
independent. For (\ref{t1}), where the exchange matrix has size $2N+M$, 
if we choose the first $M$ rows of $B$, and minus the last $M$ rows, together 
with rows ${\bf r}_{N+M+1},\ldots,{\bf r}_{2N-1}$, then by expanding them in the basis ${\bf w}_j$ 
we see that this gives $N+M-1$ independent vectors if and only if the determinant 
\begin{equation}
\label{delta}  
\begin{array}{c@{\!\!\!}l}
\left| 
\begin{array}{cccccccc} 
1 & \cdots & \cdots & 1 & 0 & \cdots & \cdots & \cdots \\ 
0 & 1 & \cdots  & \cdots & 1 & 0 & \cdots & \cdots \\ 
 & \ddots & \ddots  &   &  & \ddots &  &  \\ 
 & & \ddots & \ddots & & & \ddots & \\ 
  &   &         & 0 &  1 & \cdots & \cdots  & 1 \\ 
1 & \cdots & \cdots & \cdots & 1 & 0 & \cdots & \cdots \\ 
 & \ddots & & & & \ddots & \ddots & \\
 & & 1 & \cdots & \cdots & \cdots & 1 & 0 
\end{array}
\right|  
& 
\begin{array}[c]{@{}l@{\,}l}
   \left. \begin{array}{c} \vphantom{0}  \\ \vphantom{0}
   \\ \vphantom{\ddots} \\ \vphantom{\ddots} \\ \vphantom{0} \end{array} \right\} & \text{$2M$} \\
\left. \begin{array}{c} \vphantom{0} \\ \vphantom{\ddots}
   \\ \vphantom{0}  \end{array} \right\} & \text{$N-M-1$}
\end{array}
\end{array}
\end{equation}
is non-zero. For (\ref{t2}), where the exchange matrix has size $2M+N$, we 
again choose the first and the last $M$ rows, and apply suitable row 
operations (which are different for $N>2M$ and $N<2M$) to obtain $N-M-1$ more rows, 
leading  to the same determinant.  
Up to an overall sign, expanding (\ref{delta}) about the last column yields the determinant 
of the $(N+M-2)\times (N+M-2)$ Sylvester matrix for the resultant of two polynomials, namely 
$$ 
\mathrm{Res}\left(\frac{x^{2M}-1}{x-1}, \frac{x^{N-M}-1}{x-1} \right), 
$$ 
and this is non-zero if and only if $N$ and $M$ are coprime, when these  polynomials have no 
roots in common.  
\end{proof} 

Before considering the associated U-systems, we present the corresponding results when $N+M$ is even. 

\begin{lemma}\label{blemeven}
For $N+M$ even, 
%Given the vectors 
the rows of the exchange matrix $B$ of size $2N+M$ for the T-system (\ref{t1}) 
are given by 
$$ 
{\bf r}_k^T = \rs^{M+k-1} ({\bf v}), 
\quad 
 {\bf r}_{2N+k}^T = -\rs^{k-1} ({\bf v}), \quad 
 {\bf r}_{N+k}^T = \rs^{k-1} ({\bf v}'), \quad 
 \mathrm{for}\quad k=1,\ldots,M, 
$$ 
$$ 
 {\bf r}_{M+k}^T = -\rs^{k-1} ({\bf v}'') = 
  - {\bf r}_{N+M+k}^T , \quad 
 \mathrm{for}\quad k=1,\ldots,N-M, 
$$ 
with 
$$ 
{\bf v} = \sum_{j=1}^{(N-M)/2} {\bf w}_{2j-1}, 
\quad 
{\bf v}' = \sum_{j=1}^{(N-M)/2} {\bf w}_{2j-1}-{\bf w}_{M+2j-1}, 
\quad {\bf v}'' = \sum_{j=1}^{M} {\bf w}_{2j-1}, 
$$ 
where   the vectors ${\bf w}_j = \rs^{j-1} ({\bf w}_1)$, $j=1,\ldots, N+M-2$ are the shifts of the 
vector ${\bf w}_1$ with 1 in entries 1 and $N+3$,  $-1$ in entries 3 and $N+1$, and all 
other entries zero.  
Similarly, the rows of the exchange matrix $B$ of size $2M+N$ for the T-system (\ref{t2}) 
are given by 
$$ 
{\bf r}_k^T = \rs^{M+k-1} ({\bf v}), 
\quad 
 {\bf r}_{N+M+k}^T = -\rs^{k-1} ({\bf v}), \quad 
 \mathrm{for}\quad k=1,\ldots,M, 
$$
$$ 
 {\bf r}_{\min (2M,N)+k}^T = \rs^{k-1} ({\bf v}'), \quad 
 \mathrm{for}\quad k=1,\ldots,\vert N-2M\vert, 
$$ 
$$ 
 {\bf r}_{M+k}^T = -\rs^{k-1} ({\bf v}'') = 
  - {\bf r}_{\max (2M,N) +k}^T 
\quad 
 \mathrm{for}\quad k=1,\ldots,\min (M,N-M), 
$$ 
with 
$$ 
{\bf v} = \sum_{j=1}^{(N-M)/2} {\bf w}_{2j-1},  \quad 
{\bf v}' = \begin{cases} \sum_{j=1}^{M} {\bf w}_{2j-1}-{\bf w}_{M +2j-1}, & N>2M, \\ 
  \sum_{j=1}^{(N-M)/2} {\bf w}_{2j-1}-{\bf w}_{N +2j-1}, &N<2M, \end{cases}
$$
$$ 
\quad {\bf v}'' = \sum_{j=1}^{M} {\bf w}_{2j-1} +\sum_{j=1}^{(N-M)/2} {\bf w}_{M+2j-1}, 
$$ 
where in the latter case  the vectors ${\bf w}_j$ for  $j=1,\ldots, N+M-2$ are the shifts of the 
vector ${\bf w}_1$ with 1 in entries 1 and and $M+3$,  $-1$ in entries 3 and $M+1$, and all 
other entries zero. 
\end{lemma}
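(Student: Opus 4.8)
The plan is to prove Lemma \ref{blemeven} by the same method used for Lemma \ref{blem} in the odd case: induction on the row index, using the recursions (\ref{cond}) that generate all of $B$ from its first row. The only structural change is that the palindromic basis vectors $\mathbf{w}_j$ are now shifts of a vector $\mathbf{w}_1$ whose support has a step-$2$ pattern (nonzero in entries $1,3,N+1,N+3$ for (\ref{t1}), resp.\ $1,3,M+1,M+3$ for (\ref{t2})), which is exactly what is needed so that the $\mathbf{w}_j$ span $\mathrm{im}\,B$ when the even-step products of (\ref{1e})--(\ref{2e}) are involved; apart from this the argument runs in parallel with Lemma \ref{blem}.

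First I would rewrite the second relation in (\ref{cond}), $B_{i+1,k+1}=B_{i,k}+B_{1,i+1}[-B_{1,k+1}]_{+}-B_{1,k+1}[-B_{1,i+1}]_{+}$, as a ``shift plus correction'' rule: the $(i{+}1)$st row of $B$ is the literal one-place shift of the $i$th row together with a correction term supported only on the columns $k+1$ with $B_{1,k+1}\neq 0$, and contributing at all only when $i+1$ is itself one of the distinguished indices with $B_{1,i+1}\neq 0$. From (\ref{mut}) these distinguished indices, with the first-row values attached, are $M+1,2N+1$ (value $+1$) and $N+1,N+M+1$ (value $-1$) for the T-system (\ref{t1}) of size $2N+M$, and $M+1,N+M+1$ (value $+1$), $2M+1,N+1$ (value $-1$) for (\ref{t2}) of size $2M+N$. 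Hence whenever the correction is nonzero it adds or removes a single basis vector $\mathbf{w}_j$, an appropriately shifted copy of the elementary block carried by the first row, and this is precisely the mechanism by which the closed form of the rows changes shape as the index crosses one of the boundaries $k=M$, $k=N$, $k=N+M$.

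Next I would check the base cases directly: the relation $B_{i,D}=B_{1,i+1}$ fixes the last column, and together with the explicit first row this pins down ${\bf r}_1$ and the first member of each of the three families in the statement; a short computation with the step-$2$ vector $\mathbf{w}_1$ then confirms ${\bf r}_1^T=\rs^{M}(\mathbf{v})$, ${\bf r}_{2N+1}^T=-\mathbf{v}$ (resp.\ ${\bf r}_{N+M+1}^T=-\mathbf{v}$ for (\ref{t2})), ${\bf r}_{N+1}^T=\mathbf{v}'$ (resp.\ ${\bf r}_{\min(2M,N)+1}^T=\mathbf{v}'$), and ${\bf r}_{M+1}^T=-\mathbf{v}''$. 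For the induction step, assuming ${\bf r}_i$ has the asserted expansion in the $\mathbf{w}_j$, I apply the shift-plus-correction rule: away from a boundary the correction vanishes and one simply gets ${\bf r}_{i+1}^T=\rs({\bf r}_i^T)$, which stays within a family since shifting each $\mathbf{w}_j$ by one gives $\mathbf{w}_{j+1}$ as long as no nonzero entry is pushed past the end; at a distinguished row the correction contributes exactly the one $\pm\mathbf{w}_j$ needed to pass from one family to the next, e.g.\ from the $\mathbf{v}$-type row ${\bf r}_M^T$ to the $\mathbf{v}''$-type row ${\bf r}_{M+1}^T$, and symmetrically on the opposite side of the matrix, which is what makes the resulting integer basis palindromic in the sense of Proposition 3.9 of \cite{honeinoue}. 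Carrying out the (\ref{t2}) case along the same lines forces the split between $N>2M$ and $N<2M$ in the formula for $\mathbf{v}'$, according to whether the distinguished column $N+1$ lies to the right or to the left of the end of the shorter shifts.

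The main obstacle I anticipate is purely bookkeeping: tracking precisely which index $j$ the correction term acts on as the row index increases through the several boundary rows where the closed form changes, and confirming that in the even case the step-$2$ support of $\mathbf{w}_1$ interacts with those transitions so as to land on exactly the stated sums $\mathbf{v}=\sum_j \mathbf{w}_{2j-1}$, $\mathbf{v}'$ and $\mathbf{v}''$ rather than on some other integer combination of the $\mathbf{w}_j$. The two sub-cases of (\ref{t2}) must be handled separately, but neither requires any idea beyond the one already used for Lemma \ref{blem}.
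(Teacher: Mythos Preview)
Your proposal is correct and follows the same approach as the paper: the paper does not give a separate proof of Lemma \ref{blemeven} but presents it as the even-parity analogue of Lemma \ref{blem}, which in turn is justified only by the remark that the formulae follow ``by direct calculations using the conditions (\ref{cond})'' on the entries $(B_{ij})$, starting from the explicit first row and recursively obtaining subsequent rows. Your inductive shift-plus-correction reformulation of (\ref{cond}) is exactly this direct calculation spelled out in more detail, so there is nothing to add.
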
 

\begin{proposition} \label{brankeven} 
For $\gcd (N,M)=1$ with $N+M$ even, the exchange matrices for the T-systems %$B$ has 
(\ref{t1}) and (\ref{t2}) both have rank $N+M-2$. 
\end{proposition}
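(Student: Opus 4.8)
The plan is to follow the two-step strategy of the proof of Proposition~\ref{brank}, adapted to $N+M$ even via the row formulas of Lemma~\ref{blemeven}. For the upper bound, Lemma~\ref{blemeven} exhibits every row of the exchange matrix $B$ — for either (\ref{t1}) or (\ref{t2}) — as an integer combination of ${\bf w}_1,\dots,{\bf w}_{N+M-2}$, so $\mathrm{im}\,B\subseteq\mathrm{span}\{{\bf w}_1,\dots,{\bf w}_{N+M-2}\}$. These vectors are linearly independent because ${\bf w}_j=\rs^{j-1}({\bf w}_1)$ has its first nonzero entry equal to $1$ in position $j$, so any vanishing linear combination is forced to be trivial by reading off positions $1,2,\dots$ in turn. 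Hence $\mathrm{rank}\,B\le N+M-2$, with no hypothesis on $N,M$ used so far.

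For the reverse inequality I would exhibit $N+M-2$ linearly independent rows. For (\ref{t1}) (size $2N+M$), take the first $M$ rows, the negatives of the last $M$ rows, and the $N-M-2$ rows ${\bf r}_{N+M+1},\dots,{\bf r}_{2N-2}$ (there are this many since $N>M$ with $N-M$ even forces $N-M\ge2$). By Lemma~\ref{blemeven}, in the basis $\{{\bf w}_i\}$ these are the shifts $\rs^i({\bf v})$ for $i=0,\dots,2M-1$ together with $\rs^j({\bf v}'')$ for $j=0,\dots,N-M-3$; identifying position $p$ with the monomial $x^{p-1}$ turns ${\bf v}$ and ${\bf v}''$ into $P(x)=\tfrac{x^{N-M}-1}{x^2-1}$ and $R(x)=\tfrac{x^{2M}-1}{x^2-1}$ and turns $\rs^i$ into multiplication by $x^i$. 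So the claim becomes: the square map $(u,v)\mapsto uP+vR$ with $\deg u\le 2M-1$, $\deg v\le N-M-3$ is injective on polynomials of degree $<N+M-2$.

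Because $P$ and $R$ are polynomials in $x^2$, the idea is to substitute $y=x^2$ and split the selected family by the parity of the shift. Setting $\tilde P(y)=\tfrac{y^{(N-M)/2}-1}{y-1}$ and $\tilde R(y)=\tfrac{y^M-1}{y-1}$, the even-shift members become $\{y^s\tilde P:0\le s\le M-1\}\cup\{y^s\tilde R:0\le s\le(N-M-4)/2\}$ inside the $y$-polynomials of degree $<(N+M-2)/2$, and the odd-shift members are this same list multiplied by $x$, lying in the complementary subspace. On each block one has $u\tilde P+v\tilde R=0$ with $\deg u\le\deg\tilde R$ and $\deg v<\deg\tilde P$; using $\gcd(\tilde P,\tilde R)=1$ this forces $\tilde R\mid u$ and then, comparing degrees, $u=v=0$. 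Hence the whole family is independent iff $\gcd\big((N-M)/2,M\big)=1$, which (as $M$ is odd) is equivalent to $\gcd(N,M)=1$; equivalently, the relevant $(N+M-2)\times(N+M-2)$ determinant is a square of a (near-)Sylvester determinant of $\tilde P$ and $\tilde R$ and is nonzero exactly when $N,M$ are coprime. The T-system (\ref{t2}) is handled identically from the corresponding formulas of Lemma~\ref{blemeven}, performing — as in Proposition~\ref{brank} — the row operations that differ according to whether $N>2M$ or $N<2M$ (the case $N=2M$ cannot occur, since it would force $M=1$ and $N+M$ odd), again reducing to the same determinant.

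The hard part is the combinatorial bookkeeping in the previous step: choosing exactly the right $N+M-2$ rows, and for (\ref{t2}) the right row operations, so that after the substitution $y=x^2$ and the parity split the coefficient matrix block-decomposes into two copies of the near-Sylvester matrix of $\tilde P,\tilde R$. Once this normal form is in place, the coprimality criterion is the routine resultant computation sketched above.
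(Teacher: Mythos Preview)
Your proposal is correct and follows essentially the same route as the paper: the same selection of $N+M-2$ rows for (\ref{t1}), the same reduction to (\ref{t2}) via row operations split by $N\gtrless 2M$, and the same pair of polynomials $\frac{x^{N-M}-1}{x^2-1}$, $\frac{x^{2M}-1}{x^2-1}$ governing independence. The only cosmetic difference is that where the paper expands the coefficient matrix about its last two columns to land directly on the Sylvester matrix for the resultant of these two polynomials (of size $N+M-4$), you instead substitute $y=x^2$ and split by shift parity to obtain two identical blocks in $\tilde P,\tilde R$; these are equivalent bookkeeping devices for the same determinant, and your explicit upper-bound argument via the staircase structure of the ${\bf w}_j$ is a welcome addition that the paper leaves implicit.
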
 
\begin{proof}  
To show that $N+M-2$ rows of $B$ are linearly 
independent in each case, we start with (\ref{t1}), where the exchange matrix has size $2N+M$. 
Choosing the first $M$ rows of $B$, and minus the last $M$ rows, together 
with the $N-M-2$ rows ${\bf r}_{N+M+1},\ldots,{\bf r}_{2N-2}$, then by expanding them in the basis ${\bf w}_j$ 
we obtain a square matrix of size $N+M-2$, namely 
%see that this gives $N+M-1$ independent vectors if and only if the determinant 
\begin{equation}
\label{deltaeven}  
\begin{array}{c@{\!\!\!}l}
\left(
\begin{array}{cccccccccc} 
1 & 0& 1 & \cdots & \cdots & 1 & 0 & 0 & \cdots & \cdots \\ 
0 & 1 & 0 & 1& \cdots  & \cdots & 1 & 0 & \cdots & \cdots \\ 
 & \ddots & \ddots  & \ddots & \ddots &    &  & \ddots &  &  \\ 
 & & \ddots & \ddots & \ddots & \ddots & & & \ddots & \\ 
  &   &         & 0 &  1 & 0 & 1& \cdots & \cdots  & 1 \\ 
1 & 0 & 1 & \cdots & 1 &  0 & \cdots & \cdots & \cdots & \cdots \\ 
 & \ddots & \ddots & \ddots  & & \ddots & \ddots &  & & \\
  & & \ddots & \ddots & \ddots  & & \ddots & \ddots &  & \\
 & & & 1 & 0 & 1 & \cdots & 1 &  0 & 0 
\end{array}
\right)  
& 
\begin{array}[c]{@{}l@{\,}l}
   \left. \begin{array}{c} \vphantom{0}  \\ \vphantom{0}
   \\  \vphantom{\ddots} \\ \vphantom{\ddots} \\ \vphantom{0} \end{array} \right\} & \text{$2M$} \\
\left. \begin{array}{c} \vphantom{0} \\ \vphantom{\ddots}
   \\ \vphantom{\ddots} \\ \vphantom{0}  \end{array} \right\} & \text{$N-M-2$}
\end{array}
\end{array}
\end{equation}
with an alternating block $1 01\cdots 01$ of width $N-M-1$ in each of the first 
$2M$ rows, and a similar block of width $2M-1$ in the last $N-M-2$ rows. Upon expanding about the 
last $2$ columns, we obtain the determinant of the 
Sylvester matrix of size $N+M-4$ for 
the resultant %of two polynomials, namely 
$$ 
\mathrm{Res}\left(\frac{x^{2M}-1}{x^2-1}, \frac{x^{N-M}-1}{x^2-1} \right), 
$$ 
which is non-zero if and only if the odd integers 
$N$ and $M$ are coprime. 
For (\ref{t2}), where the exchange matrix has size $2M+N$, we 
again choose the first and the last $M$ rows, and by taking suitable linear  
combinations of the other rows of $B$ we obtain $N-M-2$ more rows, 
leading  to the same matrix (\ref{deltaeven}) as for (\ref{t1}).
\end{proof}

Having verified that the corresponding $B$ matrices have a kernel of the appropriate dimension, Theorem \ref{dimthm} 
now follows directly from the general results on cluster maps in \cite{FH}. However, there remains the question of determining the coefficients $a_k$ that appear as coefficients in the log-canonical bracket (\ref{ubrackets}) for the associated U-system. Although we have not succeeded in finding a simple self-contained expression for the $a_k$, analogous to the formulae 
in Theorem \ref{12bras} for  $c_k,d_k$ that appear in the brackets for the KdV reductions, we will describe a 
simple method which quickly yields  the coefficients of the U-system brackets, and show that 
the log-canonical Poisson structure is unique up to multiplication by a scalar. 
The reduction from the presymplectic form then guarantees that this Poisson structure is nondegenerate. 

We start by considering the U-systems (\ref{1o}) and (\ref{2o}) in Proposition \ref{redsymp}, for fixed coprime integers $N>M$ 
with $N+M$ odd. If we set $m=0$ and take the Poisson bracket of both sides of  
(\ref{1o}) with $u_j$ for $j\in[1,(N+M-3)/2]$, then each value of $j$ gives two 
homogeneous linear equations for the coefficients $a_k$ in (\ref{ubrackets}). Since $a_k=-a_{-k}$, we need only write 
the equations in terms of $a_k$ for positive $k\in[1,N+M-2]$, and taking the brackets with the left-hand side 
of (\ref{1o}) produces the $(N+M-3)/2$ equations 
\beq\label{hgl1} 
a_j + a_{j+1} +\cdots+ a_{N+M-j} =0, \qquad j=2,\ldots, (N+M-1)/2,   
\eeq 
while from the brackets right-hand side the equations split up into the $M-1$ equations 
\beq\label{hgl2}
a_j + a_{j+1}+\cdots + a_{N-M+j-1} =0, \qquad j=1,\ldots, M-1,  
\eeq
together   with a further $(N-M-1)/2$ equations given by 
\beq\label{hgl3}
a_j + a_{j+1}+\cdots + a_{N-M-j} =0, \qquad j=1,\ldots, (N-M-1)/2. 
\eeq
Thus we have a total of $N+M-3$ homogeneous linear equations for $N+M-2$ unknowns, which 
completely determine the  coefficients $a_k$ up to overall multiplication by a scalar. Furthermore, it is 
clear that taking  Poisson brackets of $u_j$ with both sides of  (\ref{2o}) yields an identical set of 
equations, so the two different U-systems (\ref{1o}) and (\ref{2o}) preserve the same log-canonical 
Poisson structure.  

\begin{table*}
\centering
\caption{Index tableau for $N=16$, $M=9$} 
\begin{tabular}{l|c|c|| c| c |r } \hline
& $r$ & $k$ &  $r$ & $k$  &\\  \hline  \hline 
& 1 & 18 &  24 & 7 & $\,\,{\bf II}$\\ 
${\bf II}^{-1}$& 2 & 11 &  23 & 14 & \\ 
${\bf III}$ & 3 & 4  & 22 & 21 & \\ 
& 4 & 22 & 21 & 3 & $\,\,{\bf II}$ \\ 
 & 5 & 15 & 20 & 10 & \\ \hline 
${\bf II}^{-1}$& 6 & 8 & 19 & 17  &\\ 
${\bf III}$&7 & 1 & 18 & (24) &  \\ 
 & 8 & 19 & 17 & 6 & $\,\,{\bf II}$ \\ 
${\bf II}^{-1}$& 9 & 12 & 16 & 13 & \\ 
${\bf III}$& 10 & 5 & 15 & 20 \\ 
& 11 & 23 & 14 & 2 & $\,\,{\bf II}$ \\ 
& 12 & 16 & 13 & 9 & \\ 
\hline
\hline\end{tabular}
\end{table*}

Upon taking linear combinations of the equations, an equivalent set of relations is obtained, namely 
\beq\label{hgliiI} 
{\bf I: } \quad a_k + a_{N+M-k} =0, \qquad k=2,\ldots, (N+M-1)/2. 
\eeq 
(By symmetry, the above is valid for all $k\in[2,N+M-2]$, but here we are concerned with counting the 
number of independent relations.)  
Similarly, subtracting the equations (\ref{hgl2}) from each other in pairs produces the relations 
\beq\label{hgliiiII} 
{\bf II: } \quad a_k = a_{N-M+k}, \qquad k=1,\ldots, M-2, 
\eeq 
while 
 combining the relations (\ref{hgl3}) with the case $j=1$ of (\ref{hgl1}) gives the 
simpler set of equations 
\beq\label{hgli} 
a_k + a_{N-M-k} =0, \qquad k=0,\ldots, (N-M-1)/2,  
\eeq 
which in fact holds for all $k\in [0,N-M]$ by symmetry. Thus we have a simpler set of  
$N+M-3$ independent linear relations, but before proceeding further it is convenient to 
use (\ref{hgli}) followed by  (\ref{hgliiI}) to replace $a_k=-a_{N-M-k}=a_{2M+k}$, so that 
instead of the $(N-M+1)/2$ equations (\ref{hgli}) we can take 
\beq\label{hglipIII} 
{\bf III: } \quad a_k = a_{2M+k}, \qquad k=0,\ldots, (N-M-1)/2.
\eeq 
(The validity of this identity extends to $k\in[0,N-M-2]$, but once again we are counting independent 
relations.)  

Note that from (\ref{hglipIII}) for $k=0$ we have $a_{2M}=0$, and so by (\ref{hgliiI}) with $k=2M$ 
it follows that $a_{N-M}=0$, and similarly $a_k=0$ for any index $k$ that can be related to index $2M$ 
by one of the equations (\ref{hgliiI}), (\ref{hgliiiII}) or (\ref{hglipIII}), while there must be other index values 
with non-zero coefficients, since 
 Theorem \ref{dimthm}  guarantees that a nondegenerate Poisson bracket of the form (\ref{ubrackets}) exists. 
As already mentioned, it turns out that this bracket is unique up to an overall scalar, but the number and location of 
the vanishing coefficients depends on $N$ and $M$ in a complicated way, as does the choice of $\pm$ signs 
on the non-vanishing $a_k$ (if we fix one of them to be 1, say). 

The most efficient way that we have found to calculate the $a_k$ is to write down a tableau of indices 
consisting of four columns, starting with the integers $r=1, \ldots , (N+M-1)/2$ in ascending order (smallest at the top), 
with 
the next column being  index values $k$ ordered according to the rule 
\beq\label{krule} 
k=2Mr \bmod (N+M), 
\eeq  
while the third column consists of the values $r=(N+M-1)/2,\ldots, N+M-1$ 
 in descending order (largest at the top), with the adjacent  
fourth column being  the corresponding values of $k$ given by the rule (\ref{krule}).    
We can then use one of the relations ${\bf I}-{\bf III}$ above to connect index values $k$ in one row 
of the tableau to indices in the same row and/or the next one.

\begin{table*}
\centering
\caption{Index tableau for $N=17$, $M=9$ (odd $k$ only)} 
\begin{tabular}{l|c||  c |l } \hline
&  $k$ &   $k$  & \\  \hline  \hline 
${\bf II}^{-1}$ & 13 & 13 &   \\ 
 & 5 & 21 &  ${\bf III}^{-1}$  \\ 
 & 23 & 3  & ${\bf II}$ \\ 
& 15 & 11 &   \\  \hline  
 & 7 & 19 & ${\bf III}^{-1}$ \\
& (25) & 1 &  ${\bf II}$  \\ 
&17 & 9 &   \\ 
\hline
\hline\end{tabular}
\end{table*}  
 
For illustration, in Table 1 we present the index tableau for the case $N=16$, $M=9$. The index value $N+M-1=24$ 
is included in brackets: this corresponds to a gap in the tableau, since there is no coefficient with this index 
(only the values $k=1,\ldots, 23$ are relevant). 
Observe that index values of $k$ in the same row have coefficients $a_k$ that 
are related to one another by a change of sign, due 
to relation  ${\bf I}$, given by (\ref{hgliiI}). We have added extra columns on the left and right, to indicate 
where an index $k$ in the nearest column is related to the value immediately below it by transformation ${\bf II}$ (sending 
$k\to k+7$ in this case), its inverse   ${\bf II}^{-1}$ (sending 
$k\to k-7$), or by transformation ${\bf III}$ (sending 
$k\to k+18$ here). Note further that  every row contains one of these transformations either on the left or on the right, 
apart from row 5 and the final row; thus a line is inserted under row 5 to signify that it is not related to the row beneath it. 
The tableau implicitly contains 11 ``horizontal'' relations of type ${\bf I}$, relating the values of $k$ on left and right in the same row, this being the number of rows minus one: there is no horizontal relation  in row 7 due to the gap $(24)$. More 
apparent are the ``vertical relations'' in the tableau: there are 7 relations of type   ${\bf II}$ or ${\bf II}^{-1}$, and 3 relations of type ${\bf III}$. Then $11+7+3=21$ leaves one missing relation, namely the fact  that, 
from (\ref{hglipIII}) with $k=0$, the coefficient $a_{18}=0$. All of the index values above the horizontal line in the middle 
are related to the index $k=2M=18$, so have vanishing coefficients, while the non-vanishing coefficients correspond 
to the index values below this line. Hence, if we fix a choice of scale by setting $a_1=1$, then all the plus signs 
are in the lower left part of the tableau, and the minus signs are on the right, so the coefficients in this case 
are obtained as 
$$ 
(a_1,a_2,\ldots, a_{23}) = (1, -1, 0,0,1,-1,0,1,-1,0,0,1,-1,0,0,1,-1,0,1,-1,0,0,1).
$$ 

\begin{theorem}\label{ubrthm} For coprime positive integers $N>M$ with 
$N+M$ odd,  the U-systems (\ref{1o}) and (\ref{2o}) preserve the same 
 log-canonical Poisson bracket (\ref{ubrackets}), which is unique up to overall multiplication by a scalar.  
\end{theorem}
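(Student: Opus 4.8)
The plan is to reduce the theorem to the analysis already in place plus one combinatorial fact. What is still open is that the homogeneous linear system (\ref{hgl1})--(\ref{hgl3}) for the $N+M-2$ unknowns $a_1,\dots,a_{N+M-2}$ has solution space of dimension exactly one. Existence of a nonzero solution is not in question: Theorem \ref{dimthm}, together with the fact that the U-system bracket is the reduction of the log-canonical presymplectic form (\ref{pres}) along $\pi$, already produces a nondegenerate log-canonical bracket of the form (\ref{ubrackets}); and, as noted above, bracketing $u_j$ against the left- and right-hand sides of (\ref{1o}), and likewise of (\ref{2o}), yields one and the same system (\ref{hgl1})--(\ref{hgl3}), so (\ref{1o}) and (\ref{2o}) can preserve only a single common bracket. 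It therefore suffices to prove that this solution space is at most one-dimensional, equivalently that the $N+M-3$ relations $\mathbf{I}$, $\mathbf{II}$, $\mathbf{III}$ of (\ref{hgliiI})--(\ref{hglipIII}) are linearly independent.

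I would encode those relations as a signed graph $\Gamma$ on the vertex set $\{0,1,\dots,N+M-1\}$: a negative edge $k \sim N+M-k$ for each instance of $\mathbf{I}$ (recording $a_k=-a_{N+M-k}$), a positive edge $k \sim k+(N-M)$ for each instance of $\mathbf{II}$, a positive edge $k \sim k+2M$ for each instance of $\mathbf{III}$, and the standing relation $a_0=0$ coming from antisymmetry. A solution of the system is precisely a function on the vertices that vanishes at $0$ and is constant up to the recorded signs along each edge; it must therefore vanish on the whole connected component of $0$ (which contains $2M$, by the $k=0$ instance of $\mathbf{III}$), and on any other component it is pinned down up to one scalar \emph{provided} that component is balanced, i.e.\ carries no cycle with an odd number of negative edges. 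So the theorem follows from two claims: (a) $\Gamma$ has exactly two connected components, that of $0$ and one other; and (b) the other component is balanced. Claim (b) is automatic, since if it failed the only solution would be identically zero, contradicting the existence just recalled.

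The content is claim (a), and here $\gcd(N,M)=1$ and $N+M$ odd enter: they give $\gcd(2M,N+M)=1$, so $r\mapsto k=2Mr \bmod (N+M)$ of (\ref{krule}) is a bijection of $\mathbb{Z}/(N+M)$. Rewriting indices through it turns $\mathbf{III}$ into the unit shift $r\mapsto r+1$, turns $\mathbf{II}$ (using $N-M\equiv -2M$) into $r\mapsto r-1$, and turns $\mathbf{I}$ into the reflection $r\mapsto -r$; the vertices thus organise into a two-rail ladder, the rails being the two halves $1\le r\le(N+M-1)/2$ and $(N+M+1)/2\le r\le N+M-1$, with the $\mathbf{I}$-edges as rungs joining $r$ on one rail to $N+M-r$ on the other, and with shift edges (from the validity ranges of $\mathbf{II}$ and $\mathbf{III}$) running along each rail. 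The task is to show that, after the reductions to the independent list $\mathbf{I}$--$\mathbf{III}$, these shift edges leave exactly one interior gap on each rail and that the two gaps occur at the same rung, so that the ladder breaks into exactly two pieces — the piece containing $r=1$ (i.e.\ $k=2M$) being the component of $0$. This is precisely the bookkeeping carried out in the index tableau displayed above for $N=16$, $M=9$: each row records an $\mathbf{I}$-pair, each entry in the flanking columns records a shift edge of $\mathbf{II}$ or $\mathbf{III}$ to the next row, the bracketed entry marks the phantom index $N+M-1$ (for which no $a_k$ exists, so that one rung is absent), and the single interior horizontal rule is the aligned pair of gaps separating the zero-component from its complement.

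I expect claim (a) to be the main obstacle: locating the gaps on the two rails, and the phantom index $N+M-1$, is sensitive to the residues of $N$ and $M$ and must be carried out uniformly rather than example by example, and one must verify that the reduced (independent) list $\mathbf{I}$--$\mathbf{III}$ already makes the complement of the zero-component connected; the small cases $M\le2$, where $\mathbf{II}$ contributes no edges, are best checked directly. Once (a) is established, (b) is free, so (\ref{hgliiI})--(\ref{hglipIII}) has one-dimensional solution space; since this space contains the nonzero solution space of (\ref{hgl1})--(\ref{hgl3}), the two coincide. This is the stated uniqueness of the bracket (\ref{ubrackets}) up to an overall scalar, and the equality of the brackets preserved by the two U-systems has already been recorded.
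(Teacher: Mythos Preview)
Your approach is essentially the paper's own: the signed-graph ladder is exactly the index tableau reparametrised via (\ref{krule}), and your claim (a) is precisely what the paper's short proof dispatches by counting --- $(N+M-3)/2$ type-$\mathbf{I}$ rungs together with $(M-2)+(N-M-1)/2=(N+M-5)/2$ vertical steps among $(N+M-1)/2$ rows forces exactly one interior row with no downward link, so the tableau (equivalently, your ladder) breaks into two pieces. Your observation that claim (b) is automatic from the existence of a nonzero solution supplied by Theorem \ref{dimthm} is a clean shortcut that the paper leaves implicit.
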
 
\begin{proof} The proof follows the same pattern as the example in Table 1. 
The  index tableau contains $(N+M-3)/2$ horizontal relations of type ${\bf I}$, together with  
$M-2$ vertical relations of type ${\bf II}/{\bf II}^{-1}$ and $(N-M-1)/2$ vertical relations of type  ${\bf III}$; 
this makes a total of $(N+M-5)/2$ vertical relations. 
Hence, of the $(N+M-1)/2$ rows in the tableau, in addition to the last row, there is one row somewhere in the middle that is missing a vertical relation, so is not related to the rows beneath it. Hence the indices $k$ in that row and all the 
rows above it, including $k=2M$,  are related to one another and   have coefficients $a_{k}=0$, while the  
relations between the indices in the remaining rows  underneath uniquely determine the non-zero coefficients, up to an overall constant. 
\end{proof} 

\begin{table*}
\centering
\caption{Index tableau for $N=17$, $M=11$ (odd $k$ only)} 
\begin{tabular}{l|c||  c |l } \hline
&  $k$ &   $k$  & \\  \hline  \hline 
 & 25 & 3 & ${\bf II}$  \\ 
 & 19 & 9 &    \\ \hline 
${\bf II}^{-1}$ & 13 & 15  &  \\ 
${\bf II}^{-1}$ & 7 & 21 &   \\    
${\bf III}$ & 1 & (27) &  \\
& 23 & 5 &  ${\bf II}$  \\ 
&17 & 11 &   \\ 
\hline
\hline\end{tabular}
\end{table*}  

The situation for $N+M$ even is slightly more complicated, and the tableau method requires a few modifications 
in that case. %The $B$ matrices in Proposition \ref{brankeven} still determine a 
%specific log-canonical bracket for the  corresponding U-systems (\ref{1e}) and (\ref{2e}), up to an overall constant multiple.
The  analysis of the analogues of the 
homogeneous linear relations (\ref{hgl1}), (\ref{hgl2}) and (\ref{hgl3}) for the $a_k$ is  slightly trickier. The 
conditions for even indices $k$ decouple from odd $k$, yet the conditions can still be simplified to obtain relations 
of types ${\bf I}$,   ${\bf II}$ and ${\bf III}$, given by the same equations (\ref{hgliiI}),  (\ref{hgliiiII}) and   (\ref{hglipIII}), 
respectively, but with slightly different ranges of indices, namely 
\beq\label{eventypes} 
{\bf I}: \,\, k=3,\ldots,(N+M)/2,\,\, {\bf II}: \,\, k=0,\ldots,M-3,  \,\,{\bf III}: \,\,k=1,\ldots,(N-M)/2 -1,
\eeq 
together with one additional relation: 
\beq\label{extra} 
 a_{(N-M)/2}=0. 
\eeq 
This gives a total of $N+M-4$ homogeneous linear equations for the coefficients $a_k$ with indices $k\in[1,N+M-3]$.  
There are slight differences in the analysis according to whether $(N+M)/2$ is odd or even, so 
for illustration we describe one example of each before discussing the general case.

In Table 2 we present the example $N=17$, $M=9$,  where (for reasons that will become clear) we have given the tableau for odd indices $k$ only, with the column labels $r$ omitted. In this example,  
$(N+M)/2=13$ is odd. Due to the relation ${\bf I}$ with $k=(N+M)/2=13$, 
it follows that $a_{13}=0$, so we start the first row with this value of $k$ appearing twice. In the first column we obtain 
each entry from the one above by subtracting $N-M=8$ and evaluating modulo $N+M$, that is, $\bmod 26$, 
while in descending the second column we add $8$ instead of subtracting and evaluate $\bmod 26$ until all 12 
of the odd indices $1,3,\ldots, 23$ have been entered; and there is a gap in the tableau corresponding to one spurious index $(25)$. Then as before we write ${\bf II}$, ${\bf III}$ or there inverses next to a column if this transformation relates an  entry to the one below it. In this case we find that the indices fourth row are unrelated to the entries below, so there 
is a line under this row. There are 6 horizontal relations in the tableau (type ${\bf I}$) and 5 vertical relations 
(written as type ${\bf II}/{\bf III}$ or their inverses), giving a total of 11 relations for the odd index 
coefficients.  Because  $a_{13}=0$, it follows that all the indices in the first four rows correspond to vanishing 
coefficients, while the entries in the last three rows below the line correspond to the non-vanishing coefficients. Upon fixing $a_1=1$, this determines the coefficients with odd indices as 
$$
(a_1,a_3,\ldots, a_{23})= (1,0,0-1,1,0,0,0,-1,1,0,0). 
$$
For the even indices, note that there are 11 coefficients $a_2,a_4,\ldots,a_{22}$, while the total number of relations 
remaining is also $22-11=11$. Hence the homogeneous linear system for the even index coefficients has only the zero 
solution, 
$$ 
a_{2j}=0, \qquad j=1,\ldots , 11, 
$$ 
and there is no need to present the even indices in a tableau. 
 
In contrast, Table 3 is the odd index tableau for the example  $N=17$, $M=11$, in which case 
$(N+M)/2=14$ is even. This means that the index $(N-M)/2=3$ is now odd, so we place this index in the top 
right entry and descend in steps of $N-M=6$, adding in the right column, subtracting in the left column, and 
evaluating modulo $N+M$, i.e.\ $\bmod 28$. For the 13 odd indices, there are 6 horizontal relations, and 5 vertical relations, 
plus the extra relation (\ref{extra}), with no vertical relation in the second row. Hence the first two rows 
correspond to vanishing coefficients, and the remaining rows contain the indices of non-vanishing ones. Fixing 
$a_1=1$ as before, this determines the odd index coefficients as 
$$   
(a_1,a_3,\ldots, a_{25}) = (1,0,-1,1,0,-1,1,-1,1,0,-1,1,0). 
$$
There are  $24-12=12$ remaining homogeneous 
relations for 12 even indices, so again the even index coefficients are all zero.  

\begin{theorem}\label{ubrthm2} For coprime positive integers $N>M$ with 
$N+M$ even,  the U-systems (\ref{1e}) and (\ref{2e}) preserve the same 
 log-canonical Poisson bracket (\ref{ubrackets}), which is unique up to overall multiplication by a scalar.  
\end{theorem}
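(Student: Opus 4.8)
The plan is to mirror the proof of Theorem \ref{ubrthm}, replacing the odd-case tableau by the modified tableau described for the even case around Tables 2 and 3. First I would take the Poisson bracket of $u_j$, for $j$ in a suitable range, with both sides of the U-system recurrence (\ref{1e}), substitute the log-canonical ansatz (\ref{ubrackets}) together with $a_k=-a_{-k}$, and read off a homogeneous linear system for the coefficients $a_k$ with $k\in[1,N+M-3]$. Since the products in (\ref{1e}) involve only indices of one parity at a time, the equations for even $k$ and for odd $k$ decouple. Forming pairwise differences and linear combinations, exactly as in the passage from (\ref{hgl1})--(\ref{hgl3}) to ${\bf I}$--${\bf III}$ in the odd case, I would reduce everything to the three families (\ref{hgliiI}), (\ref{hgliiiII}), (\ref{hglipIII}) with the index ranges (\ref{eventypes}), together with the single extra relation (\ref{extra}), giving $N+M-4$ relations in all. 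The same computation applied to (\ref{2e}) yields the identical linear system, so the two U-systems in the even case are forced to preserve the same bracket, and it remains only to solve the system.

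To solve it, I would arrange the odd-index relations into the two-column index tableau obtained by starting from a distinguished index and descending in steps of $\pm(N-M)$ modulo $N+M$, as in Tables 2 and 3. The essential point is that, since $N$ and $M$ are odd and coprime, $\gcd\!\big((N-M)/2,(N+M)/2\big)=1$, so this step acts as a single $(N+M)/2$-cycle on the odd residues modulo $N+M$; hence one pass through the two columns lists every odd index in $[1,N+M-3]$ together with exactly one spurious gap index. The choice of starting entry splits into the two subcases according to the parity of $(N+M)/2$: when it is odd the top row carries $k=(N+M)/2$ twice, which vanishes by ${\bf I}$; when it is even the top-right entry is $k=(N-M)/2$, which vanishes by (\ref{extra}); either way there is exactly one distinguished odd index known a priori to have zero coefficient. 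One then checks that the horizontal type-${\bf I}$ relations link the two entries in each row (except the row containing the gap), and that consecutive rows are joined by one vertical relation of type ${\bf II}$, ${\bf III}$ or an inverse, with precisely one vertical link missing besides the one at the bottom. A relation count then shows that the vanishing of the distinguished index propagates upward through the tableau, forcing all odd coefficients above the break to be zero, while the relations among the rows below the break determine the remaining odd coefficients uniquely up to a common scalar.

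Finally, the even-index coefficients are disposed of by counting: after removing the relations consumed by the odd-index tableau, the number of homogeneous relations left for the even indices equals the number of even indices in $[1,N+M-3]$, so the even-index system has only the trivial solution and all $a_{2j}=0$. Thus the log-canonical bracket of the form (\ref{ubrackets}) is unique up to an overall scalar; existence of a nondegenerate such bracket is supplied by Theorem \ref{dimthm} (using Proposition \ref{brankeven}), and by uniqueness this must be the one obtained by reducing the invariant presymplectic form, hence is nondegenerate, which proves the theorem. I expect the main obstacle to be exactly the uniform bookkeeping for the two parities of $(N+M)/2$: verifying in each subcase that the step-$(N-M)$ orbit meets the index sets defining ${\bf II}$ and ${\bf III}$ so as to leave precisely one vertical link absent, so that the tableau really does partition the odd indices into a vanishing block and a block of size matching the rank from Proposition \ref{brankeven}. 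This is where coprimality of $N$ and $M$ is used in an essential way.
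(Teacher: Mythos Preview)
Your proposal is correct and follows essentially the same approach as the paper's proof: derive the relations of types ${\bf I}$--${\bf III}$ with the ranges (\ref{eventypes}) together with (\ref{extra}), use the odd-index tableau (splitting on the parity of $(N+M)/2$ as in Tables 2 and 3) to determine the odd coefficients up to a scalar, and dispose of the even-index coefficients by a dimension count. Your write-up is in fact more detailed than the paper's terse version, particularly in making explicit the role of $\gcd\!\big((N-M)/2,(N+M)/2\big)=1$ in ensuring the step-$(N-M)$ orbit is a single cycle on the odd residues, which the paper leaves implicit in the examples.
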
 
\begin{proof} Using (\ref{eventypes}) and  (\ref{extra}), 
the  precise counting of the odd/even index relations is slightly different according to whether  $(N+M)/2$ is odd or even, 
but in both cases there are $(N+M)/2-2$ independent relations for odd indices, and the same number for even indices. Since 
there are  $(N+M)/2-2$ even index coefficients, the corresponding homogeneous linear system has only the zero solution. 
The tableau  for the odd indices has one row in the middle with no vertical relation,  with the indices of 
vanishing coefficients 
in this row and above, and indices for non-vanishing coefficients below, 
giving the unique solution for the $(N+M)/2-1$ odd index coefficients (up to an overall constant).  
\end{proof} 

\section{Proof of Lemma \ref{lemmacoefmap1}} \label{appendix1}

%\begin{proof}
We set $v_{N+M}=v_0+\alpha(\frac{1}{v_{N}}-\frac{1}{v_{M}}))$, where $v_m$, $0 \leq m \leq N+M-1$, are given by 
\eqref{vfromu1}. Assume that 
$$\{v_0,v_{N+M-m} \}_u = f(v_0,v_1 \dots, v_{N+M-1}),$$
where $f$ is defined by \eqref{recP1}. Since the map $\phi_u^1$ is Poisson, 
$$\{v_m,v_{N+M} \}_u = f(v_m,v_{m+1}, \dots, v_{m+N+M-1})=:F(v_0,v_1 \dots, v_{N+M-1}),$$
where $v_{N+M},v_{N+M+1}, \dots,  v_{m+N+M-1}$ have been written with respect to $v_0, \dots, v_{N+M-1}$ 
using the recurrence \eqref{RedkdV}. 
Equivalently,   
\begin{equation} \label{lemeqbr}
\{v_m,v_0 \}_u -\frac{\alpha}{v_{N}^2} \{ v_m,v_N \}_u+\frac{\alpha}{v_{M}^2}\{v_m,v_M \}_u=F(v_0,v_1, \dots, v_{N+M-1}).
\end{equation} 
We will use this equation to prove \eqref{cmeq}. 

For $\alpha=0$,  $f(v_0,v_1, \dots, v_{N+M-1})=c_{N+M-m} v_0 v_{N+M-m}$. So, 
$$ F(v_0, \dots, v_{N+M-1}):= f(v_m, \dots, v_{m+N+M-1})=c_{N+M-m} v_m v_{N+M}=c_{N+M-m} v_0 v_m.$$ 
Also, $\{v_0,v_m \}_u=c_m v_0 v_m$, and from \eqref{lemeqbr} 
we derive 
\begin{equation} \label{cnm1}
c_m=-c_{N+M-m}, \ 0<m<N+M. 
\end{equation}

For $0<m<M$, we have $N<N+M-m<N+M$, $0<M-m<M$ and $N-M<N-m<N$. From \eqref{recP1} we obtain  
 $$f(v_0,v_1, \dots, v_{N+M-1})=c_{N+M-m} v_0 v_{N+M-m}+c_{N+M-m} \alpha \frac{v_0}{v_{M-m}}+c_{M-m} \alpha \frac{v_0}{v_{M-m}}.$$
 So, 
 \begin{eqnarray*}
F(v_0, \dots, v_{N+M-1}) &:=& f(v_m, \dots, v_{m+N+M-1}) \\ &=&c_{N+M-m} v_m v_{N+M}+c_{N+M-m} \alpha \frac{v_m}{v_{M}}+c_{M-m} \alpha \frac{v_m}{v_{M}} \\ 
 &=& c_{N+M-m} v_0 v_m +c_{N+M-m} \alpha \frac{v_m}{v_N}+c_{M-m} \alpha \frac{v_m}{v_M} 
\end{eqnarray*}
and $\{v_0,v_m \}_u=c_m v_0 v_m$, $\{v_m,v_N \}_u=c_{N-m} v_m v_N$, $\{v_m,v_M \}_u=c_{M-m} v_m v_M$. Hence, from \eqref{lemeqbr}  
we get 
\begin{equation} \label{cnm2}
c_m=-c_{N+M-m}=c_{N-m}, \ 0<m<M. 
\end{equation}

For $M<m<N$, we have $M<N+M-m<N$, $0<m-M<N-M$ and $0<N-m<N-m$. Now, from \eqref{recP1} we obtain  
 $$f(v_0,v_1, \dots, v_{N+M-1})=c_{N+M-m} v_0 v_{N+M-m}.$$ So, 
 \begin{eqnarray*}
F(v_0, \dots, v_{N+M-1})=c_{N+M-m} v_m v_{N+M}  
 = c_{N+M-m} v_0 v_m +c_{N+M-m} \alpha \frac{v_m}{v_N}-c_{N+M-m} \alpha \frac{v_m}{v_M}. 
\end{eqnarray*}
Also, $\{v_0,v_m \}_u=c_m v_0 v_m$, $\{v_m,v_N \}_u=c_{N-m} v_m v_N$, $\{v_m,v_M \}_u=-c_{m-M} v_m v_M$ and from \eqref{lemeqbr}  
we derive 
\begin{equation} \label{cnm3}
c_m=-c_{N+M-m}=-c_{m-M}=c_{N-m}, \ M<m<N. 
\end{equation}

Finally, for $N<m<N+M$, we have $0<N+M-m<M$, $N-M<m-M<N$ and $0<m-N<M$. Here as well,  
 $f(v_0,v_1, \dots, v_{N+M-1})=c_{N+M-m} v_0 v_{N+M-m}$ and  
 \begin{eqnarray*}
F(v_0, \dots, v_{N+M-1})=c_{N+M-m} v_0 v_m +c_{N+M-m} \alpha \frac{v_m}{v_N}-c_{N+M-m} \alpha \frac{v_m}{v_M}. 
\end{eqnarray*}
Now, 
\begin{eqnarray*}
\{v_0,v_m \}_u &=& c_m v_0 v_m+ c_m \alpha \frac{v_0}{v_{m-N}}+\frac{\alpha}{v_{m-N}^2} \{ v_0,v_{m-N} \}_u \\
&=&
c_m v_0 v_m+ c_m \alpha \frac{v_0}{v_{m-N}}+c_{m-N} \alpha \frac{v_0}{v_{m-N}},
\end{eqnarray*}
$\{v_m,v_N \}_u=-c_{m-N} v_m v_N$ and $\{v_m,v_M \}_u=-c_{m-M} v_m v_M$.  So, equation \eqref{lemeqbr}  
implies 
\begin{equation} \label{cnm4}
c_m=-c_{N+M-m}=-c_{m-M}=-c_{m-N}, \ N<m<N+M. 
\end{equation}

Now, taking into account that $c_{-m}=-c_m$, for $0<m \leq N+M-1$, we can combine (\ref{cnm1}--\ref{cnm4}) 
to find %in 
\begin{equation*}
c_m=-c_{N+M-m}=-c_{m-N}=-c_{m-M}, \ \ 0<m \leq N+M-1, 
\end{equation*}
while from  \eqref{cnm1} and \eqref{cnm4}, for $0<m<M$, we obtain $c_{M-m}=-c_{N+m}=c_{N+m-N}=c_m$.
\qed
%\end{proof}

\section{Proof of Lemma \ref{lemmacoefmap2}} \label{appendix2}
Following the proof of lemma \ref{lemmacoefmap1} 
we consider 
$$\{v_0,v_{N+M-m} \}_u = g(v_0,v_1 \dots, v_{N+M-1}),$$
where here $g$ is defined by \eqref{recP2} (the function g here must not be confused with the g-variables introduced in section \ref{sectiongvar}). Since the map $\phi_u^2$ is Poisson, 
$$\{v_m,v_{N+M} \}_u = g(v_m,v_{m+1}, \dots, v_{m+N+M-1})=:G(v_0,v_1 \dots, v_{N+M-1}),$$
where $v_{N+M},v_{N+M+1}, \dots,  v_{m+N+M-1}$ are evaluated with respect to $v_0, \dots, v_{N+M-1}$ 
using the recurrence \eqref{RedkdV}. 
So,   
\begin{equation} \label{lemeqbr2}
\{v_m,v_0 \}_u -\frac{\alpha}{v_{N}^2} \{ v_m,v_N \}_u+\frac{\alpha}{v_{M}^2}\{v_m,v_M \}_u=G(v_0,v_1, \dots, v_{N+M-1}).
\end{equation} 

First, by considering $\alpha=0$ we can show immediately, as in the case of Lemma \ref{lemmacoefmap1},  
that 
\begin{equation} \label{dnm2}
d_m=-d_{N+M-m}, \ 0<m<N+M. 
\end{equation}

Now, for $N-M<m \leq N$, we have $M \leq N+M-m <2M$ and $0 \leq N-m <M$. 
In this case $$g(v_0,v_1 \dots, v_{N+M-1})= d_{N+M-m} v_0 v_{N+M-m}-d_{N+M-m} \alpha \frac{v_0}{v_{N-m}}-d_{N-m} \alpha \frac{v_0}{v_{N-m}}.$$ 
So, 
\begin{eqnarray*}
G(v_0,v_1, \dots, v_{N+M-1}) &=& g(v_m,v_{m+1}, \dots, v_{m+N+M-1}) \\ 
&=& d_{N+M-m} v_m v_{N+M}-d_{N+M-m} \alpha \frac{v_m}{v_{N}}-d_{N-m} \alpha \frac{v_m}{v_{N}} \\
&=& d_{N+M-m} v_0 v_m-d_{N+M-m} \alpha \frac{v_m}{v_M}-d_{N-m} \alpha \frac{v_m}{v_N}
\end{eqnarray*}
and $\{v_m,v_N \}_u=d_{N-m} v_m v_N$.   
Hence, from \eqref{lemeqbr2} and \eqref{cnm2} we derive
\begin{equation} \label{case2A}
-\{v_0,v_m \}_u+ \frac{\alpha}{v_M^2}\{v_m,v_M \}_u=-d_m v_0 v_m+d_m \alpha \frac{v_m}{v_M}.
\end{equation}
If $0<m<M$, equation \eqref{case2A} is equivalent to 
$$-d_m v_0 v_m+ d_{M-m}  \alpha \frac{v_m}{v_M}=-d_m v_0 v_m+d_m  \alpha \frac{v_m}{v_M}.$$
So, $d_m=d_{M-m}$. \\ 
If $M \leq m \leq N$, equation \eqref{case2A} is equivalent to 
\begin{equation} \label{case2B}
  -\frac{\alpha}{v_M^2}\{v_M,v_m \}_u+  \frac{\alpha}{v_{m-M}^2}\{v_0,v_{m-M} \}_u= d_m  \alpha \frac{v_m}{v_M}-d_m  \alpha \frac{v_0}{v_{m-M}}.
\end{equation}
But, $$\{v_0,v_{m-M} \}_u= d_{m-M} v_0 v_{m-M}+K(v_0,v_{m-2M}, v_{m-3M}, \dots v_{m-kM}),$$ for some $k \geq 2$,  
where the function $K$ is determined by the recurrence \eqref{recP2} . Subsequently, 
$\{v_M,v_m \}_u=d_{m-M} v_M v_{m}+K(v_M,v_{m-M}, v_{m-2M}, \dots v_{m-(k-1)M})$. Substituting these to 
\eqref{case2B}, we derive $d_m=-d_{m-M}$. 
Finally, taking into account that $d_{-m} = -d_m$ we have 
$$ d_m=-d_{m-M},  \ \text{for} \  N-M<m \leq N.$$
Equivalently, if we set  $m=N+M-l$, by the last equation and \eqref{dnm2} we get 
\begin{equation} \label{secinterv}
d_l=-d_{N+M-l}=d_{N-l}=-d_{l+M}, \ M \leq l <2M. 
\end{equation}

Furthermore, for $N<m<N+M$, we have $0<N+M-m<M$  and $0<m-N<M$. 
Here, 
$g(v_0,v_1 \dots, v_{N+M-1})= d_{N+M-m} v_0 v_{N+M-m}$,
\begin{eqnarray*}
G(v_0,v_1, \dots, v_{N+M-1}) = d_{N+M-m} v_0 v_m-d_{N+M-m} \alpha \frac{v_m}{v_M}+d_{N+M-m} \alpha \frac{v_m}{v_N} 
\end{eqnarray*}
and in a similar way, from \eqref{lemeqbr2} we derive
\begin{equation*} \label{lastint}
d_m=-d_{m-N}=-d_{m-M}, \  N<m<N+M 
\end{equation*} 
or by setting $m=N+M-l$, 
\begin{equation} \label{secinterv2}
d_l=-d_{N+M-l}=d_{M-l}=d_{N-l}=-d_{l+M}, \ \text{for} \ 0 < l <M. 
\end{equation}

To sum up, from \eqref{secinterv} and \eqref{secinterv2} we have shown that 
\begin{equation} \label{final02M}
d_m=-d_{m+M}=-d_{N+M-l}, \ \text{for} \ 0 < m < 2M. 
\end{equation}

Finally, we notice that for $M \leq m <N+M$
\begin{eqnarray} \label{dmd2M}
d_m= 
- \sum \limits_{i=0}^{N-1} \sum \limits_{j=m-M}^{m-1} a_{j-i}
=- \sum \limits_{i=0}^{N-1} \sum \limits_{j=m+M}^{m+2M-1} a_{j-i}=d_{m+2M},  
\end{eqnarray}
since, by (\ref{hglipIII} ),  the coefficients $a_m$ of the Poisson bracket \eqref{ubrackets} satisfy 
the equation $a_k=a_{k+2M}$ (extended to negative indices via $a_{-k}=-a_k$).  

From \eqref{final02M} and \eqref{dmd2M}, we conclude that 
$$d_m=-d_{N+M-m}=-d_{m-M}$$
for $0 < m <N+M$, which completes the proof. 
\qed

\end{document}